\newcommand{\bqq}{\begin{equation}}
\newcommand{\eqq}{\end{equation}}
\newcommand{\bqs}{\begin{equation*}}
\newcommand{\eqs}{\end{equation*}}
\newcommand{\D}{\mathbb{D}}
\newcommand{\Z}{\mathbb{Z}}
\newcommand{\R}{\mathbb{R}} 
\newcommand{\C}{\mathbb{C}}
\newcommand{\N}{\mathbb{N}}
\newcommand{\A}{\mathcal{A}}
\newcommand{\g}{\mathcal{G}}
\newcommand{\cH}{\mathcal{H}}
\newcommand{\Q}{\mathcal{Q}}
\newcommand{\Oct}{\mathcal{O}}
\newcommand{\mP}{\mathcal{P}}
\newcommand{\mL}{\mathbf{L}}
\newcommand{\mT}{\mathbf{T}}
\newcommand{\mC}{\mathbf{C}}
\newcommand{\mD}{\mathbf{D}}
\newcommand{\cR}{\mathcal{R}}
\newcommand{\SU}{\textbf{SU}(1,1)}
\newcommand{\U}{\textbf{U}(1,1)}
\newcommand{\atanh}{\mathrm{atanh} \,}
\numberwithin{equation}{section}
\newtheorem{rmk}{Remark}[section]
\newtheorem{thm}{Theorem}[section]
\newtheorem{defi}{Definition}[section]
\newtheorem{lem}{Lemma}[section]
\newtheorem{prop}{Proposition}[section]
\newcommand{\etal}{\textit{et al.}\ }
\title{Pattern formation for the Swift-Hohenberg equation on the hyperbolic plane}
\author[1]{Pascal Chossat}
\author[2]{Gr\'egory Faye}
\affil[1]{\small Department of Mathematics, University of Nice Sophia-Antipolis, JAD Laboratory and CNRS, Parc Valrose, 06108 Nice Cedex 02, France}
\affil[2]{\small University of Minnesota, School of Mathematics,  206 Church Street S.E.,  Minneapolis, MN 55455}
\begin{document}
\maketitle
\centerline{\em Dedicated to Klaus Kirchg\"assner}
\bigskip

\begin{abstract}
In this paper we present an overview of pattern formation analysis for an analogue of the Swift-Hohenberg equation posed on the real hyperbolic space of dimension two, which we identify with the Poincar\'e disc $\D$. Different types of patterns are considered: spatially periodic stationary solutions, radial solutions and traveling waves, however there are significant differences in the results with the Euclidean case. We apply equivariant bifurcation theory to the study of spatially periodic solutions on a given lattice of $\D$ also called H-planforms in reference with the "planforms" introduced for pattern formation in Euclidean space. We consider in details the case of the regular octagonal lattice and give a complete descriptions of all H-planforms bifurcating in this case. For radial solutions (in geodesic polar coordinates), we present a result of existence for stationary localized radial solutions, which we have adapted from techniques on the Euclidean plane. Finally, we show that unlike the Euclidean case, the Swift-Hohenberg equation in the hyperbolic plane undergoes a Hopf bifurcation to traveling waves which are invariant along horocycles of $\D$ and periodic in the "transverse" direction. We highlight our theoretical results with a selection of numerical simulations.
\end{abstract}

{\noindent \bf Keywords:} Swift-Hohenberg equation; Pattern formation; Poincar\'e disk; Equivariant bifurcation; Traveling wave.\\

\section{Introduction}\label{sec1}

The goal of this paper is to lay the foundations of pattern formation theory for PDE's which are defined in an hyperbolic space and more specifically in the hyperbolic plane, which we identify from now on with the Poincar\'e disc $\D$. The origin of our interest in this question comes from a model for the visual perception of textures by the cortex of mammals, in which it is assumed that textures are represented by the "structure tensor", a $2\times 2$ symmetric, positive definite matrix, and the neurons in the visual cortex region named V1 are sensitive to the values of the structure tensor. It can be shown that there is no loss of generality to assume that the structure tensor has determinant 1, and it turns out that the space of such structure tensors is isomorphic to the hyperbolic plane. Spontaneous activity in the cortex then leads to a bifurcation problem with pattern formation. Several papers have been published on this topic \cite{chossat-faugeras:09,faye-chossat-etal:11,chossat-faye-etal:11,faye-chossat:11,faye-chossat:12}. The model equations are integro-differential. However under certain conditions they can be converted to PDE's which look like the Swift-Hohenberg equation in $\R^2$, except that the Laplacian is replaced by the Laplace-Beltrami operator $\Delta_\D$ in $\D$ \cite{faye:12}. We shall therefore concentrate on this equation now defined in $\D$, which we take of the general form
\bqq
\label{eq:sh}
u_t=-(\alpha^2+\Delta_\D)^2u+\lambda u+\mathcal{N}(u), \quad z\in\D,
\eqq  
where 
\bqs
\mathcal{N}(u)=\nu u^2- \eta u^3
\eqs
and $\lambda$, $\alpha$, $\nu$ and $\eta$ are real coefficients. There is another reason to consider (\ref{eq:sh}). The Swift-Hohenberg equation was initially derived to model pattern formation in the B\'enard problem, that is the onset of convection for a fluid flow set between two horizontal and extended plates with a positive temperature gradient across the shell \cite{swift-hohenberg:77}. This equation turns out to be the "simplest" PDE that contains the ingredients for steady state pattern formation from an homogeneous state: symmetries (Euclidean invariance), existence of a critical non zero wave number, nonlinearity. Most basic phenomena that are associated with pattern formation in physical and chemical/biological systems are captured by this equation. 

We may similarly consider the Swift-Hohenberg equation in hyperbolic space (or plane) as a model equation for pattern and wave formation in problems related to quantum chaos  \cite{balazs-voros:86, aurich-steiner:89,schmit:91,cornish-turok:98} and cosmological theories \cite{inoue:99,cornish-spergel:99,lehoucq-weeks-etal:02}. Note that more recently, some aspects of dispersive and concentration phenomena have been studied for evolution equations such as the Nonlinear Schr\"odinger equation and the wave equation posed on hyperbolic space \cite{banica:07,anker-pierfelice:09,anker-pierfelice-etal:12}. The methods presented here should be transposable to these types of equations.

Our aim is to "translate" to the hyperbolic case the methods which have been successful in the Euclidean case. As we said above, these methods are not restricted to the Swift-Hohenberg equation and can be applied to many other systems. However this equation is sufficiently simple and in a sense, "universal", to serve as a guideline for our study.

In order to facilitate the reading of the paper we first recall below the main ideas, which lie the analysis of pattern formation in Euclidean space.

The mathematical modeling of pattern formation in morphogenesis originates in the seminal work of Alan Turing on "The Chemical Basis of Morphogenesis" \cite{turing:52} (1952). The model is based on systems of reaction-diffusion equations. A theory of pattern formation in physical systems was developed at about the same time by Lev Landau who derived and analyzed with V. L. Ginzburg an equation for phase transition in supraconductivity, which turned out to be adequate in many other problems involving phase transitions \cite{landau-ginzburg:50}. Another root of pattern formation theory can be found in the work of G.I. Taylor on stability of Couette flow (primary flow of a fluid filling a cylindrical shell with rotating inner cylinder) \cite{taylor:23} (1923). There the flow is governed by Navier-Stokes equations, as is the B\'enard problem of onset of convection. In the framework of pattern formation all these problems share similar properties with the Swift -Hohenberg equation, which we may formally write
\bqq \label{eq:basique}
\frac{du}{dt} = F(u,\lambda)
\eqq
where $u=u(x,t)\in\R$ is a scalar field, $x\in\R^n$ (in general $n=1$, $2$ or $3$), and $F$ is a smooth operator defined in some Banach space. The coefficients $\alpha$, $\nu$ and $\eta$ are fixed. This equation is {\em invariant} under isometric transformations of $U$: let us define $T_gu(x,t)=u(g^{-1}\cdot x,t)$ where $g$ is any isometry in $\R^n$. Then $F(T_g u,\lambda)=T_gF(u,\lambda)$ for all $g\in \mathbf{E}(n,\R)$ (Euclidean group), $x\in\R^n$ and $t\in\R$. 
The linear stability analysis of the trivial, homogeneous, state $u=0$ is done by solving the linearized equation
\bqs
u_t=-(\alpha^2+\Delta_{\R^n})^2u+\lambda u
\eqs
for disturbances of the form $u(x,t)=e^{\sigma t}(A\cos{({\bf k}\cdot x)}+B\sin{({\bf k}\cdot x)})$ as $\lambda$ is varied. The vectors ${\bf k}\in\R^n$ are called {\em wave vectors}. This leads to a {\em dispersion relation} 
\bqs
\sigma = -(-\|{\bf k}\|^2+\alpha^2)^2+\lambda .
\eqs
When $\lambda<0$ all such modes are exponentially damped to 0 as $t\rightarrow+\infty$\footnote{It can be shown that 0 is globally attracting against periodic perturbations for the Swift-Hohenberg equation thanks to its variational structure.}. At the critical value $\lambda=0$ the eigenvalues $\sigma$ fill the half-line $(-\infty,~0]$. For the critical eigenvalue $0$ the "neutral" modes are $A\cos{({\bf k}\cdot x)}+B\sin{({\bf k}\cdot x)}$ where ${\bf k}$ is any vector with $\|{\bf k}\|=\alpha$. This value $\alpha$ is the {\em critical wave number}. When $\lambda >0$ the critical modes become unstable and we may expect the bifurcation of solutions which are not homogeneous but instead have a wavy structure in space. However a {\em continuum} of Fourier modes become simultaneously unstable. Moreover, to the critical wave number there corresponds a full circle of critical wave vectors with that length. These facts forbid us from directly applying standard reduction techniques such as Lyapunov-Schmidt decomposition \cite{chossat-lauterbach:00} and center manifold theorem \cite{haragus-iooss:10}, to the bifurcation analysis of this instability. Indeed, a crucial hypothesis in these methods is that the neutral (or center) part of the spectrum of the critical linearized operator can be separated from the rest of the spectrum and moreover consists of a finite number of eigenvalues with finite multiplicity. We therefore need to be more specific in the type of patterns we are looking for.     

There are basically two main cases which have been thoroughly investigated and which lead to elegant as well as relevant solutions from the observational point of view: (i) {\em periodic patterns} and (ii)  {\em radially symmetric (localized) patterns}. Other types of patterns have also been studied by various authors, such as quasi periodic patterns (see \cite{iooss-rucklidge:10} for recent advances in this case), spirals, defects in periodic patterns and others. All of these are important from physical point of view. However we concentrate in this paper on the fundamental patterns (i) and (ii) .

Let us briefly recall the basic ideas which lie behind the two main approaches.
\begin{itemize}
\item[(i)] {\bf Periodic patterns}. In the 60's and early 70's several authors, among whom Klaus Kirchg\"assner played a prominent role, established a rigorous nonlinear theory for the bifurcation of cellular solutions in the Couette-Taylor and B\'enard problem, see \cite{kirchgassner-kielhofer:73} for a general exposition. The idea was to restrict the problem to spatially periodic solutions in order to obtain a finite dimensional bifurcation equation by applying Lyapunov-Schmidt decomposition for Navier-Stokes equations. Later these ideas have been generalized and geometrically formalized (see \cite{chossat-lauterbach:00} for a bibliography).  On the circle of critical wave numbers we can select any two non colinear vectors ${\bf k}_1$ and ${\bf k}_2$, and span a {\em periodic lattice} ${\cal L}^*=\{n_1{\bf k}_1+n_2{\bf k}_2~,n_1,n_2\in\Z\}$. Any harmonic function $\exp{(i{\bf k}\cdot x)}$ with ${\bf k}\in {\cal L}^*$, is biperiodic in the plane with respect to translations $p_1{\bf e}_1+p_2{\bf e}_2$, where ${\bf e}_j\in\R^2$ are such that ${\bf e}_i\cdot {\bf k}_j=\delta_{ij}$ and $p_1$, $p_2$ are integers. The set of these translations forms a {\em lattice group} ${\cal L}$ (${\cal L}^*$ is the dual lattice of ${\cal L}$). Suppose now we look for the bifurcation of solutions which are invariant under the action of ${\cal L}$. These solutions are therefore spatially periodic. This implies that the equation can be restricted to the class of ${\cal L}$ periodic functions and the symmetry group of translations $\R^2$ has now to be taken "modulo translations" in ${\cal L}$, in other words the translation group of symmetries of the equation becomes $\R^2/\Z^2\simeq \mathbb{T}^2$ (the 2-torus). Moreover within this class, the spectrum of the linearized operator is discrete with finite multiplicity eigenvalues. In particular the critical eigenmodes, which corresponds to those harmonic functions in ${\cal L}$ with $\|{\bf k}\|=\alpha$, are in finite number. This allows to apply the center manifold reduction theorem and get a bifurcation equation in $\R^d$ where $d$ is the number of wave vectors in ${\cal L}^*$ with critical length. There are three possible cases:
\begin{enumerate}
\item If ${\bf k}_1\bot{\bf k}_2$, ${\cal L}$ is a square lattice. Then $d=4$ (critical wave numbers $\pm{\bf k}_1$, $\pm{\bf k}_2$). ${\cal L}$ is invariant under rotation of angle $\pi/2$ as well as under reflections through the axis along ${\bf e}_1$ for example. This spans the symmetry group of the square ${\bf D}_4$. The full symmetry group of the equation in this class of functions is the semi-direct product ${\bf D}_4\ltimes \mathbb{T}^2$. 
\item If ${\bf k}_1$ and ${\bf k}_2$ belong to the vertices of a regular hexagon, ${\cal L}$ is a hexagonal lattice. Then $d=6$. ${\cal L}$ is invariant under the symmetries of the hexagon, which form the 12 element group ${\bf D}_6$. The full symmetry group of the equation in this class of functions is the semi-direct product ${\bf D}_6\ltimes \mathbb{T}^2$. 
\item In all other cases the lattice is rhombic (rectangular). This however implies that there exist in ${\cal L}^*$ vectors with length smaller than $k_c$. Such a vector can be chosen as a basis element for the lattice. However perturbations with the corresponding wave function are damped and should not be observable in general (certain conditions may allow for these modes to coexist as critical modes, we do not enter in these cases here).
\end{enumerate}
Finally, the symmetries of the resulting, finite dimensional system can be exploited to classify the bifurcated branches by their isotropy (residual symmetry) and to determine their stability. Details can be found in a number of references, see \cite{golubitsky-stewart-etal:88}, \cite{chossat-lauterbach:00} for the general context of equivariant bifurcation theory, and \cite{hoyle:06} for a thorough exposition of the mathematical theory of pattern formation. 

It turns out that in hyperbolic geometry the same procedure applies but leads to very different results. The reason is that there are infinitely many types of lattices in $\D$. Given an integer $n\geq 3$ one can always build an $n$-gon in $\D$ which tiles the entire hyperbolic plane, while in contrast there are only 4 types of fundamental tiles in $\R^2$ (oblique, rhombic, square and hexagonal). Moreover if $\Gamma$ is a lattice group in $\D$ (we give a precise definition in Section\ref{sec3}), then the compact surface $\D/\Gamma$ is a torus, the genus of which can take any value $p\geq 2$ (depending on $\Gamma$). This torus admits a finite group of automorphisms (again in contrast to the Euclidean case where the symmetry group of the torus has the form ${\bf D}_k\rtimes \mathbb{T}^2$ with $k=2$ (rhombic), $4$ (square) or $6$ (hexagonal). This will be exposed in Section~\ref{sec3}. We have applied the method to the simplest case of a regular octagonal lattice. All possible bifurcations are described and it is shown that non trivial dynamics can also bifurcate generically in one case. 

\item[(ii)] {\bf Localized states and radially symmetric patterns}. Experiments in pattern formation can also produce localized solutions: fronts, bumps, spirals, to name a few. The method exposed above for analyzing the bifurcation of periodic patterns does not work and another approach must be undertaken. The fundamental idea was introduced in the early 80's by Klaus Kirchg\"assner \cite{kirchgassner:88} in the context of solitary water waves (Korteweg-de Vries equation) and further developed by G. Iooss and others (see \cite{Dias-Iooss}). Suppose for a moment that the domain is one dimensional: $x\in\R$ and one is looking for steady states: $F(u,\lambda)=0$.
This equation is an ODE in the space variable $x$: $0 = -(\alpha^2+\partial ^2_x)^2u + \lambda u + \nu u^2 -\eta u^3$. Bounded solutions of this ODE will produce admissible patterns. This equation can be written as a 4 dimensional evolution equation
\bqs
\frac{d}{dx}\left(\begin{array}{c} U \\ V \\ W \\ Z \end{array}\right) = 
\left(
\begin{array}{cccc}
0  & 1  & 0 & 0 \\
 0 & 0  & 1 & 0 \\
 0 & 0  & 0 & 1 \\
 \lambda-\alpha^2 & 0 &-2\alpha^2 & 0   
\end{array}
\right)
 \left(\begin{array}{c} U \\ V \\ W \\ Z \end{array} \right) + \left(\begin{array}{c} 0 \\ 0 \\ 0 \\ \nu U^2-\eta U^3 \end{array} \right)
\eqs
where $U=u$, $V = \partial _x u$, $W = \partial ^2_x u$ and $Z = \partial ^3_x u$.
The linear part has a pair of purely imaginary, double eigenvalues $\pm i\alpha$ at $\lambda=0$. Another essential property of this system is "time" reversibility: changing $x$ to $-x$, $V$ to $-V$ and $Z$ to $-Z$ does not change the equations. This bifurcation problem corresponds to a reversible Hopf bifurcation with $1:1$ resonance, which has been studied in details for example in \cite{iooss-peroueme:93,iooss-adelmeyer:98}. Applying reversible normal form theory (see \cite{iooss-peroueme:93,iooss-adelmeyer:98,haragus-iooss:10}) one recovers the bifurcation of $x$ periodic patterns, but also one can show depending on the values of coefficients $\nu$ and $\eta$ the bifurcation of homoclinic orbits to the trivial state or to periodic orbits, which correspond to "bumps" decreasing to 0 or approaching asymptotically periodic patterns at $x\rightarrow\pm\infty$. 

Extending this approach to multidimensional problems is technically involved. A rigorous theory was handled in the 90's by Scheel \cite{scheel:98,scheel:03} for the bifurcation of radially symmetric solutions of reaction-diffusion equations. His idea was again to consider the system as an evolution problem, but now with respect to the radial variable $r$. This leads to a singular problem which can be tackled by applying successive changes of variables and a suitable center manifold theory. A detailed analytical and numerical investigation has also been presented in \cite{burke-knobloch:06,burke-knobloch:07,burke-knobloch:07c,lloyd-sandstede-etal:08,lloyd-sandstede:09,avitabile-lloyd-etal:10,mccalla-sandstede:10,mccalla:11} for the Swift-Hohenberg equation. We shall detail the procedure in the context of hyperbolic geometry in Section 4 and show that in some respect, the situation is simpler in hyperbolic than in Euclidean plane.
\end{itemize}

A final remark in this introduction is that pattern formation has been intensively studied in Euclidean and spherical geometry, see \cite{dionne-golubitsky:92,dionne-silber-etal:97,chossat-lauterbach-etal:90}, and the present study completes the landscape by adding hyperbolic geometry to the classification. We believe this gives an intrinsic interest to this study. 

In the next section we expose basic notions of the geometry and harmonic analysis in the Poincar\'e disc, in particular the Fourier Helgason transform which is the equivalent to Fourier transform in $\R^2$.

\section{Geometry and harmonic analysis in the hyperbolic plane}\label{sec2}

\subsection{Isometries of the Poincar\'e disk}\label{subsec21}
The hyperbolic plane is a Riemannian space of dimension 2 with constant curvature $-1$. It admits several representations among which the most suitable for our purpose is the "Poincar\'e disc" $\D=\{z\in\C~|~|z|<1\}$ equipped with the distance
\bqq 
\label{eq:distance}
d_\D(z,z')=2\atanh \left( \frac{|z-z'|}{|1-z\bar z'|}\right).
\eqq
Geodesics in $\D$ are carried by circles which intersect the boundary $\partial\D$ orthogonally. The corresponding measure element is given by
\bqq
\label{eq:measure_element}
\text{dm}(z)=\frac{4dzd\bar z}{(1-|z|^2)^2}.
\eqq

We now describe the isometries of $\D$, i.e the transformations that preserve the distance $d_\D$. We refer to the classical textbook in hyperbolic geometry for details, e.g, \cite{katok:92}. The direct isometries (preserving the orientation) in $\D$ are the elements of the special unitary group, noted $\SU$, of $2\times 2$ Hermitian matrices with determinant equal to $1$. This is a 3-parameter real simple Lie group. Given:
\bqq
\label{eq:element_su}
\gamma =\left(\begin{array}{ll}
 \alpha & \beta \\
 \bar\beta & \bar \alpha
\end{array}
 \right) \text{ such that } |\alpha|^2-|\beta|^2=1, 
\eqq
an element of $\SU$, the corresponding isometry $\gamma$ in $\D$ is defined by:
\bqq
\gamma \cdot z=\frac{\alpha z+\beta}{\bar\beta z+\bar \alpha},\quad z\in\D.
\label{eq:corresping_isometry}
\eqq
Orientation reversing isometries of $\D$ are obtained by composing any transformation \eqref{eq:corresping_isometry} with the reflexion $\kappa:z\rightarrow \bar z$. The full symmetry group of the Poincar\'e disc is therefore:
\bqs
\U=\SU\cup\kappa\cdot \SU.
\eqs

Let us now describe the different kinds of direct isometries acting in $\D$. We first define the following one parameter subgroups of $\SU$:
\bqs
\left\{ \begin{array}{lll}
 K\overset{def}{=}\{\text{r}_{\phi}=\left( \begin{array}{ll}
 e^{i\frac{\phi}{2}} & 0\\
0 & e^{-i\frac{\phi}{2}}
\end{array}
\right),\phi\in \mathbb{S}^1\},\\
 A\overset{def}{=}\{a_\tau= \left( \begin{array}{ll}
 \cosh( \tau/2) & \sinh (\tau/2)\\
 \sinh (\tau/2) & \cosh (\tau/2)
\end{array}
\right),\tau \in\R\},\\
 N\overset{def}{=}\{n_s= \left( \begin{array}{ll}
 1+is & -is\\
 is & 1-is
\end{array}
\right),s\in\R\}.
\end{array}
\right.
\eqs

Note that $\text{r}_{\phi}\cdot z=e^{i\phi}z$ for $z\in\D$ and also $a_\tau\cdot 0=\tanh(\tau/2)$. The following decomposition holds (see \cite{iwaniec:02}).
\begin{thm}[Iwasawa decomposition]
\bqs
\SU=KAN.
\eqs
\end{thm}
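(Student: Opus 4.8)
The plan is to prove the decomposition by explicit construction, exhibiting the multiplication map $(\phi,\tau,s)\mapsto \mathrm{r}_\phi\,a_\tau\,n_s$ as a bijection from $\mathbb{S}^1\times\R\times\R$ onto $\SU$. First I would compute the product $\mathrm{r}_\phi a_\tau n_s$ entry by entry. Writing $c=\cosh(\tau/2)$ and $t=\sinh(\tau/2)$, a direct multiplication shows that the upper row of $\mathrm{r}_\phi a_\tau n_s$ is $\left(e^{i\phi/2}(c+i(c+t)s),\ e^{i\phi/2}(t-i(c+t)s)\right)$, the lower row being then fixed by the $\SU$ form of the matrix. The key algebraic simplification is that the sum of the two upper entries collapses to $e^{i\phi/2}(c+t)=e^{i\phi/2}e^{\tau/2}$, a single complex number that carries all of the information about both $\phi$ and $\tau$.

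To invert, take an arbitrary $\gamma\in\SU$ with entries $\alpha,\beta$ satisfying $|\alpha|^2-|\beta|^2=1$. The crucial preliminary observation is that $\alpha+\beta\neq0$: since $|\alpha|^2-|\beta|^2=1$ forces $|\alpha|>|\beta|$, the equality $\alpha=-\beta$ is impossible. Hence I may \emph{define} $\tau:=2\log|\alpha+\beta|$ and $\phi:=2\arg(\alpha+\beta)$ (the angle being fixed modulo $4\pi$, the period of $\phi\mapsto\mathrm{r}_\phi$, so that $\mathrm{r}_\phi$ itself is unambiguous), and then set $s:=e^{-\tau/2}\,\mathrm{Im}\!\left(e^{-i\phi/2}\alpha\right)$. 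By construction these choices reproduce the sum of the upper entries and the imaginary part of the upper-left entry; what remains is to check the real part.

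The main obstacle is precisely this last consistency check: I must show $\mathrm{Re}(e^{-i\phi/2}\alpha)=\cosh(\tau/2)$, for then the upper-left entry of $\mathrm{r}_\phi a_\tau n_s$ equals $\alpha$, the sum identity forces the upper-right entry to equal $\beta$, and the $\SU$ structure propagates the agreement to the lower row. This is the only place where the determinant relation $|\alpha|^2-|\beta|^2=1$ is used essentially. To carry it out I set $p=e^{-i\phi/2}\alpha$ and $q=e^{-i\phi/2}\beta$; by the choice of $\phi$ the sum $p+q=|\alpha+\beta|=e^{\tau/2}$ is real, so $\mathrm{Im}\,p=-\mathrm{Im}\,q$ and the imaginary parts cancel in $|p|^2-|q|^2=(\mathrm{Re}\,p+\mathrm{Re}\,q)(\mathrm{Re}\,p-\mathrm{Re}\,q)=1$. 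Combined with $\mathrm{Re}\,p+\mathrm{Re}\,q=e^{\tau/2}$ this gives $\mathrm{Re}\,p-\mathrm{Re}\,q=e^{-\tau/2}$, and adding the two relations yields $\mathrm{Re}\,p=\cosh(\tau/2)$, as required.

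Finally, uniqueness is automatic: $\tau$, $\phi$ and $s$ are each forced by explicit formulas in terms of $\alpha,\beta$, so the factorization is unique and the map above is a genuine bijection. As an alternative I could invoke the general structural argument — split the Lie algebra $\mathfrak{su}(1,1)=\mathfrak{k}\oplus\mathfrak{a}\oplus\mathfrak{n}$ and use the exponential map together with connectedness of $\SU$ — or argue geometrically that $AN$ acts simply transitively on $\D$ while $K$ is the stabilizer of the origin; but for this rank-one group the explicit computation is shorter and entirely self-contained.
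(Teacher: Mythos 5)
Your proof is correct, and it is genuinely different from what the paper does: the paper offers no proof at all, simply citing Iwaniec's book for the Iwasawa decomposition, whereas you give a complete, self-contained constructive argument tailored to this specific rank-one group. I checked the computation: with $c=\cosh(\tau/2)$, $t=\sinh(\tau/2)$ the product $\mathrm{r}_\phi a_\tau n_s$ does have upper row $\bigl(e^{i\phi/2}(c+i(c+t)s),\ e^{i\phi/2}(t-i(c+t)s)\bigr)$, the collapse of the sum of the upper entries to $e^{i\phi/2}e^{\tau/2}$ is the right observation to make the inversion canonical, and the key consistency step is sound: with $p=e^{-i\phi/2}\alpha$, $q=e^{-i\phi/2}\beta$, the reality of $p+q=e^{\tau/2}$ gives $(\mathrm{Im}\,p)^2=(\mathrm{Im}\,q)^2$, so $|\alpha|^2-|\beta|^2=1$ reduces to $(\mathrm{Re}\,p+\mathrm{Re}\,q)(\mathrm{Re}\,p-\mathrm{Re}\,q)=1$ and hence $\mathrm{Re}\,p=\cosh(\tau/2)$, exactly as you argue; the lower row then agrees automatically because $K$, $A$, $N$ all lie in $\SU$ and the group is closed under multiplication. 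Your remark that $\phi$ is only determined modulo $4\pi$ (the true period of $\phi\mapsto\mathrm{r}_\phi$) is apt and in fact more careful than the paper's parametrization ``$\phi\in\mathbb{S}^1$''; note also that the paper's description of $\SU$ as ``Hermitian matrices'' is a slip, and you correctly work with the form \eqref{eq:element_su} instead. What each approach buys: the citation route is appropriate for a survey and situates the result within the general Iwasawa theory for semisimple Lie groups (your sketched alternative via $\mathfrak{su}(1,1)=\mathfrak{k}\oplus\mathfrak{a}\oplus\mathfrak{n}$, or via simple transitivity of $AN$ on $\D$ with $K$ the stabilizer of the origin, is essentially that general argument), while your explicit computation yields strictly more in this case — uniqueness of the factorization and visibly smooth, explicit formulas for $\tau$, $\phi$, $s$ in terms of $\alpha$, $\beta$, which is precisely what one wants when passing to the geodesic polar and horocyclic coordinates used later in the paper.
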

The group $K$ is the orthogonal group $\mathbf{SO}(2)$ which fixes the center $O$ of $\D$. Its orbits are concentric circles. The orbits of $A$ converge to the same limit points of the unit circle $\partial\D$, $b_{\pm 1}=\pm 1$ when $\tau \rightarrow\pm \infty$. The elements of $A$ are sometimes called \textit{boosts} in the theoretical Physics literature \cite{balazs-voros:86}. They are circular arcs in $\D$ going through the points $b_1$ and $b_{-1}$. The orbits of $N$ are the circles inside $\D$ and tangent to the unit circle at $b_1$. These circles are called \textit{horocycles} with base point $b_1$. $N$ is called the horocyclic group. These orbits are shown in Figure~\ref{fig:orbits}.
\begin{figure}[htp]
 \centering
 \includegraphics[width=0.8\textwidth]{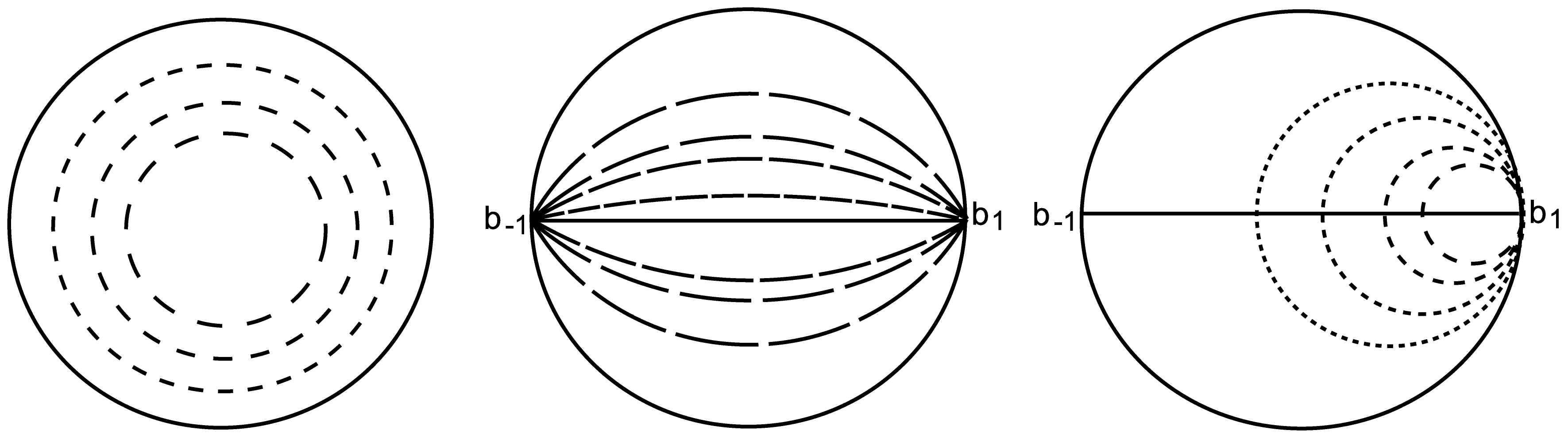}
 \caption{The orbits in the Poincar\'e disk of the three groups $K$, $A$ and $N$.}
 \label{fig:orbits}
\end{figure}

The Iwasawa decomposition allows us to decompose any isometry of $\D$ as the product of at most three elements in the groups, $K,A$ and $N$. Then, it is possible to express each point $z\in\D$ as an image of the origin $O$ by some combination of elements in $K$, $A$ and $N$. There are two important systems of coordinates: 
\begin{enumerate}
\item  the {\em geodesic polar coordinates}: $\text{r}_\phi a_\tau\cdot O=\tanh( \tau/2)e^{i\phi}$ with $\tau=d_\D(z,0)$,
\item the {\em horocyclic coordinates}: $z=n_s a_\tau\cdot O\in\D$, where $n_s$ are the transformations associated with the group $N$ ($s\in\R$) and $a_\tau$ the transformations associated with the subgroup $A$ ($\tau\in\R$).
\end{enumerate}

\subsection{Periodic lattices of the Poincar\'e disk}\label{subs22}

A \textit{Fuchsian group} is a discrete subgroup $\Gamma$ of $\SU$. We are going to be concerned with \textit{fundamental regions} of Fuchsian groups.

\begin{defi}
~To any Fuchsian group we can associate a fundamental region which is the closure, noted $F_\Gamma$, of an open set $\overset{o}{F}_\Gamma\subset \D$ with the following properties:  
\begin{itemize}
\item[(i)] if $\gamma\neq Id\in\Gamma$, then $\gamma\cdot  F_\Gamma\cap \overset{o}{F}_\Gamma = \emptyset$;
\item[(ii)] $\underset{\gamma\in\Gamma}{\bigcup}\, \gamma\cdot F_\Gamma =\D$.
\end{itemize}
The familly $\{ \gamma\cdot F_\Gamma ~|~ \gamma \in\Gamma\}$ is a tesselation of $\D$.
\end{defi}

Fundamental regions may be unnecessarily complicated, in particular they may not be connected. An alternative definition is that of a \textit{Dirichlet region} of a Fuchsian group.

\begin{defi}
~Let $\Gamma$ be a Fuchsian group and $z\in\D$ be not fixed by any element of $\Gamma\setminus Id$. We define the Dirichlet region for $\Gamma$ centered at $z$ to be the set:
\bqs
D_z(\Gamma)=\{z'\in\D ~|~ d_\D(z',z)\leq d_\D(z',\gamma\cdot z)~\forall \gamma \in\Gamma\}.
\eqs
\end{defi}

From \cite{katok:92}, we have the following theorem.
\begin{thm}
~If $z\in\D$ is not fixed by any element of $\Gamma\backslash Id$, then $D_z(\Gamma)$ is a connected fundamental region for $\Gamma$.
\end{thm}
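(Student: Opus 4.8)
The plan is to realize $D_z(\Gamma)$ as an intersection of hyperbolic half-planes, from which convexity, closedness and connectedness follow at once, and then to verify the two defining axioms of a fundamental region directly from the $\Gamma$-invariance of the distance.

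First, for each $\gamma\in\Gamma\setminus\{Id\}$ the point $\gamma\cdot z$ is distinct from $z$ (since $z$ is fixed by no nontrivial element), and I would invoke the standard fact from hyperbolic geometry that the equidistant locus $\{z'\in\D : d_\D(z',z)=d_\D(z',\gamma\cdot z)\}$ is a complete geodesic $L_\gamma$ — the perpendicular bisector of the segment joining $z$ to $\gamma\cdot z$ — separating $\D$ into two closed, geodesically convex half-planes. Setting
\[
H_\gamma=\{z'\in\D : d_\D(z',z)\le d_\D(z',\gamma\cdot z)\},
\]
one has $D_z(\Gamma)=\bigcap_{\gamma\neq Id}H_\gamma$. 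Each $H_\gamma$ is closed and convex, so $D_z(\Gamma)$ is a closed convex set; in particular any two of its points are joined by a geodesic segment lying inside it, which yields connectedness immediately. Moreover $d_\D(z,z)=0<d_\D(z,\gamma\cdot z)$ for every $\gamma\neq Id$, so $z$ lies in the open set $\overset{o}{F}_\Gamma:=\{z' : d_\D(z',z)<d_\D(z',\gamma\cdot z)\ \forall\gamma\neq Id\}=\bigcap_{\gamma\neq Id}\mathrm{int}(H_\gamma)$, which is therefore nonempty. For a closed convex set with nonempty interior the closure of the interior is the whole set (the geodesic from $z$ to any $w\in D_z(\Gamma)$ lies in $\overset{o}{F}_\Gamma$ up to its endpoint), so $D_z(\Gamma)=\overline{\overset{o}{F}_\Gamma}$ and $D_z(\Gamma)$ really is the closure of an open set, as the definition of a fundamental region demands.

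To verify axiom (i) I would argue by contradiction. If some $z'$ lay in $\overset{o}{F}_\Gamma\cap\gamma\cdot D_z(\Gamma)$ with $\gamma\neq Id$, then $\gamma^{-1}z'\in D_z(\Gamma)$; testing its defining inequality against the group element $\gamma^{-1}$ and using that $\Gamma$ acts by isometries gives $d_\D(z',\gamma\cdot z)\le d_\D(z',z)$, which contradicts the strict inequality $d_\D(z',z)<d_\D(z',\gamma\cdot z)$ satisfied by every point of $\overset{o}{F}_\Gamma$. Hence $\gamma\cdot D_z(\Gamma)\cap\overset{o}{F}_\Gamma=\emptyset$.

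For axiom (ii) — that the $\Gamma$-translates of $D_z(\Gamma)$ cover $\D$ — the crucial point, and the one I expect to be the main obstacle, is the attainment of a minimal orbit distance. Given $w\in\D$, I would consider $\inf_{\gamma\in\Gamma}d_\D(w,\gamma\cdot z)$; because $\Gamma$ is discrete its action on $\D$ is properly discontinuous, so the orbit $\Gamma\cdot z$ is locally finite and the infimum is a minimum, attained at some $\gamma_0$. Relabelling via isometries exactly as in the previous step then shows $\gamma_0^{-1}w\in D_z(\Gamma)$, i.e. $w\in\gamma_0\cdot D_z(\Gamma)$, so the translates cover $\D$. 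The delicate ingredient is precisely the passage from discreteness of $\Gamma$ to local finiteness of the orbit, which is what guarantees the minimum exists; once that is in hand, everything else reduces to bookkeeping with the invariance $d_\D(g\cdot a,g\cdot b)=d_\D(a,b)$.
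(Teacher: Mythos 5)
Your argument is correct, but it is worth noting that the paper itself offers no proof of this theorem at all: the statement is quoted directly from Katok's \emph{Fuchsian Groups} \cite{katok:92}, and what you have written is essentially a reconstruction of the standard proof from that reference --- the decomposition $D_z(\Gamma)=\bigcap_{\gamma\neq Id}H_\gamma$ into closed half-planes bounded by perpendicular bisectors, convexity of the intersection for connectedness, the isometry bookkeeping for axiom (i), and proper discontinuity for the covering axiom (ii). So there is no divergence of method to report; you have simply supplied the argument the paper outsources.

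One step in your write-up deserves more care. You declare that $z$ lies in ``the open set'' $\overset{o}{F}_\Gamma=\{z'\in\D : d_\D(z',z)<d_\D(z',\gamma\cdot z)\ \forall\gamma\neq Id\}=\bigcap_{\gamma\neq Id}\mathrm{int}(H_\gamma)$, but this is an intersection of \emph{infinitely} many open half-planes, and such an intersection need not be open. The fix is exactly the ingredient you correctly single out at the end for axiom (ii): proper discontinuity implies the orbit $\Gamma\cdot z$ meets every compact ball in finitely many points, so near any point of $\overset{o}{F}_\Gamma$ only finitely many of the strict inequalities are active (for the remaining $\gamma$ the distance $d_\D(\cdot,\gamma\cdot z)$ is uniformly large on a neighborhood), and openness follows. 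Since you explicitly flag local finiteness of the orbit as the delicate point of the whole proof, this is a presentational slip rather than a genuine gap --- but the connection between that ingredient and the openness of $\overset{o}{F}_\Gamma$ (and hence the legitimacy of writing $D_z(\Gamma)=\overline{\overset{o}{F}_\Gamma}$, which your convexity argument for the segment from $z$ to an arbitrary $w\in D_z(\Gamma)$ otherwise handles correctly) should be stated, not assumed.
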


Let $\Gamma$ be a Fuchsian group acting on $\D$ with $\mu(\D/\Gamma)<\infty$, and $F_\gamma$ be a fundamental region for this action. We write $\pi:\D\rightarrow \D/\Gamma$ the natural projection and the points of $\D/\Gamma$ are identified with the $\Gamma$-orbits. The restriction of $\pi$ to $F_\Gamma$ identifies the congruent points of $F_\Gamma$ that necessarily belong to its boundary $\partial F_\Gamma$, and makes $\D/\Gamma$ into an oriented surface. Its topological type is determined by the number of \textit{cusps} and by its \textit{genus}: the number of handles if we view the surface as a sphere with handles. By choosing $F_\Gamma$ to be Dirichlet region, we can find the topological type of $\D/\Gamma$ (in this case $\D/\Gamma$ is homeomorphic to $F_\Gamma/ \Gamma$, see \cite{katok:92}). Furthermore, if finite, the area of a fundamental region (with nice boundary) is a numerical invariant of the group $\Gamma$. Since the area of the quotient space $\D/\Gamma$ is induced by the hyperbolic area on $\D$, the hyperbolic area of $\D/\Gamma$, denoted by $\mu(\D/\Gamma)$, is well defined and equal to $\mu(F_\Gamma)$ for any fundamental region $F_\Gamma$. If $\Gamma$ has a compact Dirichlet region $F_\Gamma$, then $F_\Gamma$ has finitely many sides and the quotient space $\D/\Gamma$ is compact. If, in addition, $\Gamma$ acts on $\D$ without fixed points, $\D/\Gamma$ is a compact Riemann surface and its fundamental group is isomorphic to $\Gamma$.

\begin{defi}
 A Fuchsian group is called cocompact if the quotient space $\D/\Gamma$ is compact.
\end{defi}

When a Fuchsian group is cocompact, then it contains no parabolic elements and its area is finite \cite{katok:92}. Furthermore a fundamental region can always be built as a polygon. The following definition is just a translation to the hyperbolic plane of the definition of an Euclidean lattice.

\begin{defi}
A lattice group of $\D$ is a cocompact Fuchsian group which contains no elliptic element.
\end{defi}

The action of a lattice group has no fixed point, therefore the quotient surface $\D/\Gamma$ is a (compact) manifold and it is in fact a Riemann surface. A remarkable theorem states that any compact Riemann surface is isomorphic to a lattice fundamental domain of $\D$ if and only if it has genus $g\geq 2$ \cite{katok:92}. The case $g=1$ corresponds to lattices in the Euclidean plane (in this case there are three kinds of fundamental domains: rectangles, squares and hexagons). The simplest lattice in $\D$, with genus 2, is generated by an octagon and will be studied in detail in Section~\ref{sec3}. 

Given a lattice, we may ask what is the symmetry group of the fundamental domain $F_\Gamma$, identified with the quotient surface $\D/ \Gamma$. Indeed, this information will play a fundamental role in the subsequent bifurcation analysis. In the case of Euclidean lattice, we recall that the quotient $\R^2/\Gamma$ is a torus $\mathbb{T}^2$ (genus one surface), and the group of automorphisms is $\mathcal{H}\ltimes \mathbb{T}$ where $\mathcal{H}$ is the holohedry of the lattice: ${\mathcal{H}}=\mathbf{D}_2, \mathbf{D}_4$ or $\mathbf{D}_6$ for the rectangle, square and hexagonal lattices respectively. In the hyperbolic case the group of automorphisms of the surface is finite. In order to build this group we need first to introduce some additional definitions.

Tilings of the hyperbolic plane can be generated by reflections through the edges of a triangle $\mT$ with vertices $P$, $Q$, $R$ and angles $\pi/\ell$, $\pi/m$ and $\pi/n$ respectively, where $\ell,~m,~n$ are integers such that $1/\ell+1/m+1/n<1$ \cite{katok:92}. 

\begin{figure}[htp]
\centering \includegraphics[width=0.4\textwidth]{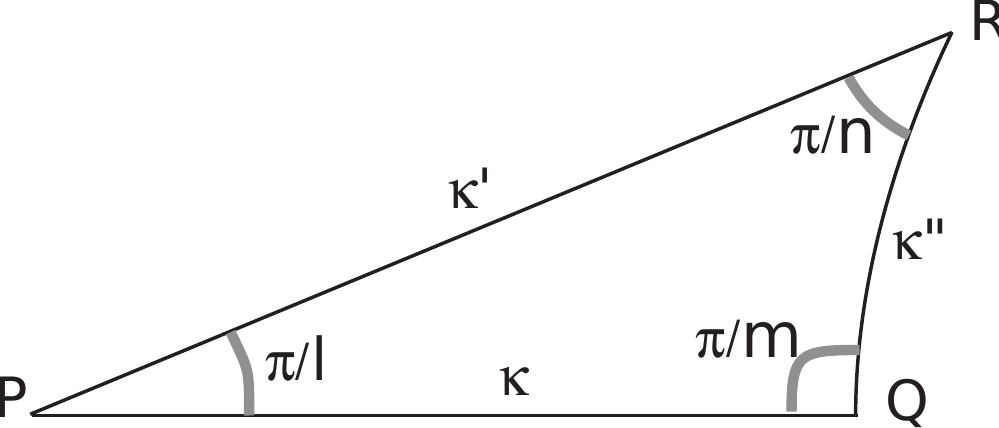}
\caption{{The triangle $\mT$. The values of $l$, $m$, and $n$ are $l=8$, $m=2$ and $n=3$. }}
\label{fig:triangle}
\end{figure}

Remember that reflections are orientation-reversing isometries. We note $\kappa$, $\kappa'$ and $\kappa''$ the reflections through the edges $PQ$, $QR$ and $RP$  respectively (Figure~\ref{fig:triangle}). The group generated by these reflections contains 
an index 2 Fuchsian subgroup $\Lambda$ called a triangle group, which always contains elliptic elements because the product of 
the reflections through two adjacent edges of a polygon is elliptic with fixed point at the corresponding vertex. One easily shows that $\Lambda$ is generated by the rotations of angles $2\pi/l$, $2\pi/m$ and $2\pi/n$ around the vertices $P$, $Q$, $R$ respectively. A fundamental domain of $\Lambda$ is the "quadrangle" $F_\Lambda =\tau\cup\kappa\tau$ \cite{katok:92}. Note that $F
_\Lambda\simeq \D/\Lambda$ is a sphere (genus 0 surface) obtained by identifying the three edges of $\mT$.
The subgroup of hyperbolic translations in $\Lambda$ is a lattice group $\Gamma$, normal in $\Lambda$, whose fundamental domain is filled with copies of the basic tile $\mT$. The group of orientation-preserving automorphisms of $F_\Gamma \simeq \D / \Gamma$ is therefore $G=\Lambda/\Gamma$. From the algebraic point of view, $G$ is generated by three elements $a,~b,~c$ satisfy the relations $a^\ell=b^m=c^n=1$ and $a\cdot b\cdot c=1$. We say that $G$ is an $(l,m,n)$ group.  Taking account of orientation-reversing isometries, the full symmetry group of $F_\Gamma$ is $\g=G\cup \kappa G=G\rtimes \Z_2(\kappa)$. This is also a tiling group of $F_\Gamma$ with tile $\mT$: the orbit $\g \cdot \mT$ fills $F_\Gamma$ and its elements can only intersect at their edges. 

Given a lattice, how to determine the groups $G$ and $\g$? The following theorem gives conditions for this, see \cite{hartshorne:77}. 
\begin{thm}
An $(l,m,n)$ group $G$ is the tiling rotation group of a compact Riemann surface of genus $g$ if and only if its order satisfies the Riemann-Hurwitz relation
$$ |G|=\frac{2g-2}{1 - (\frac{1}{\ell}+\frac{1}{m}+\frac{1}{n})}.$$
\end{thm}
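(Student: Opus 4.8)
The plan is to read the displayed identity as the Riemann--Hurwitz relation attached to the normal inclusion $\Gamma \triangleleft \Lambda$ and to derive it from the Gauss--Bonnet theorem, handling the ``only if'' and ``if'' directions separately. For the necessity I would compute the hyperbolic area of a fundamental domain of $\Gamma$ in two ways: once through the tiling by copies of the basic triangle $\mT$, and once through the Euler characteristic of the closed surface $\D/\Gamma$. I would start by recording the area of the basic tile: since $\D$ has constant curvature $-1$, the Gauss--Bonnet angle--defect formula gives for the geodesic triangle $\mT$ with angles $\pi/\ell$, $\pi/m$, $\pi/n$
\bqs
\mu(\mT) = \pi - \left(\frac{\pi}{\ell}+\frac{\pi}{m}+\frac{\pi}{n}\right) = \pi\left(1 - \frac{1}{\ell}-\frac{1}{m}-\frac{1}{n}\right),
\eqs
which is positive precisely because $1/\ell+1/m+1/n<1$.

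Next I would propagate this through the group inclusions. The fundamental domain of the triangle group $\Lambda$ is the quadrangle $F_\Lambda = \mT \cup \kappa\mT$ introduced above, so $\mu(F_\Lambda)=2\mu(\mT)$. Because $\Gamma$ is normal in $\Lambda$ with quotient $G=\Lambda/\Gamma$, the index $[\Lambda:\Gamma]$ equals $|G|$, and a fundamental domain $F_\Gamma$ is tiled by exactly $|G|$ copies of $F_\Lambda$; hence $\mu(F_\Gamma)=|G|\,\mu(F_\Lambda)=2\pi|G|\bigl(1-\tfrac{1}{\ell}-\tfrac{1}{m}-\tfrac{1}{n}\bigr)$. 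On the other hand, since $\Gamma$ is a lattice group it acts freely, so $\D/\Gamma$ is a closed orientable surface of genus $g$ with Euler characteristic $2-2g$; applying Gauss--Bonnet to the whole surface gives $\mu(\D/\Gamma)=-2\pi(2-2g)=2\pi(2g-2)$. Equating the two expressions for $\mu(F_\Gamma)=\mu(\D/\Gamma)$ and cancelling $2\pi$ yields $|G|\bigl(1-\tfrac{1}{\ell}-\tfrac{1}{m}-\tfrac{1}{n}\bigr)=2g-2$, which is the announced formula and settles the ``only if'' implication.

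For the converse I would argue by existence. Given $(\ell,m,n)$ with $1/\ell+1/m+1/n<1$, the reflection triangle and its orientation--preserving subgroup $\Lambda$ exist in $\D$, and $\Lambda$ carries the abstract presentation matching that of $G$. Sending the three vertex rotations generating $\Lambda$ to the generators $a,b,c$ of $G$ defines an epimorphism $\Lambda \twoheadrightarrow G$ whose kernel $\Gamma$ is normal of index $|G|$. Provided the images $a,b,c$ keep their \emph{exact} orders $\ell,m,n$ in $G$, every elliptic element of $\Lambda$ (being conjugate to a power of a vertex rotation) avoids $\Gamma$, so $\Gamma$ is torsion--free; then $\Gamma$ is a lattice group, $\D/\Gamma$ is a closed surface carrying $G=\Lambda/\Gamma$ as its tiling rotation group, and the area identity just established forces its genus to be the prescribed $g$.

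I expect the main obstacle to lie in the converse rather than in the area bookkeeping: one must verify that the natural map $\Lambda \to G$ preserves the orders of the torsion generators, so that $\Gamma$ is genuinely fixed--point free. This is exactly where the hypothesis ``$G$ is an $(\ell,m,n)$ group'' must be used in its strong form (orders equal to, not merely dividing, $\ell,m,n$), and where I would invoke the structural description of torsion in Fuchsian triangle groups from \cite{katok:92,hartshorne:77}.
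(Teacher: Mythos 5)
Your proof is correct, and it is worth noting that the paper itself gives no argument for this theorem at all: it is stated with a bare citation to \cite{hartshorne:77}, so you have supplied a proof where the paper only defers to the literature. Your necessity direction is the standard area-bookkeeping argument and is airtight: Gauss--Bonnet gives $\mu(\mT)=\pi\left(1-\frac{1}{\ell}-\frac{1}{m}-\frac{1}{n}\right)$, the index count $[\Lambda:\Gamma]=|G|$ gives $\mu(\D/\Gamma)=|G|\,\mu(F_\Lambda)=2|G|\,\mu(\mT)$, and Gauss--Bonnet on the closed orientable surface $\D/\Gamma$ (orientable since $\Gamma\subset\SU$) gives $\mu(\D/\Gamma)=2\pi(2g-2)$; as a sanity check, the paper's octagonal case $(\ell,m,n)=(8,2,3)$, $g=2$ yields $|G|=48$ and $96$ copies of $\mT$ tiling $\Oct$, exactly as asserted in Section 3.2. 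For the converse, von Dyck's theorem does give the epimorphism $\Lambda\twoheadrightarrow G$ since the hyperbolic triangle group realizes the presentation $\langle x,y,z \mid x^\ell=y^m=z^n=xyz=1\rangle$, and you have correctly isolated the one genuinely delicate point: the kernel $\Gamma$ is torsion-free precisely when the map is \emph{surface-kernel}, i.e. when $a,b,c$ have orders exactly $\ell,m,n$, because every finite-order element of $\Lambda$ is conjugate to a power of a vertex rotation. This is not pedantry but a needed clarification of the statement as written: the paper's definition of an $(\ell,m,n)$ group only requires the generators to \emph{satisfy} $a^\ell=b^m=c^n=abc=1$, which a priori allows strictly smaller orders, in which case $\Gamma$ would contain elliptic elements, $\D/\Gamma$ would be an orbifold rather than a surface, and the area count would not deliver the stated genus. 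An alternative packaging of your converse, presumably the one in \cite{hartshorne:77}, is via branched coverings: $\D/\Gamma\to\D/\Lambda$ is a degree-$|G|$ cover of the sphere branched over three points with branching orders $\ell,m,n$, and the classical Riemann--Hurwitz formula for that cover is literally the displayed identity; your hyperbolic-area version obtains the same identity more elementarily, without invoking covering-space theory.
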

Tables of triangle groups for surfaces of genus up to 13 can be found in \cite{broughton-dirks-etal:01}.

\subsection{Laplace-Beltrami operator}\label{subs23}

\subsubsection{Definition}

The Laplace-Beltrami operator in $\D$ is defined by the expression in cartesian coordinates $z=z_1+ i z_2\in \D$:
\begin{equation}\label{eq:laplace}
\Delta_\D = \frac{(1-z_1^2-z_2^2)^2}{4}\left( \frac{\partial^2}{\partial z_1^2}+\frac{\partial^2}{\partial z_2^2}\right). 
\end{equation}
It has the fundamental property of being equivariant under isometric transformations. More precisely, let us define the following transformation on functions in $\D$. We set
\bqs
T_gf(z) = f(g^{-1}\cdot z) \text{ for all } g\in\U,~z\in\D.
\eqs
Then $\Delta_\D T_gf = T_g\Delta_\D f$. The set of transformations $T_g$ defines a representation of the group $\U$ in the space of functions in $\D$.

\subsubsection{General eigenfunctions of the Laplace-Beltrami operator}

Let $b$ be a point on the circle $\partial \D$. For $z\in \D$, we define the "inner product" $\langle z,b \rangle$ as the algebraic distance to the origin of the (unique) horocycle based at $b$ and passing through $z$. This distance is defined as the hyperbolic signed length of the segment $O\xi$ where $\xi$ is the intersection point of the horocycle and the line (geodesic) $Ob$. This is illustrated in Figure~\ref{fig:horocycle} in the case $b=b_1=1$. Note that $\langle z,b \rangle$ does not depend on the position of $z$ on the horocycle. In other words, $\langle z,b \rangle$ is invariant under the action of the one-parameter group $N$. Each point $z$ of $\D$ can be written in horocyclic coordinates $z=n_s\cdot (a_\tau\cdot O)$. Then $\langle z,b_1 \rangle=\tau$. Note that $\tau$ is negative if $O$ is inside the horocycle and positive otherwise.
\begin{figure}[htp]
 \centering
 \includegraphics[width=0.8\textwidth]{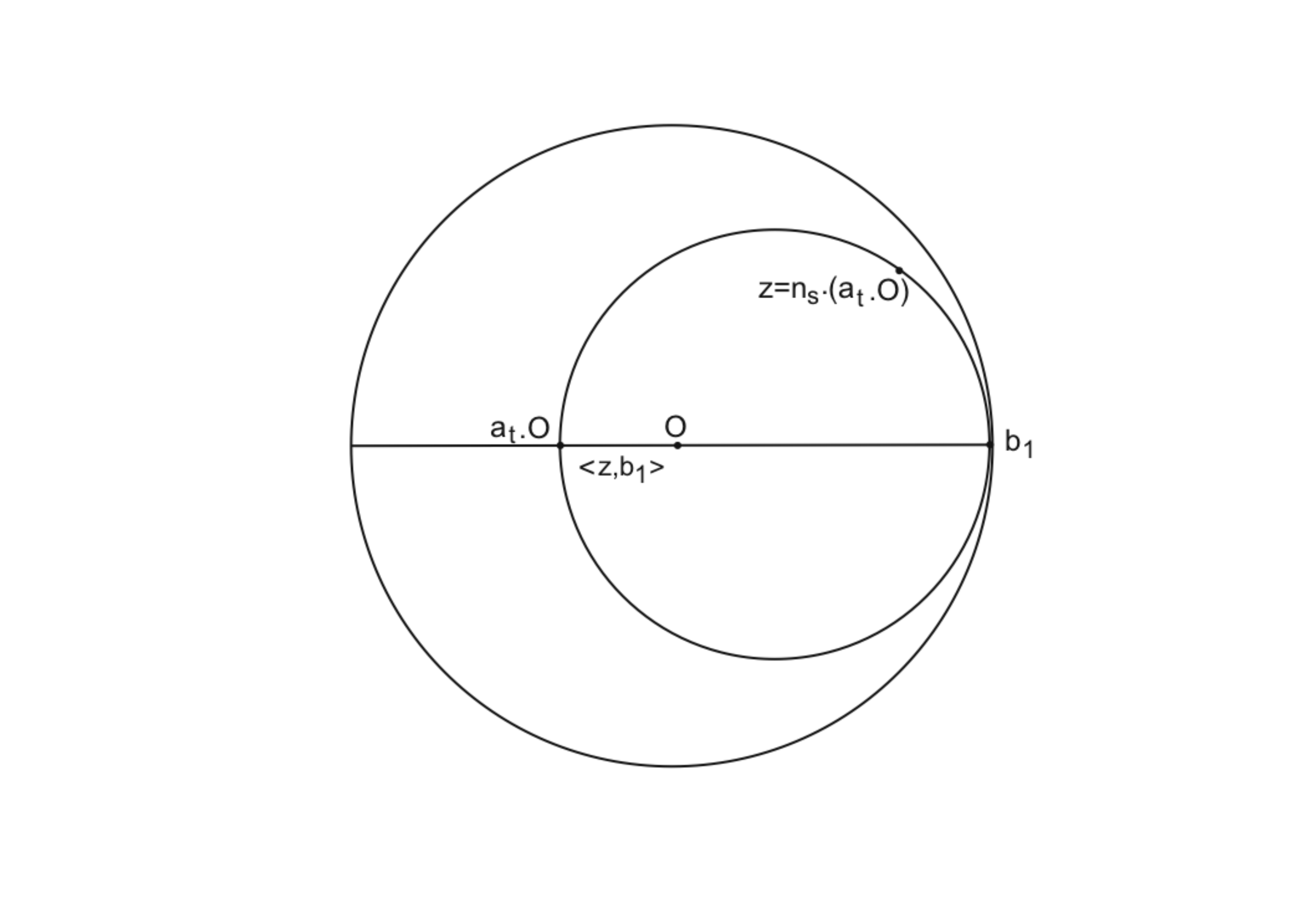}
 \caption{The construction of the "signed inner product" $\langle z,b \rangle$ with $b=1$.}
 \label{fig:horocycle}
\end{figure}

In analogy to the Euclidean plane waves, we define the "hyperbolic plane waves" as the function 
\bqq
\label{eq:hyp_plane_wave}
e_{\rho,b}(z)=e^{\left(i\rho+\frac{1}{2}\right)\langle z,b \rangle}, \quad \rho\in\C.
\eqq
\begin{lem}\label{lem:spectre_laplace_beltrami}
For the Laplace-Beltrami operator defined in equation \eqref{eq:laplace}, we have
\bqq
\label{eq:spectre_laplace_beltrami}
\Delta_\D ~e_{\rho,b}(z)=-\left(\rho^2+\frac{1}{4}\right)e_{\rho,b}(z), \quad \forall~(\rho,b,z)\in\C\times\partial \D \times \D.
\eqq
\end{lem}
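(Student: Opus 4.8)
The plan is to use the isometry-equivariance of $\Delta_\D$ to reduce everything to a single base point on $\partial\D$, and then to evaluate the operator in horocyclic coordinates, where the hyperbolic plane wave depends on one real variable only.

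First I would reduce to the case $b = b_1 = 1$. The signed inner product $\langle z, b\rangle$ is defined geometrically as a signed horocyclic length, so it is invariant under the direct isometries: $\langle g\cdot z, g\cdot b\rangle = \langle z, b\rangle$ for every $g\in\SU$. Since the rotation subgroup $K$ acts transitively on the boundary circle, given any $b$ I can pick $g\in\SU$ with $g\cdot b_1 = b$, and then $\langle z,b\rangle = \langle g^{-1}\cdot z, b_1\rangle$, i.e. $e_{\rho,b} = T_g\, e_{\rho,b_1}$. Because $\Delta_\D T_g = T_g\Delta_\D$ and $T_g$ is linear, the eigenvalue relation at $b_1$ is carried to every $b$ with the same scalar eigenvalue. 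Thus it suffices to prove \eqref{eq:spectre_laplace_beltrami} for $b = b_1$.

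Next, using horocyclic coordinates $z = n_s a_\tau\cdot O$, for which $\langle z, b_1\rangle = \tau$, the plane wave becomes $e_{\rho,b_1}(z) = e^{(i\rho + \frac12)\tau}$, a function of $\tau$ alone. I would then compute the hyperbolic metric in the coordinates $(s,\tau)$: from the parametrization one finds, up to an irrelevant rescaling of $s$, the form $d\tau^2 + e^{-2\tau}\,ds^2$, the exponential factor reflecting the fact that the $A$-flow dilates horocyclic arc-length while $\tau$ is itself arc-length along the geodesic towards $b_1$. Feeding this metric into $\Delta_\D = \frac{1}{\sqrt{\det g}}\,\partial_i\!\big(\sqrt{\det g}\,g^{ij}\partial_j\big)$ yields
\bqs
\Delta_\D = \partial_\tau^2 - \partial_\tau + e^{2\tau}\,\partial_s^2 .
\eqs
Applying this to the $s$-independent function $e^{(i\rho+\frac12)\tau}$ kills the last term, and
\bqs
\Delta_\D\, e_{\rho,b_1} = \Big[(i\rho+\tfrac12)^2-(i\rho+\tfrac12)\Big]e_{\rho,b_1}=(i\rho+\tfrac12)(i\rho-\tfrac12)\,e_{\rho,b_1}=-\big(\rho^2+\tfrac14\big)e_{\rho,b_1},
\eqs
which is \eqref{eq:spectre_laplace_beltrami} at $b_1$; the reduction above then gives the general statement. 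The sign convention on $\tau$ stated in the excerpt ($\tau<0$ when $O$ lies inside the horocycle) is precisely what makes the $-\partial_\tau$ term appear with this sign.

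The hard part will be the one genuinely computational input, namely establishing the form of $\Delta_\D$ in horocyclic coordinates --- equivalently, computing the metric tensor in $(s,\tau)$ from $z = n_s a_\tau\cdot O$ together with the Cartesian formula \eqref{eq:laplace}. The cleanest way to sidestep the disk-coordinate algebra is to conjugate to the upper half-plane model, in which $b_1$ maps to $\infty$, the metric becomes $y^{-2}(dx^2+dy^2)$ with $y=e^{\tau}$ and $x=s$, so that $\Delta = y^2(\partial_x^2+\partial_y^2)$ acts on $y^{i\rho+\frac12}$ and produces the factor $(i\rho+\frac12)(i\rho-\frac12)$ at once; invariance of the Laplace-Beltrami operator under the conformal isometry relating the two models guarantees that this is the same operator. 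Alternatively, one may avoid coordinates entirely and verify the identity directly in Cartesian coordinates, using the closed form $\langle z,b\rangle = \log\big((1-|z|^2)/|z-b|^2\big)$ for the inner product, but this route is considerably more tedious.
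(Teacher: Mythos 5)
Your argument is correct, but note that the paper itself offers no proof of this lemma --- its ``proof'' is a one-line citation to Helgason --- so what you have produced is a self-contained reconstruction of the standard derivation from that reference, which is valuable in itself. One caveat on the reduction step: your blanket claim that $\langle g\cdot z,g\cdot b\rangle=\langle z,b\rangle$ for \emph{every} $g\in\SU$ is false as stated. Since $\langle z,b\rangle$ is a signed distance measured from the fixed origin $O$, it obeys only the cocycle identity $\langle g\cdot z,g\cdot b\rangle=\langle z,b\rangle+\langle g\cdot O,g\cdot b\rangle$ (this very covariance, with weight $|\gamma'|^{1/2+i\rho}$ on the boundary, is what underlies equation \eqref{eq:action_periodic_distrib} in the paper); genuine invariance holds exactly for isometries fixing $O$. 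Your proof survives unscathed because the transitivity on $\partial\D$ that you actually invoke is that of the rotation subgroup $K$, whose elements do fix $O$ and hence satisfy $\langle g\cdot z,g\cdot b\rangle=\langle z,b\rangle$, so $e_{\rho,b}=T_g\,e_{\rho,b_1}$ with $g\in K$ and the equivariance $\Delta_\D T_g=T_g\Delta_\D$ transports the eigenvalue relation --- but you should restrict the invariance statement to $g\in K$. Everything else checks out: the metric in horocyclic coordinates is indeed $d\tau^2+e^{-2\tau}\,ds^2$ up to a normalization of $s$ that is harmless for $s$-independent functions, giving $\Delta_\D=\partial_\tau^2-\partial_\tau+e^{2\tau}\partial_s^2$, which is consistent with the reduced operator $\Delta^0_\D=\partial_\tau^2-\partial_\tau$ that the paper itself uses in Section \ref{sec5}; the algebra $\left(i\rho+\tfrac{1}{2}\right)\left(i\rho-\tfrac{1}{2}\right)=-\left(\rho^2+\tfrac{1}{4}\right)$ is right and valid for all complex $\rho$; and your half-plane shortcut, where $e_{\rho,b_1}$ becomes $y^{i\rho+1/2}$ with $y=e^\tau$ and $\Delta=y^2(\partial_x^2+\partial_y^2)$, is the cleanest formulation and is essentially the computation found in the cited reference.
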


\begin{proof}
See \cite{helgason:00}.
\end{proof}

These elementary eigenfunctions allow the construction of general eigenfunctions of $\Delta_\D$. Let $\A(\partial \D)$ denote the space of analytic functions on the boundary $\partial\D$ of the Poincar\'e disk, considered as an analytical manifold. Let $U$ be an open annulus containing $\partial \D$, $\cH(U)$ the space of holomorphic functions on $U$ equipped with the topology of uniform convergence on compact subsets. We identify $\A(\partial \D)$ with the union $\bigcup_U\cH(U)$ and give it the limit topology. The element of the dual space $\A'(\partial \D)$ are called \textit{analytic functionals} or \textit{hyperfunctions}. Since elements of $\A'(\partial \D)$ generalize measures, it is convenient to write
\bqs
\textbf{m}(f)=\int_{\partial \D}f(b)d\textbf{m}(b),\quad f\in \A(\partial \D) \text{ and } \textbf{m}\in \A'(\partial \D).
\eqs
From Helgason's theory \cite{helgason:00} we have the following theorem.
\begin{thm}\label{thm:generalized_eigenfunction}
The eigenfunctions of the Laplace-Beltrami operator on $\D$ are precisely the functions
\bqq
\label{eq:generalized_eigenfunction}
\Psi(z)=\int_{\partial \D} e_{\rho,b}(z) d\textbf{m}_\rho(b),
\eqq
where $\rho\in\C$, $\textbf{m}_\rho\in \A'(\partial \D)$ and the eigenvalue is $-\left(\rho^2+\frac{1}{4}\right)$.
\end{thm}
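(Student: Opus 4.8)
The plan is to split the statement into two assertions and treat them separately. For the easy inclusion I would fix $\rho\in\C$ and a hyperfunction $\textbf{m}_\rho\in\A'(\partial\D)$ and show directly that $\Psi(z)=\int_{\partial\D}e_{\rho,b}(z)\,d\textbf{m}_\rho(b)$ is an eigenfunction with the claimed eigenvalue. The key observation is that for each fixed $z\in\D$ the map $b\mapsto e_{\rho,b}(z)$ extends to a holomorphic function on an annular neighbourhood of $\partial\D$, hence lies in $\A(\partial\D)$ and may legitimately be paired with $\textbf{m}_\rho$; the same holds for the finitely many $z$-derivatives entering $\Delta_\D$. Applying $\Delta_\D$ in the $z$-variable, using Lemma~\ref{lem:spectre_laplace_beltrami} inside the pairing, gives
\[
\Delta_\D\Psi(z)=\int_{\partial\D}\Delta_\D e_{\rho,b}(z)\,d\textbf{m}_\rho(b)=-\left(\rho^2+\tfrac14\right)\Psi(z),
\]
so this direction reduces to justifying that $\Delta_\D$ commutes with the hyperfunction pairing, which follows from the continuity of $\textbf{m}_\rho$ as a functional on $\A(\partial\D)$ together with the analyticity just noted.

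The substantial direction is the converse: every eigenfunction $\Psi$ of $\Delta_\D$ with eigenvalue $-(\rho^2+\tfrac14)$ is such a Poisson-type integral. Here I would exploit the rotational symmetry. Since $\Delta_\D$ commutes with the action of the group $K=\mathbf{SO}(2)$, I would expand $\Psi$ in $K$-types: in geodesic polar coordinates $z=\mathrm{r}_\phi a_\tau\cdot O$ write $\Psi(z)=\sum_{n\in\Z}f_n(\tau)e^{in\phi}$. Substituting into the eigenvalue equation separates variables and forces each radial coefficient to solve a second-order ODE, a hypergeometric (Jacobi/Legendre) equation whose solution regular at $\tau=0$ is, up to a scalar, a fixed generalized spherical function $\Phi_n^\rho(\tau)$, so $f_n=c_n\Phi_n^\rho$. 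On the boundary side I would expand an unknown $\textbf{m}_\rho$ in its Fourier series $\sum_n a_n e^{in\theta}$ on $\partial\D\simeq\s$ and compute its Poisson transform term by term: integrating $e_{\rho,b}(z)$ against $e^{in\theta}$ produces exactly $d_n(\rho)\,\Phi_n^\rho(\tau)e^{in\phi}$ for explicit constants $d_n(\rho)$ given by $\Gamma$-function ratios. Matching the two expansions forces $c_n=a_n\,d_n(\rho)$, hence $a_n=c_n/d_n(\rho)$, and the theorem then amounts to showing these $a_n$ genuinely define an element of $\A'(\partial\D)$.

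The hard part will be this last surjectivity step, and it has two sources of difficulty. First, one must produce sharp asymptotics of the spherical functions $\Phi_n^\rho(\tau)$ as $\tau\to\infty$, via the classical asymptotics of hypergeometric functions, in order to convert the mere fact that $\Psi$ is a well-defined smooth function into a sub-exponential growth bound on the coefficients $a_n$ — precisely the rate $\limsup_n|a_n|^{1/|n|}\le1$ that characterises hyperfunctions on $\s$, as opposed to distributions or measures. Second, one must control the normalising constants $d_n(\rho)$: built from quotients of $\Gamma$-functions, they degenerate for the exceptional values of $\rho$ where these ratios acquire zeros or poles, so the division $a_n=c_n/d_n(\rho)$ is delicate exactly at those eigenvalues. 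Establishing the non-degeneracy (or otherwise treating the exceptional spectrum) together with the growth estimate is the genuine analytic heart of the result; it is the content of Helgason's representation theorem in the rank-one case of the hyperbolic plane, and for a fully rigorous treatment I would ultimately follow and cite the development in \cite{helgason:00} rather than reconstruct these estimates from scratch.
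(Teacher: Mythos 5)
Your proposal cannot be checked against an in-paper argument, because the paper does not prove this theorem at all: it is quoted directly from Helgason's theory, with the proof delegated entirely to \cite{helgason:00}. What you have written is a reconstruction of the standard proof of the result the paper is citing, and it is correct in outline. The forward direction is as you say: for fixed $z$ the map $b\mapsto e_{\rho,b}(z)$ is holomorphic on a fixed annulus $U\supset\partial\D$, locally uniformly in $z$, so difference quotients converge in $\cH(U)$ and the continuity of $\textbf{m}_\rho$ on $\A(\partial\D)$ lets $\Delta_\D$ pass through the pairing, reducing everything to Lemma~\ref{lem:spectre_laplace_beltrami}. The converse via $K$-type expansion, identification of the regular radial solutions with hypergeometric spherical functions, term-by-term Poisson transform of the Fourier modes $e^{in\theta}$, and the characterisation of hyperfunctions on $\s$ by $\limsup_n|a_n|^{1/|n|}\le 1$ is exactly Helgason's scheme, and you correctly locate the analytic heart in the growth estimate and the non-vanishing of the $\Gamma$-ratios $d_n(\rho)$.

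Two refinements are worth recording. First, the growth bound on the coefficients $c_n$ comes not from the $\tau\to\infty$ asymptotics of $\Phi_n^\rho$ at fixed $n$, as your wording suggests, but from the large-$|n|$ behaviour at \emph{fixed} $\tau$, essentially $|\Phi_n^\rho(\tau)|^{1/|n|}\to\tanh(\tau/2)$ as $|n|\to\infty$; since $\Psi$ is smooth, $c_n\Phi_n^\rho(\tau)\to 0$ for each $\tau$, giving $\limsup_n|c_n|^{1/|n|}\le\coth(\tau/2)$, and only then does one let $\tau\to\infty$ to reach the hyperfunction rate $\le 1$. The order of limits matters. Second, your concern about exceptional $\rho$ is genuine for the Poisson transform at a fixed $\rho$ (the ratios $d_n(\rho)$ degenerate when $i\rho+\tfrac12\in\{0,-1,-2,\dots\}$), but the theorem as stated survives because the eigenvalue $-(\rho^2+\tfrac14)$ is invariant under $\rho\mapsto-\rho$ and $\rho$, $-\rho$ cannot both be exceptional; one simply works with the non-exceptional representative. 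Given that the paper itself offers only the citation, your sketch together with the final appeal to \cite{helgason:00} is at least as complete as the paper's own treatment.
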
 

Note that real eigenvalues  $-\left(\rho^2+\frac{1}{4}\right)$ of $\Delta_\D$ correspond to taking $\rho$ real or $\rho \in i\R$. The latter case is irrelevant for the following study as it corresponds to exponentially diverging eigenfunctions. Therefore the real spectrum of $L_\D$ is continuous and is bounded from above by $-\frac{1}{4}$.

\subsubsection{Periodic eigenfunctions of the Laplace-Beltrami operator}\label{subsub:eigenfunctionsLaplaceBeltrami}

In the following we will look for solutions of bifurcation problems in $\D$, which are invariant under the action of a lattice group: $\gamma\cdot u(z)=u(\gamma^{-1}\cdot z)=u(z)$ for $\gamma\in\Gamma$. This reduces to look at the problem restricted to a fundamental region with suitable boundary conditions imposed by the $\Gamma$-periodicity, or, equivalently, to looking for the solutions of the problem projected onto the orbit space $\D/\Gamma$ (which inherits a Riemannian structure from $\D$). Because the fundamental region is compact, it follows from general spectral theory that $\Delta_\D$ is self-adjoint, non negative and has compact resolvent in $L^2(\D/\Gamma)$ \cite{buser:92}. Hence its spectrum consists of real positive and isolated eigenvalues of finite multiplicity.  

Coming back to Theorem~\ref{thm:generalized_eigenfunction}, we observe that those eigenvalues $\lambda$ of $\Delta_\D$ which correspond to  $\Gamma$-invariant eigenfunctions, must have $\rho\in\R$ or $\rho\in i\R$. The case $\rho$ real corresponds to the Euclidean situation of planar waves with a given wave number, the role of which is played by $\rho$ in $\D$. In this case the eigenvalues of $\Delta_\D$ satisfy $\frac{1}{4}< \lambda$. On the other hand there is no Euclidean equivalent of the case $\rho\in i\R$, for which the eigenvalues $0 < \lambda \leq \frac{1}{4}$ are in finite number. It turns out that such "exceptional" eigenvalues do not occur for "simple" groups such as the octagonal group to be considered in more details in the Section \ref{sec3}. This follows from formulas which give lower bounds for these eigenvalues. Let us give two examples of such estimates (derived by Buser \cite{buser:92}, see also \cite{iwaniec:02}): (i) if $g$ is the genus of the surface $\D/\Gamma$, there are at most $3g-2$ exceptional eigenvalues; (ii) if $d$ is the diameter of the fundamental region, then the smallest (non zero) eigenvalue is bounded from below by $\left( 4\pi\, \sinh \frac{d}{2}\right)^{-2}$. 

Suppose now that the eigenfunction in Theorem~\ref{thm:generalized_eigenfunction} is $\Gamma$-periodic. Then the distribution $\mathbf{m}_\rho$ satisfies  the following equivariance relation \cite{pollicott:89}. Let $\gamma (\theta)$ denote the image of $\theta\in\partial \D$ under the action of $\gamma\in\Gamma$. Then 
\bqq
\label{eq:action_periodic_distrib}
\mathbf{m}_\rho(\gamma\cdot\theta)=|\gamma'(\theta)|^{\frac{1}{2}+i\rho}\,\mathbf{m}_\rho(\theta).
\eqq

\begin{rmk}
As observed by \cite{series:87}, this condition is not compatible with $\mathbf{m}_\rho$ being a "nice" function. In fact, not only does there exist no explicit formula for these eigenfunctions, but their approximate computation is itself an uneasy task. We shall come back to this point in the next chapter.
\end{rmk}

\subsection{The Helgason-Fourier transform}\label{subsec24}

Based on the elementary eigenfunctions \ref{eq:hyp_plane_wave}, Helgason built a Fourier transform theory for the Poincar\'e disc, see \cite{helgason:00} which we recall now.

\begin{defi}\label{def:helgason_transform}
 If $f$ is a complex-valued function on $\D$, its Helgason-Fourier transform is defined by
\bqq
\label{eq:helgason_transform}
\tilde{f}(\rho,b)=\int_\D f(z) e^{\left(-i\rho +\frac{1}{2}\right)\langle z,b \rangle}\text{dm}(z)
\eqq
for all $\rho\in\C$, $b\in\partial \D$ for which this integral exists.
\end{defi}

If we denote $\mathcal{D}(\D)$, the set of differentiable functions of compact support then the following inversion theorem holds.

\begin{thm}\label{thm:inversion_formula}
 If $f\in\mathcal{D}(\D)$, then
\bqq
\label{eq:inversion_formula}
f(z)=\frac{1}{2\pi}\int_\R\int_{\partial \D}\tilde{f}(\rho,b)e^{\left(i\rho +\frac{1}{2}\right)\langle z,b \rangle}\rho \tanh\left(\pi\rho\right)d\rho db
\eqq
where $db$ is the circular measure on $\partial \D$ normalized by $\int_{\partial \D} db=1$.
\end{thm}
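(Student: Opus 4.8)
The plan is to follow Helgason's strategy: reduce the general inversion formula to a one-dimensional (radial) computation by exploiting the symmetry of the problem, and then extract the spectral density $\rho\tanh(\pi\rho)$ from the fine structure of the spherical functions. First I would use the $\SU$-covariance of both sides. Writing $T_g f(z)=f(g^{-1}\cdot z)$, the additive cocycle identity $\langle g\cdot z,b\rangle=\langle z,g^{-1}\cdot b\rangle+\langle g\cdot O,b\rangle$ for the horocyclic inner product yields the covariance relation
\bqs
\widetilde{T_g f}(\rho,b)=e^{\left(-i\rho+\frac12\right)\langle g\cdot O,b\rangle}\,\tilde f(\rho,g^{-1}\cdot b),
\eqs
which, combined with the Poisson-kernel transformation law $\frac{d(g\cdot b)}{db}=e^{\langle g\cdot O,b\rangle}$ for the boundary measure, makes the right-hand side of \eqref{eq:inversion_formula} transform under $g\in\SU$ exactly like $f$. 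Hence it suffices to prove the identity at the single point $z=O$. Since $\langle O,b\rangle=0$, the target collapses to
\bqs
f(O)=\frac{1}{2\pi}\int_\R\left(\int_{\partial\D}\tilde f(\rho,b)\,db\right)\rho\tanh(\pi\rho)\,d\rho.
\eqs

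Next, introducing the elementary spherical function $\varphi_\rho(z)=\int_{\partial\D}e_{\rho,b}(z)\,db$, a radial eigenfunction of $\Delta_\D$ with $\varphi_\rho(O)=1$ and even in $\rho$, an interchange of integration gives $\int_{\partial\D}\tilde f(\rho,b)\,db=\int_\D f(z)\,\varphi_{-\rho}(z)\,\text{dm}(z)=\int_\D f(z)\,\varphi_{\rho}(z)\,\text{dm}(z)$. Because $\varphi_\rho$ is radial, only the $K$-average of $f$ contributes, so the claim reduces to the \emph{spherical inversion formula} for the radial part of $f$. To establish that, the key tool is the factorization of the spherical transform through the one-dimensional Euclidean Fourier transform via the \emph{Abel (horocyclic) transform}: in horocyclic coordinates $z=n_s a_\tau\cdot O$, integrating a radial function over the horocycles and inserting the appropriate half-density defines $\mathcal{A}f(\tau)$, and one verifies $\int_\D f\,\varphi_\rho\,\text{dm}=\int_\R \mathcal{A}f(\tau)\,e^{-i\rho\tau}\,d\tau$. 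Thus the spherical transform is (Euclidean Fourier transform)$\circ\,\mathcal{A}$. I would then invert the Euclidean Fourier transform to recover $\mathcal{A}f$, invert the Abel transform to recover $f$, and evaluate at $\tau=0$.

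The hard part, and the whole source of the weight $\rho\tanh(\pi\rho)$, is the inverse Abel transform. In the two-dimensional case this operator is a fractional (half-order) derivative, and when transported to the spectral variable it becomes multiplication by a Plancherel density equal, up to the constant absorbed into $\frac{1}{2\pi}$, to $|c(\rho)|^{-2}$, where $c$ is the Harish-Chandra $c$-function of $\SU$. For the hyperbolic plane $c(\rho)$ is an explicit ratio of Gamma functions, and the reflection and duplication formulas collapse $|c(\rho)|^{-2}$ to exactly $\rho\tanh(\pi\rho)$; equivalently, the same density drops out of a contour-deformation argument fed by the large-$\tau$ asymptotics of $\varphi_\rho$. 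The remaining technical obligations are routine by comparison: justifying the Fubini interchanges and the absolute convergence of the $\rho$-integral for $f\in\mathcal{D}(\D)$, and confirming that the values $\rho\in i\R$ contribute nothing, so that no residue or discrete-series terms appear and the inversion integral runs over $\rho\in\R$ as stated.
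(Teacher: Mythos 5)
The paper never proves this theorem: it is quoted from Helgason \cite{helgason:00}, so the only meaningful comparison is with the classical argument in that reference, and your sketch reconstructs it faithfully and in the right order --- reduction to $z=O$ via the cocycle identity $\langle g\cdot z,b\rangle=\langle z,g^{-1}\cdot b\rangle+\langle g\cdot O,b\rangle$ and the quasi-invariance of $db$ (the structural point being the exponent cancellation $(-i\rho+\tfrac12)+(i\rho+\tfrac12)-1=0$, which is exactly why the right-hand side of \eqref{eq:inversion_formula} transforms like $f$), then the spherical function $\varphi_\rho(z)=\int_{\partial\D}e_{\rho,b}(z)\,db$, the Abel--Fourier factorization of the spherical transform, and the identification of the Plancherel density with $|c(\rho)|^{-2}$ via $|\Gamma(i\rho)|^2=\pi/(\rho\sinh\pi\rho)$ and $|\Gamma(\tfrac12+i\rho)|^2=\pi/\cosh\pi\rho$. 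Note that here $\varphi_\rho(z)=\mP_{-\frac12+i\rho}(\cosh\tau)$, so your radial step is literally the Mehler--Fock inversion, the same special function the paper later uses in Section 4.

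Two corrections of detail. First, your Jacobian law is stated for the wrong group element: with the paper's conventions, where $e^{\langle z,b\rangle}=(1-|z|^2)/|z-b|^2$, one has $d(g^{-1}\cdot b)/db=e^{\langle g\cdot O,b\rangle}$, equivalently $d(g\cdot b)/db=e^{-\langle g\cdot O,\,g\cdot b\rangle}$; the identity you wrote, $d(g\cdot b)/db=e^{\langle g\cdot O,b\rangle}$, is false as stated, although the substitution $b=g\cdot b'$ that your reduction actually performs invokes the correct version, so the argument survives. Second, and more substantively, you may not treat the prefactor as ``a constant absorbed into $\frac{1}{2\pi}$'': with the paper's normalizations ($\text{dm}$ as in \eqref{eq:measure_element} and $\int_{\partial\D}db=1$) the $c$-function computation gives $|c(\rho)|^{-2}=\pi\rho\tanh(\pi\rho)$, and carrying the Jacobi/Mehler--Fock constants through --- cross-checked against the short-time heat-kernel asymptotics $p_t(O)\sim(4\pi t)^{-1}$ and the identity term $\frac{\mu(F)}{4\pi}\int_\R r\tanh(\pi r)h(r)\,dr$ of the Selberg trace formula --- produces the prefactor $\frac{1}{4\pi}$ for the integral over all of $\R$, i.e.\ $\frac{1}{2\pi}$ over $(0,\infty)$ after using the evenness in $\rho$ of the $b$-integrated integrand. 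Pinning this constant down is therefore a genuine obligation of the proof, not bookkeeping: it is exactly the point at which the printed formula must be reconciled with the chosen curvature $-1$ normalization (in Helgason's own normalization the density reads $\lambda\tanh(\pi\lambda/2)$, so both the constant and the argument of the hyperbolic tangent rescale), and your parenthetical elides the one step where a factor of $2$ can and does hide.
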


\subsection{Existence, uniqueness and regularity of the solutions}\label{subsec25}

If we think of $u$, solution of the Swift-Hohenberg equation \eqref{eq:sh}, as a function evolving in space and time (for example an averaged potential membrane \cite{faye-chossat-etal:11,chossat-faye-etal:11}) it is natural to impose that $u$ should be uniformly bounded, i.e., a function in $L^{\infty}(\D)$. Indeed, the space of bounded functions contains all types of solutions of the Swift-Hohenberg: fronts (solutions connecting two homogeneous states), periodic solutions and localized solutions. Note that $L^p(\D)$-spaces with $p<\infty$ are only relevant for localized solutions. In addition, this natural property should be preserved under the time evolution. In other words, the problem is to know if the Swift-Hohenberg equation \eqref{eq:sh} defines a regular semi-flow in the space $L^{\infty}(\D)$ which is global in time. Note first, that the Cauchy problem in this space is not difficult to handle and it is easy to prove that there exists an unique solution of \eqref{eq:sh} for a small time interval which depends upon the $L^{\infty}(\D)$ norm of the initial data. In the Euclidean case, this global time problem  has received much interests in the past decades and can be solved using local energy estimates  \cite{collet-eckmann:92,collet:94}. The proof of an analog global time existence theorem for the hyperbolic plane is far beyond the scope of this paper and instead of working in a functional space that suits for all types of solutions, we prefer to use specific functional space for each type of solutions.

The types of solutions that we are interested in are periodic and localized solutions, which can be defined on well chosen Sobolev spaces where the existence of global solutions is straightforward.

We also recall that the Swift-Hohenberg equation \eqref{eq:sh} posed on $\D$ is a gradient system,
\bqs
u_t=-\nabla \mathcal{E}(u),
\eqs
in $H^2(\D)$, where the energy functional $\mathcal{E}$ is given by
\bqq
\label{eq:energy_functional}
\mathcal{E}(u)=\int_\D \left[ \frac{\left[(\alpha+\Delta_\D)u\right]^2}{2}+\frac{\lambda u^2}{2}-\frac{\nu u^3}{3}+\frac{\eta u^4}{4}\right]\text{dm}(z), \quad z\in \D,
\eqq
and the gradient $\nabla \mathcal{E}(u)=\frac{\delta\mathcal{E}}{\delta u}(u)$ of $\mathcal{E}$ with respect to $u$ is computed in $L^2(\D)$. In particular, $\mathcal{E}$ decreases strictly in time along solution of \eqref{eq:sh} unless the solution is stationary. While we cannot evaluate the energy functional along periodic patterns as they are not localized, whence the integral \eqref{eq:energy_functional}  may not exist, we may, however, define a local energy by integrating over one spatial period of the underlying periodic pattern.

\section{Bifurcation of $\Gamma$-periodic patterns}\label{sec3}

\subsection{Reduction of the problem}\label{subsec31}
\subsubsection{Linear stability analysis in the class of $\Gamma$-periodic perturbations}\label{subsub:linear}
Let us study the stability of the trivial state $u=0$ of equation \ref{eq:sh}. If we look for perturbations in the form of hyperbolic plane waves $e^{\sigma t}e_{\rho,b}(z)$, solving the linearized Swift-Hohenberg equation comes back to solving a "dispersion relation"
\bqq \label{eq:dispersion relation}
\sigma = -(\alpha^2-\rho^2-\frac{1}{4})^2+\lambda
\eqq
Writing $z$ in horocyclic coordinates $(s,\tau)$ (see \ref{subs23}) we may want to look for solutions which are independent of the coordinate $s$ along the horocycles based at the point $b$ on $\partial\D$ and which are periodic in $\tau$. This would imply $\rho=a+i/2$ with $a\in\R$. Indeed $e_{\rho,b}(z)=e^{(\rho i+1/2)\tau}$. However in general $\sigma$ is a complex eigenvalue in this case, which we shall treat in Section \ref{sec5}.  

We now look for solutions within the restricted class of functions which are invariant under the action of a lattice group $\Gamma$ and which are $\text{L}^2$ in the fundamental domain of periodicity. This comes back to looking for Equations (\ref{eq:sh}) projected onto the compact Riemann surface $\D/\Gamma$. As noticed in \ref{subsub:eigenfunctionsLaplaceBeltrami}, the spectrum of the Laplace-Beltrami operator in $\D/\Gamma$ is made of isolated, real and positive eigenvalues with finite multiplicities, which we label by increasing order and which we write $\rho_k^2+\frac{1}{4}$, $k\in\N$. Moreover the corresponding eigenfunctions are expressed as in (\ref{eq:generalized_eigenfunction}) with the measure $m_{\rho_k}$ satisfying the equivariance property \ref{eq:action_periodic_distrib}. We assume that all eigenvalues are greater than $\frac{1}{4}$, which implies $\rho_k\in\R$. This is known to be true for example when $\Gamma$ is the regular octagonal lattice, a case that we shall analyze in detail in the next subsections.    

It follows from the dispersion relation \eqref{eq:dispersion relation} restricted to $\Gamma$-invariant perturbations that 
\begin{enumerate}
\item When $\lambda<0$ all $\Gamma$ periodic perturbations are damped and $0$ is a stable state. 
\item The neutral stability curve consists in a discrete set of points $(\rho_k,\lambda_k)$ on the curve $\lambda=(\alpha^2-\rho^2-\frac{1}{4})^2$. There exists a minimum of $\lambda_k$ at some value $\rho_{k_c}$ as we see in Figure \ref{fig:neutralstab}. Note that by adjusting the value of $\alpha$ we can set $\lambda_{k_c}=0$, which we shall assume in the rest of this section.
\begin{figure}[htp]
 \centering
 \includegraphics[width=0.5\textwidth]{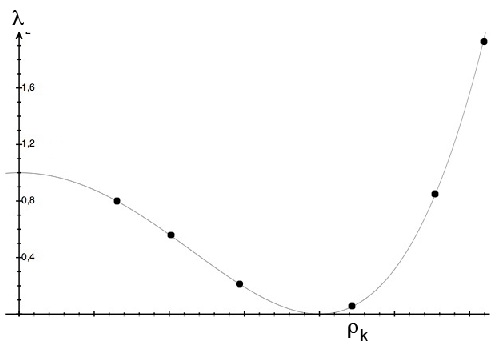}
 \caption{The neutral stability curve (schematic).}
 \label{fig:neutralstab}
\end{figure}
\item In general the critical parameter value $\lambda_k$ is associated with a unique wave number $\rho_{k_c}$. In this case the critical eigenspace is an irreducible representation of the group $\g$ of symmetries of $\D/\Gamma$.
\end{enumerate}  

\subsubsection{Center manifold reduction}\label{subsection:CM}
From these points it follows that we can apply a center manifold reduction to equation (\ref{eq:sh}). We quickly recall the procedure. Let $V$ be the critical eigenspace and $x\in V$. We write $V^\perp$ for the orthogonal complement of $V$ in $\text{L}^2(\D/\Gamma)$. The center manifold theorem asserts the following \cite{chossat-lauterbach:00, haragus-iooss:10}:
\begin{itemize}
\item[(i)] There exists a neighborhood $U$ of $(0,0)$ in $V\times\R$ and a $C^p$ map $\Psi:~V\times\R\rightarrow V^\perp$ ($p>2$) such that the graph of $\Psi$ in $L^2(\D/\Gamma)$ is a locally flow invariant, attracting $C^p$ manifold of equation (\ref{eq:sh}) for $(x,\lambda)\in U$.
\item[(ii)] The map $\Psi$ is equivariant under the action of the group $\g$: for any $g\in\g$, $\Psi(g\cdot x,\lambda)=g\cdot \Psi(x,\lambda)$ for all $(x,\lambda)\in U$.
\end{itemize}
Therefore the bifurcation analysis can be reduced to the projection of equation (\ref{eq:sh}) onto $V$ with $u=x+\Psi(x,\lambda)$. We write this equation  
\bqq \label{eq:bifurcation}
\frac{dx}{dt} = F(x,\lambda)
\eqq
and by point (ii) above $F(g\cdot x,\lambda)=g\cdot F(x,\lambda)$ for all $g\in\g$ and $(x,\lambda)\in V\times\R$. From the assumptions we have $F(0,0)=0$ and $D_xF(0,0)=0$. The Taylor expansion of $F$ can be computed by a recursive method \cite{haragus-iooss:10}. However $V$ can only be known approximately because the eigenfunctions of $\Delta_\D$ do not have explicit expressions and are in general difficult to compute (see below).

We can now apply to (\ref{eq:bifurcation}) the methods of equivariant bifurcation theory \cite{chossat-lauterbach:00} in order to determine the bifurcation diagram.

\subsubsection{Equivariant branching lemma} 
When a differential equation like (\ref{eq:bifurcation}) is invariant under the action of a symmetry group, this imposes geometrical constraints which can be exploited to determine the bifurcation diagram or more generally, to analyze the dynamics. Here we recall some basic facts and methods which will be applied in the next subsection.
\begin{defi}
Given a representation (a linear action) of the group $\g$ in a space $V$, we say that it is {\em absolutely irreducible} if the only endomorphisms of $V$ which commute with this action are scalar multiples of the identity.
\end{defi} 
One can easily check that an absolutely irreducible representation is irreducible, meaning that the only subspaces of $V$ which are invaraint under the representation are $\{0\}$ and $V$, but the converse may not be true. 

We now assume that $\g$ acts absolutely irreducibly in $V$.
This has two consequences:
\begin{itemize}
\item[(i)] $F(0,\lambda)=0$ for all $\lambda$. Indeed $F(0,\lambda)$ is fixed by all $g\in\g$ and the only point which has this property is $0$.
\item[(ii)] $D_xF(0,\lambda)=c(\lambda)\mathcal{I}$ (identity operator).
\end{itemize} 
In the case of equation (\ref{eq:sh}), (ii) is automatically satisfied. Indeed the linear part is $-(\alpha^2+\Delta_\D)^2u+\lambda u$ and we have seen that $V$ is the kernel of $-(\alpha^2+\Delta_\D)^2$ in $L^2(\D/\Gamma)$. Let $P$ be the orthogonal projection on $V$. Setting $x=Pu$, we see that $D_xF(0,\lambda)x=P(-(\alpha^2+\Delta_\D)^2x+\lambda x)=\lambda x$.

\begin{defi}
An isotropy group for the action of $\g$ in $V$ is the largest subgroup which fixes a point in $V$. For example $\g$ itself is an isotropy group (it fixes $0$).
\end{defi}
Different points in $V$ may have the same isotropy group. Let $H$ be an isotropy group. Since the action is linear, the set $V_H=\{x\in V~|~H\cdot x=x\}$ is a subspace of $V$ called the {\em fixed point subspace} of $H$. $V_H$ contains points with higher isotropy (at least it contains $0$) and this induces a stratification of the fixed point subspace. The subset of points which have exactly $H$ as isotropy subgroup is open in $V_H$ and is called the principal stratum of $V_H$. The next proposition is straightforward but of high consequences.
\begin{prop}
Fixed point subspaces are invariant under equivariant maps in $V$.
\end{prop}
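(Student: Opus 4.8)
The plan is to unwind the two definitions involved and then verify the containment by a single direct computation. Recall that for an isotropy subgroup $H\leq\g$ the fixed point subspace is $V_H=\{x\in V~|~h\cdot x=x~\forall h\in H\}$, and that the reduced vector field $F(\cdot,\lambda)$ is $\g$-equivariant, meaning $F(g\cdot x,\lambda)=g\cdot F(x,\lambda)$ for all $g\in\g$ (this is point (ii) of the center manifold theorem, transported to the projected equation). The statement ``$V_H$ is invariant under equivariant maps'' is precisely the claim that $x\in V_H$ implies $F(x,\lambda)\in V_H$, and this is what I would establish.

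First I would fix an arbitrary $x\in V_H$ and an arbitrary $h\in H$, and show that $F(x,\lambda)$ is fixed by $h$. The entire argument is the chain of equalities
\bqs
h\cdot F(x,\lambda)=F(h\cdot x,\lambda)=F(x,\lambda),
\eqs
where the first equality is equivariance applied to the element $g=h\in H\subset\g$, and the second uses $h\cdot x=x$, which holds because $x\in V_H$. Since $h\in H$ was arbitrary, $F(x,\lambda)$ is fixed by every element of $H$, which is exactly the statement $F(x,\lambda)\in V_H$. This completes the verification.

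The only points requiring care are bookkeeping rather than genuine obstacles: one must note that $H$ is a subgroup of the full symmetry group $\g$, so that equivariance is indeed available for each $h\in H$; and that the parameter $\lambda$ plays no role, since for each fixed $\lambda$ the map $F(\cdot,\lambda)$ is $\g$-equivariant. There is no hard step here — the result is purely formal, following immediately from the definition of equivariance together with the definition of the fixed point subspace $V_H$. Its importance lies not in the difficulty of the proof but in its consequence: it lets one restrict the equivariant bifurcation problem to the (lower-dimensional) subspace $V_H$, which is the mechanism underlying the equivariant branching lemma used in the subsequent analysis.
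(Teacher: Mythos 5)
Your proof is correct and is precisely the standard one-line argument the paper has in mind when it calls the proposition ``straightforward'' (the paper in fact omits the proof entirely): the chain $h\cdot F(x,\lambda)=F(h\cdot x,\lambda)=F(x,\lambda)$ for $h\in H$, $x\in V_H$ is exactly the intended verification. Your closing remark correctly identifies why the result matters — it justifies restricting the bifurcation equation to $V_H$, as used in the proof of the equivariant branching lemma.
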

We can now state the main result of this subsection (equivariant branching lemma for the Swift-Hohenberg equation).
\begin{thm} \label{thm:equivariant branching lemma}
Under the above hypotheses, suppose that $H$ is an isotropy subgroup such that $\dim{V_H}=1$. Then a branch of $\Gamma$-periodic steady states of (\ref{eq:sh}) bifurcates from $0$ at $\lambda=0$ in $V_H$. Let $N(H)$ be the normalizer of $H$ in $\g$. If $N(H)/H\simeq\Z_2$, then the bifurcation is a pitchfork in $V_H$.
\end{thm}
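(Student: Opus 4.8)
The plan is to exploit the one-dimensionality of $V_H$ in order to reduce the steady-state problem to a single scalar equation, to which the implicit function theorem then applies. First I would invoke the Proposition on invariance of fixed-point subspaces: since $F(\cdot,\lambda)$ is $\g$-equivariant, it maps $V_H\times\R$ into $V_H$. Fixing a generator $v_0$ of the line $V_H$ and writing every $x\in V_H$ as $x=x_0 v_0$, the restriction of the steady-state equation $F(x,\lambda)=0$ becomes $f(x_0,\lambda)=0$, where $f\colon\R\times\R\to\R$ is the scalar $C^p$ function defined by $F(x_0 v_0,\lambda)=f(x_0,\lambda)\,v_0$.

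Next I would peel off the trivial branch. Because $F(0,\lambda)=0$ for all $\lambda$ (consequence (i) of absolute irreducibility), we have $f(0,\lambda)=0$, so Hadamard's lemma yields a $C^{p-1}$ factorization $f(x_0,\lambda)=x_0\,g(x_0,\lambda)$. Differentiating at $x_0=0$ and using $D_xF(0,\lambda)=\lambda\,\mathcal{I}$ (established for \eqref{eq:sh} just above) gives $g(0,\lambda)=\lambda$. In particular $g(0,0)=0$ while $\partial_\lambda g(0,0)=1\neq0$, so the implicit function theorem applied to $g=0$ produces a unique curve $\lambda=\Lambda(x_0)$ with $\Lambda(0)=0$; the points $x_0 v_0$ along this curve are the bifurcating nontrivial $\Gamma$-periodic steady states, which settles the first assertion.

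For the pitchfork statement I would analyze the residual symmetry acting on the line $V_H$. The normalizer $N(H)$ maps $V_H$ to itself, since for $n\in N(H)$, $h\in H$ and $x\in V_H$ we have $h\cdot(n\cdot x)=n\cdot((n^{-1}hn)\cdot x)=n\cdot x$; as $H$ acts trivially there, the quotient $N(H)/H$ acts linearly on the one-dimensional space $V_H$, hence by a scalar. A finite group can act on $\R$ only through $\pm1$, so when $N(H)/H\simeq\Z_2$ its generator acts either trivially or as $x_0\mapsto-x_0$. It cannot act trivially: otherwise every point of $V_H$ would be fixed by a group strictly larger than $H$, contradicting that $H$ is exactly the isotropy on the principal stratum of $V_H$. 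Thus the generator acts by $-1$, which forces $f(-x_0,\lambda)=-f(x_0,\lambda)$ and makes $g$ even in $x_0$; consequently $\Lambda$ is even, $\Lambda(x_0)=-a\,x_0^2+O(x_0^4)$, and the bifurcation is a pitchfork.

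The computations here are all routine; the one genuinely structural step, which I would treat most carefully, is the determination of how $N(H)/H$ acts on $V_H$. The entire distinction between a pitchfork and a transcritical branch hinges on ruling out the trivial action, and this in turn relies on $H$ being \emph{exactly} (not merely contained in) the isotropy of the principal stratum of $V_H$.
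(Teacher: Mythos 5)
Your proposal is correct and follows essentially the same route as the paper: restrict $F$ to the invariant line $V_H$, factor out the trivial solution $x_0=0$ using $f(0,\lambda)=0$ and $D_xF(0,\lambda)=\lambda\,\mathcal{I}$, obtain the branch from the implicit function theorem, and derive the pitchfork from the nontrivial element of $N(H)/H$ acting as $-1$ on $V_H$, which forces $f(\cdot,\lambda)$ to be odd. Your only addition is to spell out why that element cannot act trivially (exactness of the isotropy $H$ on the principal stratum of $V_H$), a point the paper asserts implicitly when it claims $N(H)/H$ is $1$ or $\Z_2$ via orthogonality of the action.
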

\begin{proof}
By the previous proposition, the bifurcation equation (\ref{eq:bifurcation}) restricts to the invariant axis $V_H$. We write this scalar equation $\dot x=f(x,\lambda)$ where by assumptions, $f(0,\lambda)=0$ and $f'(0,\lambda)=\lambda$. It follows that we can rewrite this equation 
\bqq
\dot x = x[\lambda+h(x,\lambda)]
\eqq
where $h(x,\lambda)=O(x^2)$. Having eliminated the solution $x=0$ we obtain the bifurcated branch by the implicit function theorem. To prove the second part of the theorem, remark that $N(H)$ keeps $V_H$ invariant (as a set). Either the group $N(H)/H=1$ or $N(H)/H\simeq\Z_2$. This is because $\g$ is a group of isometries whose action in $V$ is orthogonal (orthogonal matrices). In the second case $f(\cdot,\lambda)$ is an odd function which implies that the bifurcation is a pitchfork.  
\end{proof}
Of course it may also happen that solutions generically bifurcate in the principal stratum of a fixed point subspace which has dimension greater than 1. This point will be addressed in the example of the next subsection.
\begin{defi}
Any generic $\Gamma$-periodic steady-state satisfying the hypotheses of the Equivaraint Branching Lemma is called an H-planform.
\end{defi}

Finally, if $x'=g\cdot x$, then the isotropy subgroup of $x'$ is $H'=g\cdot H\cdot g^{-1}$. The image of $x$ under $\g$ is called its $\g$-orbit. Equilibria or more generally flow-invariant sets belonging to the same group orbit share the same properties.

\subsection{Example: the octagonal lattice}\label{subsec:octagonal lattice}

\subsubsection{The octagonal lattice and its symmetries}\label{subsection:octlattice}
The octagonal lattice group $\Gamma$ is generated by the following four hyperbolic translations (also called boosts), see \cite{balazs-voros:86}:

\begin{equation}
\label{eq:boostg}
g_0 = \left(\begin{array}{cc}1+\sqrt{2} & \sqrt{2+2\sqrt{2}} \\ \sqrt{2+2\sqrt{2}} & 1+\sqrt{2}\end{array}\right)
\end{equation}
and $g_j = r_{j\pi/4}g_0r_{-j\pi/4}$, $j=1,2,3$, where $r_\varphi$ indicates the rotation of angle $\varphi$ around the origin in $\D$. The fundamental domain of the lattice is a regular octagon $\Oct$. The opposite sides of the octagon are identified by periodicity, so that the corresponding quotient surface $\D/\Gamma$ is isomorphic to a "double doughnut" (genus two surface) \cite{balazs-voros:86}.

We now determine what is the full symmetry group of the octagonal lattice, or equivalently, of the surface $\D/\Gamma$. Clearly the symmetry group of the octagon itself is part of it. This is the dihedral group $\mD_8$ generated by the rotation $r_{\pi/4}$ and by the reflection $\kappa$ through the real axis, but there is more. We have seen in Section \ref{subs22} that the group $\g$ is generated by the reflecitons through the edges of the elmentary trinagle tiling $\D/\Gamma$. The smallest triangle (up to symmetry) with these properties is the one shown in Figure \ref{fig:triangle}. It has angles $\pi/8$, $\pi/2$ and $\pi/3$ at vertices $P=O$ (the center of $\D$), $Q$, $R$ respectively, and its area is, by Gauss-Bonnet formula, equal to $\pi/24$. There are exactly 96 copies of $\tau$ filling the octagon, hence $|\g|=96$. Figure \ref{fig:tesselation} shows this tessellation of $\Oct$ by triangles. 
\begin{figure}
\centering
\includegraphics[width=0.6\textwidth]{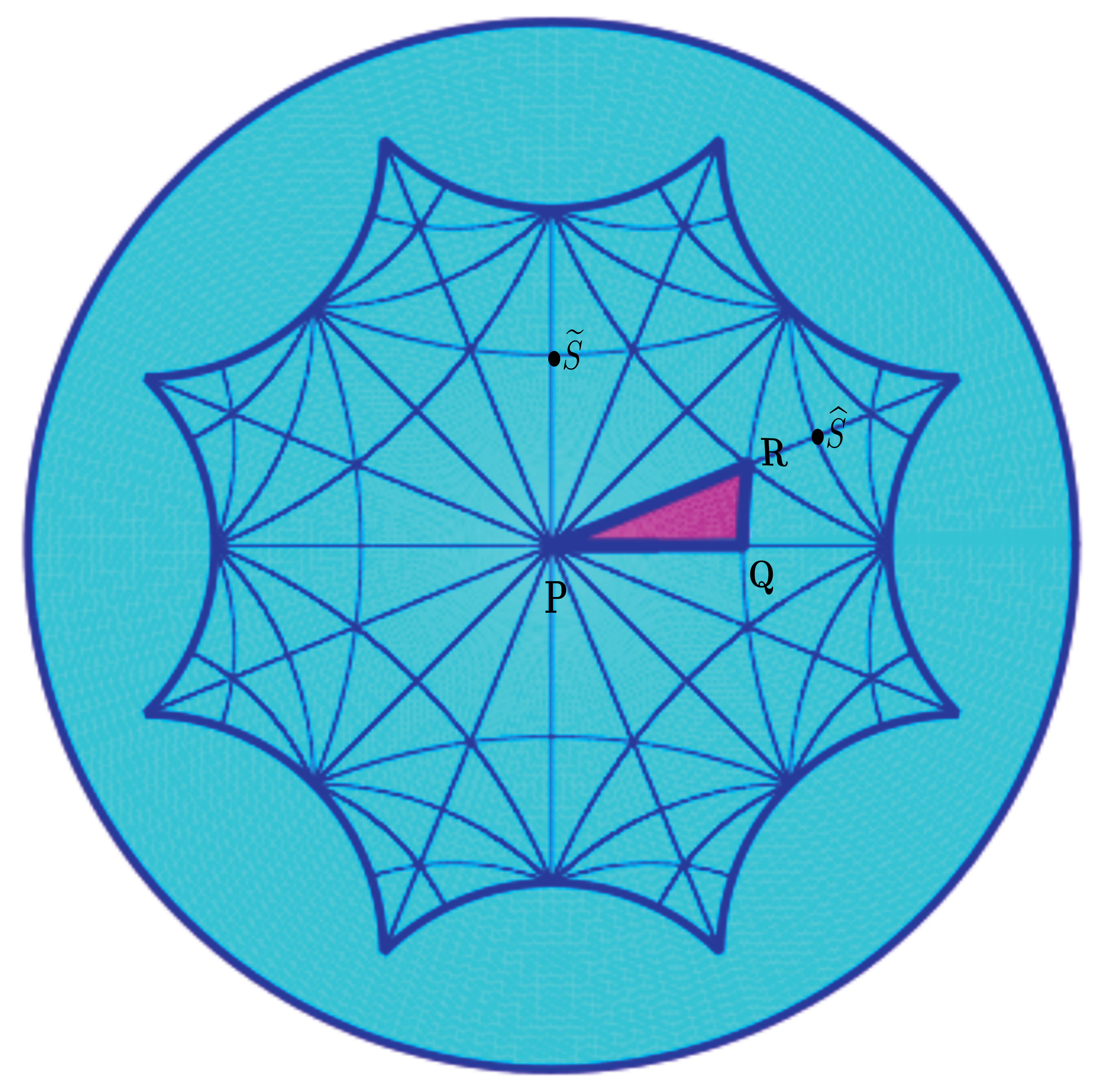}
\caption{Tesselation of the regular hyperbolic octagon with the triangle $\mathbf{T}$ of Figure \ref{fig:triangle}, colored in purple in the plot. We define two points $\widehat{S}$ and $\widetilde{S}$.  $\widehat{S}$ is the center of the rotation $\hat \sigma$ by $\pi$ (mod $\Gamma$), see text in subsection \ref{subsection:octlattice}. $\widetilde{S}$ is the center of the rotation $\tilde{\sigma}$ by $\pi$ (mod $\Gamma$), see text in subsection \ref{subsection:octlattice}.}
\label{fig:tesselation}
\end{figure}

The index 2 subgroup of orientation preserving transformations has 48 elements. In \cite{broughton:91} it has been found that $G\simeq GL(2,3)$, the group of invertible $2\times 2$ matrices over the field $\Z_3$. In summary:
\begin{prop}
The full symmetry group $\g$ of $\D/\Gamma$ is $G\cup\kappa G$ where $G\simeq GL(2,3)$ has 48 elements.
\end{prop}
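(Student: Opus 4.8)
The decomposition $\g = G \cup \kappa G$ is not really the issue here: it is the general structural fact established in Section~\ref{subs22}, where the full symmetry group of a lattice surface was shown to be the orientation-preserving rotation group $G = \Lambda/\Gamma$ together with its coset $\kappa G$ under the reflection $\kappa$. So the genuine content of the proposition is the pair of assertions $|G| = 48$ and $G \simeq GL(2,3)$. The order is already in hand from the tessellation count performed just above: the $96$ copies of the triangle $\mT$ filling $\Oct$ give $|\g| = 96$, and since $G$ is the index-$2$ orientation-preserving subgroup, $|G| = 48$. I would cross-check this against the Riemann--Hurwitz relation (the theorem of Section~\ref{subs22}) with $(\ell,m,n) = (8,2,3)$ and genus $g = 2$:
\bqs
|G| = \frac{2g-2}{1 - \left(\frac1\ell + \frac1m + \frac1n\right)} = \frac{2}{1 - \frac{23}{24}} = 48,
\eqs
which matches, and I note $|GL(2,3)| = (3^2-1)(3^2-3) = 48$ as well, so the isomorphism is at least numerically possible.

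To pin down the isomorphism type I would use the presentation coming from the geometry. By Section~\ref{subs22}, $G$ is generated by the rotations $a, b, c$ about the vertices $P, Q, R$ of $\mT$, of orders $\ell = 8$, $m = 2$, $n = 3$, subject to $a^8 = b^2 = c^3 = 1$ and $abc = 1$; equivalently $G$ is a finite quotient of the hyperbolic $(8,2,3)$ von Dyck triangle group by the torsion-free normal subgroup $\Gamma$ (the genus-$2$ surface group). The plan is therefore to exhibit an explicit $(8,2,3)$ generating triple inside $GL(2,3)$: matrices $A, B, C$ over $\mathbb{F}_3$ with $A^8 = B^2 = C^3 = I$ and $ABC = I$ that together generate all of $GL(2,3)$. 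Setting $C = (AB)^{-1}$, it suffices to produce $A$ of order $8$ and an involution $B$ with $AB$ of order $3$; a companion matrix of an irreducible quadratic over $\mathbb{F}_3$ whose root generates $\mathbb{F}_9^\times$ supplies the order-$8$ element $A$ (it represents multiplication by a generator of the cyclic group $\mathbb{F}_9^\times$). Mapping $a \mapsto A$, $b \mapsto B$, $c \mapsto C$ then defines a surjection from the abstract $(8,2,3)$ group onto $GL(2,3)$ once the triple is verified to generate; since this surjection factors through $G$ and $|G| = |GL(2,3)| = 48$, the induced map $G \to GL(2,3)$ is forced to be an isomorphism.

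The main obstacle is the identification itself, not the bookkeeping: the order together with the abstract $(8,2,3)$ datum does not by itself single out the isomorphism type, since there is more than one group of order $48$ carrying generating triples of orders $8$, $2$, $3$ with trivial product. What rules out the alternatives is precisely the production of a generating triple in $GL(2,3)$ — one must check both that the chosen $A, B, C$ have the claimed orders with $AB$ of order $3$, and that $\langle A, B\rangle$ is all of $GL(2,3)$ rather than a proper subgroup. Here the order-$8$ element is already decisive against the most natural competitor $SL(2,3)$ (order $24$, whose elements have orders only in $\{1,2,3,4,6\}$), but one still has to exclude the order-$16$ Sylow subgroup and other intermediate subgroups containing $A$. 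A secondary point worth flagging is that the quotient presentation only certifies that $G$ is \emph{some} finite $(8,2,3)$ image; matching the kernel of the constructed surjection with the specific genus-$2$ surface group $\Gamma$ of the octagon is what ties the abstract computation to this particular lattice, and for that step I would invoke Broughton's determination \cite{broughton:91} rather than recompute the surface subgroup by hand.
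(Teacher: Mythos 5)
Your proposal is correct and takes essentially the same route as the paper: the order $|G|=48$ comes from the identical tessellation count ($96$ copies of $\mT$ filling $\Oct$, consistent with your Riemann--Hurwitz cross-check for $(\ell,m,n)=(8,2,3)$, $g=2$), and the identification $G\simeq GL(2,3)$ is in the end delegated to Broughton \cite{broughton:91}, which is exactly what the paper does. The explicit $(8,2,3)$ generating triple you plan to construct is precisely what the paper exhibits immediately after the proposition via the matrices for $\rho$, $\sigma$, $\epsilon$ satisfying $\rho^8=\sigma^2=\epsilon^3=Id$ and $\rho\sigma\epsilon=Id$, and the caveat you rightly flag --- that the abstract $(8,2,3)$ datum must be tied to the specific surface group $\Gamma$ of the octagon --- is resolved by the paper in the same way you propose, namely by citation.
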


The isomorphism between $GL(2,3)$ and $G$ can be built as follows. We use the notation $\Z_3=\{0,1,2\}$ and we define:
\begin{itemize}
\item $\rho$ the rotation by $\pi/4$ centered at $P$ (mod $\Gamma$), 
\item $\sigma$ the rotation by $\pi$ centered at $Q$ (mod $\Gamma$) 
\item $\epsilon$ the rotation by $2\pi/3$ centered at $R$ (mod $\Gamma$).  
\end{itemize}
Then we can proceed with the following identification.
\begin{equation*}
\rho = \left(\begin{array}{cc}0&2\\2&2\end{array}\right),~\sigma = \left(\begin{array}{cc}2&0\\0&1\end{array}\right),~\epsilon=\left(\begin{array}{cc}2&1\\2&0\end{array}\right)
\end{equation*}
since these matrices satisfy the conditions $\rho^8=\sigma^2=\epsilon^3=Id$ and $\rho\sigma\epsilon = Id$. Note that $\rho^4=-Id$ where $Id$ is the identity matrix. We shall subsequently use this notation. 
The group $GL(2,3)$ is described e.g. in \cite{lang:93}. \\
The full symmetry group $\g$ is generated by $G$ and $\kappa$, the reflection through the real axis in $\D$. We further define 
\begin{itemize}
\item $\kappa'=\rho\kappa$ the reflection through the side $PR$ of the triangle $\tau$,
\item $\kappa''=\epsilon\kappa'=\sigma\kappa$ the reflection through the third side $QR$. 
\end{itemize}

The group $\g$ and its representations have been studied with the help of the computer algebra software GAP \cite{Gap}. Details are found in \cite{chossat-faye-etal:11}. The main result is summarized in the next proposition and table.

\begin{prop}
There are 13 conjugacy classes in $\g$, hence 13 irreducible representations of $\g$. 4 of them have dimension 1, 2 have dimension 2, 4 have dimension 3 and 3 have dimension 4. Their characters are denoted $\chi_j$, $j=1,...,13$. The character table is shown in table \ref{table:caracteres}. Moreover all these representations are real absolutely irreducible.

\begin{table}[h] 
\begin{center}
\begin{tabular}{|c|c|c|c|c|c|c|c|c|c|c|c|c|c|}\hline Class \# & 1 & 2 & 3 & 4 & 5 & 6 & 7 & 8 & 9 & 10 & 11 & 12 & 13  \\\hline Representative & $Id$ & $\rho$ & $\rho^2$ & $-Id$ & $\sigma$ & $\epsilon$ & $-\epsilon$ & $\kappa$ & $\kappa'$ & $\widehat{\sigma}\kappa$ & $\rho\widehat{\sigma}\kappa$  & $\epsilon\kappa$ & $-\epsilon\kappa$  \\\hline\hline $\chi_1$ &  1 & 1 & 1 & 1 & 1 & 1 & 1 & 1 & 1 & 1 & 1 & 1 & 1 \\\hline $\chi_2$ & 1 & -1 & 1 & 1 & -1 & 1 & 1 & 1 & -1 & -1 & 1 & 1 & 1 \\\hline $\chi_3$ & 1 & -1 & 1 & 1 & -1 & 1 & 1 & -1 & 1 & 1 & -1 & -1 & -1 \\\hline $\chi_4$ & 1 & 1 & 1 & 1 & 1 & 1 & 1 & -1 & -1 & -1 & -1 & -1 & -1 \\\hline $\chi_5$ & 2 & 0 & 2 & 2 & 0 & -1 & -1 & -2 & 0 & 0 & -2 & 1 & 1 \\\hline $\chi_6$ & 2 & 0 & 2 & 2 & 0 & -1 & -1 & 2 & 0 & 0 & 2 & -1 & -1 \\\hline $\chi_7$ & 3 & 1 & -1 & 3 & -1 & 0 & 0 & -1 & -1 & 1 & 3 & 0 & 0 \\\hline $\chi_8$ & 3 & 1 & -1 & 3 & -1 & 0 & 0 & 1 & 1 & -1 & -3 & 0 & 0 \\\hline $\chi_9$ & 3 & -1 & -1 & 3 & 1 & 0 & 0 & 1 & -1 & 1 & -3 & 0 & 0 \\\hline $\chi_{10}$ & 3 & -1 & -1 & 3 & 1 & 0 & 0 & -1 & 1 & -1 & 3 & 0 & 0 \\\hline $\chi_{11}$ & 4 & 0 & 0 & -4 & 0 & -2 & 2 & 0 & 0 & 0 & 0 & 0 & 0 \\\hline $\chi_{12}$ & 4 & 0 & 0 & -4 & 0 & 1 & -1 & 0 & 0 & 0 & 0 & $\sqrt{3}$ & $-\sqrt{3}$ \\\hline $\chi_{13}$ & 4 & 0 & 0 & -4 & 0 & 1 & -1 & 0 & 0 & 0 & 0 & $-\sqrt{3}$ & $\sqrt{3}$ \\\hline \end{tabular}
\end{center}
\caption{Irreducible characters of $\g$. Notation: $\widehat{\sigma}=\epsilon\sigma\epsilon^{-1}$ (rotation by $\pi$ centered at $\hat S$ in Fig. \ref{fig:tesselation})}\label{table:caracteres} 
\end{table}
\end{prop}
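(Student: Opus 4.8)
The plan is to exploit the group structure $\g = G\rtimes\langle\kappa\rangle$ with $G\simeq GL(2,3)$ of order $48$, so that $|\g|=96$, and to derive the whole representation theory of $\g$ from the classical character table of $GL(2,3)$ by Clifford theory for the index-two normal subgroup $G\triangleleft\g$. Concretely I would proceed in four stages: (i) enumerate the conjugacy classes of $\g$, (ii) decide which irreducibles of $G$ extend to $\g$ and which fuse under induction, (iii) fill in the values on the outer coset $\kappa G$, and (iv) settle reality and absolute irreducibility via Frobenius--Schur indicators.

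First I would establish the thirteen conjugacy classes directly from the generators. Every element lies in $G$ or in $\kappa G$, so the classes split into inner ones (contained in $G$) and outer ones (contained in $\kappa G$): the inner classes are obtained by fusing the conjugacy classes of $GL(2,3)$ under the extra conjugation by $\kappa$, while the outer ones are the reflections and rotation-reflection products classified up to conjugacy. Counting yields the thirteen representatives listed in Table~\ref{table:caracteres}, and since the number of complex irreducible representations equals the number of conjugacy classes, $\g$ has exactly $13$ irreducibles. The dimension sum $\sum d_i^2=96$ alone does not fix the dimensions among thirteen summands, which is why the Clifford analysis is needed.

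The heart of the argument is Clifford theory for $G\triangleleft\g$, for which the table of $GL(2,3)$ is classical: dimensions $1,1,2,2,2,3,3,4$, with two linear characters ($1$ and $\det$), three two-dimensionals (one inflated from the standard representation of $GL(2,3)/\{\pm\mathrm{Id}\}\cong S_4$ and two faithful ones interchanged by the $\det$-twist), two three-dimensionals inflated from $S_4$, and one faithful four-dimensional. Writing $\phi=\mathrm{Ad}_\kappa|_G$, the decisive fact is that $\phi$ is the non-trivial \emph{outer} automorphism of $GL(2,3)$, namely the duality (inverse-transpose) automorphism; this is read off from the reflection relations the paper already records, since $\kappa'=\rho\kappa$ and $\kappa''=\sigma\kappa$ being involutions force $\kappa\rho\kappa=\rho^{-1}$ and $\kappa\sigma\kappa=\sigma$, and one checks this automorphism is not realised by conjugation inside $G$. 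Because every representation inflated from $S_4$ is self-dual and the four-dimensional is unique (hence $\phi$-fixed), while duality exchanges the two faithful two-dimensionals, Clifford theory gives: each $\phi$-fixed irreducible extends to two irreducibles of $\g$ of the same dimension (the extensions differing by the sign character of $\g/G$), and the $\phi$-orbit of size two fuses into a single induced irreducible of doubled dimension. This produces $2\cdot2=4$ linear characters, $2$ of dimension $2$ together with one fused four-dimensional, $2\cdot2=4$ of dimension $3$, and $2$ extended four-dimensionals, i.e. $4,2,4,3$ irreducibles of dimensions $1,2,3,4$, with $4\cdot1+2\cdot4+4\cdot9+3\cdot16=96=|\g|$.

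It remains to fill in the entries and certify reality. A fused (induced) representation has character vanishing on the outer coset $\kappa G$ and equal to $\chi+\chi\circ\phi$ on $G$; this is the row $\chi_{11}$, which is zero on classes $8$ through $13$. For an extended representation the two lifts agree with the $G$-character on $G$ and take opposite values on $\kappa G$, and these outer values are fixed not by column orthogonality alone but by the explicit extension, which is exactly where the irrational entries $\pm\sqrt3$ of $\chi_{12},\chi_{13}$ arise. Finally, since $\g$ acts on the critical eigenspace $V\subset L^2(\D/\Gamma)$ by honest isometries, the representation is real orthogonal, so each constituent is realisable over $\R$; absolute irreducibility then follows from the Frobenius--Schur indicators $\nu_2(\chi_j)=|\g|^{-1}\sum_{g\in\g}\chi_j(g^2)=+1$ for every $j$, equivalently $\mathrm{End}_\g=\R$ on each irreducible. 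I expect the genuine obstacle to be precisely (a) certifying that $\phi$ is outer rather than inner, which pins the entire fusion pattern, and (b) producing the outer-coset values, notably the $\pm\sqrt3$ entries, which demand the Clifford square-root computation rather than any formal character identity; this bookkeeping is what makes the computer-algebra verification used in \cite{chossat-faye-etal:11} the most reliable route.
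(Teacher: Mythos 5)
Your route is genuinely different from the paper's. The paper does not prove the first part in the text at all: the thirteen classes, the dimensions and the table are delegated to the GAP computation of \cite{chossat-faye-etal:11}, and the in-text proof concerns only absolute irreducibility --- trivial in dimension $1$, obtained for $\chi_5$--$\chi_{10}$ as a corollary of the identifications in Proposition \ref{prop:identification irreps dim<3}, and for the three $4$-dimensional representations by restricting to $\mD_8=\langle\rho,\kappa\rangle$: the character values force the restriction to split as the two inequivalent absolutely irreducible $2$-dimensional representations of $\mD_8$ (characters $\pm\sqrt{2}$ at $\rho$), so a commuting matrix is block-scalar, and irreducibility of the $\g$-representation forces the two scalars to agree. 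Your Clifford-theoretic reconstruction from $G\simeq GL(2,3)$ is more self-contained and its bookkeeping is correct: the two order-$8$ classes of $G$ fuse under $\kappa$ (giving $7$ inner plus $6$ outer classes), six $\phi$-stable irreducibles of $G$ (the five inflated from $S_4$ and the unique $4$-dimensional) each extend in two ways, the two faithful $2$-dimensionals fuse into the induced representation $\chi_{11}$ vanishing on $\kappa G$, and the count $4,2,4,3$ in dimensions $1,2,3,4$ with $4+8+36+48=96$ comes out right. The outerness check you leave implicit closes concretely: $\rho$ and $\rho^{-1}$ have traces $2$ and $1$ in $\mathbb{F}_3$, hence lie in distinct $G$-classes, so no inner automorphism of $G$ inverts $\rho$. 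Where you defer the $\pm\sqrt{3}$ entries on the outer coset to explicit computation, you are on a par with the paper, which defers the whole table to computer algebra.

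The one step that does not close as written is your item (iv). Asserting $\nu_2(\chi_j)=+1$ for every $j$ is a restatement of the claim, not an argument, and your fallback --- that $\g$ acts by isometries on the critical eigenspace, so each constituent is realizable over $\R$ --- is invalid as stated: quaternionic-type irreducibles do occur inside real orthogonal representations (with even multiplicity), so membership in a real representation does not force indicator $+1$. Real-valuedness of all thirteen characters (visible in the table) only excludes complex type, and odd dimension excludes quaternionic type for the linear characters and $\chi_7$--$\chi_{10}$; the genuine issue is exactly the even-dimensional $\chi_5,\chi_6,\chi_{11},\chi_{12},\chi_{13}$. This is precisely where the paper's devices do real work --- identification with dihedral actions for $\chi_5,\chi_6$, and the $\mD_8$-restriction commutant trick for the three $4$-dimensional representations --- and to finish your proof you need either to borrow such an argument or to actually compute the Frobenius--Schur indicators, which requires the squaring map on conjugacy classes and is not supplied by any formal character identity.
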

\begin{proof} 
We only show here the second part of the proposition (see \cite{chossat-faye-etal:11} for the first part). \\
Absolute irreducibility is obvious for the one dimensional representations and it follows as a corollary form the next proposition \ref{prop:identification irreps dim<3} for representations $\chi_5$ to $\chi_{10}$. \\
It remains to prove the result for the four dimensional representations $\chi_{11}$, $\chi_{12}$ and $\chi_{13}$. For this we consider the action of the group $\mD_8$ generated by $\rho$ and $\kappa$ (the standard symmetry group of an octagon), as defined by either one of these 4D irreducible representations of $\g$. We observe from the character table that in all cases, the character of this action is $\chi(\rho)=0$, $\chi(\rho^2)=0$, $\chi(-Id)=-4$, $\chi(\rho^3)=0$ ($\rho$ and $\rho^3$ are conjugate in $\g$), and $\chi(\kappa)=\chi(\kappa')=0$. We can determine the isotypic decomposition for this action of $D_8$ from these character values. The character tables of the four one dimensional and three two dimensional irreducible representations of $D_8$ can be computed easily either by hand (see \cite{miller:72} for the method) or using a computer group algebra software like GAP. For all one dimensional characters the value at $-Id$ is $1$, while for all two dimensional characters, the value at $-Id$ is $-2$. Since $\chi(-Id)=-4$, it is therefore not possible to have one dimensional representations in this isotypic decomposition. It must therefore be the sum of two representations of dimension 2. Moreover, since $\chi(\rho)=\chi(\rho^2)=\chi(\rho^3)=0$, it can't be twice the same representation. In fact it must be the sum of the representations whose character values at $\rho$ are $\sqrt{2}$ and $-\sqrt{2}$ respectively. Now, these representations are absolutely irreducible (well-know fact which is straightforward to check), hence any $\mD_8$-equivariant matrix which commutes with this action decomposes into a direct sum of two scalar $2\times 2$ matrices $\lambda I_2$ and $\mu I_2$ where $\lambda$ and $\mu$ are real. But the representation of $\g$ is irreducible, hence $\lambda=\mu$, which proves that it is also absolutely irreducible. 
\end{proof}

The 2 and 3 dimensional irreducible representations of $\g$ must act as some particular irreducible representations of finite subgroups of $O(2)$ (planar case) and of $O(3)$ (dimension 3). The next result specifies these actions. The idea is that these representations are not faithful. In other words for every $j=5,\dots,10$ the kernel $\ker{\chi_j}=H_j\neq\{0\}$. Let $\widetilde{\chi_j}$ be the representation induced by the projection $\g\rightarrow \g/H_j$. Then $\widetilde{\chi_j}$ is isomorphic to some irreducible representation of either a dihedral group $D_n$ (symmetry group of the regular $n$-gon) in the 2-d case or the symmetry group of a platonic body in the 3-d case. We write $\mathbb{O}$ for the octahedral group (direct symmetries, {\em i.e.} rotations of a cube), and $\mT_d$ for the full symmetry group of the tetrahedron. These two groups are isomorphic and have $24$ elements. We also write $\Z_2$ for the 2-element group generated by the antipodal reflection in $\R^3$. 
\begin{prop} \label{prop:identification irreps dim<3}
\begin{itemize}
\item[(i)] $\widetilde{\chi_5}$ is isomorphic to $\mD_6$ acting in $\R^2$; 
\item[(ii)] $\widetilde{\chi_6}$ is isomorphic to $\mD_3$ acting in $\R^2$;
\item[(iii)] $\widetilde{\chi_7}$ is isomorphic to $\mathbb{O}$ acting in $\R^3$; 
\item[(iv)] $\widetilde{\chi_8}$ is isomorphic to $\mathbb{O}\times\Z_2$ acting in $\R^3$ (full symmetry group of the cube);  
\item[(v)] $\widetilde{\chi_9}$ is isomorphic to the action of  $\mT_d\times\Z_2$ in $\R^3$;
\item[(vi)] $\widetilde{\chi_{10}}$ is isomorphic to the action of $\mT_d$ in $\R^3$.
\end{itemize}
\end{prop}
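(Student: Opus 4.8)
The plan is to exploit the fact that each of $\chi_5,\dots,\chi_{10}$ is \emph{not} faithful, so that it descends to an irreducible representation $\widetilde{\chi_j}$ of the quotient $Q_j:=\g/H_j$ whose image is a genuine finite subgroup of $O(2)$ (for $j=5,6$) or $O(3)$ (for $j=7,\dots,10$). First I would read the kernel $H_j=\ker\chi_j$ directly off the character table \ref{table:caracteres}: since $\g$ is finite we may take $\chi_j$ unitary, so $g\in H_j$ if and only if $\chi_j(g)=\chi_j(Id)$, i.e.\ $H_j$ is exactly the union of the conjugacy classes on which $\chi_j$ attains its degree. Counting the sizes of these classes gives $|H_j|$, hence $|Q_j|=96/|H_j|$; one finds $|Q_5|=12$, $|Q_6|=6$, $|Q_7|=|Q_{10}|=24$ and $|Q_8|=|Q_9|=48$.

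For the two-dimensional cases (i)--(ii) the identification follows quickly. The characters $\chi_5,\chi_6$ are complex-irreducible of degree $2$, so the images $Q_5,Q_6$ are \emph{non-abelian} finite groups, of orders $12$ and $6$. A non-abelian group of order $6$ is $\mD_3\simeq S_3$. For order $12$ the candidates carrying a $2$-dimensional irreducible are $\mD_6$ and the dicyclic group $\mathrm{Dic}_3$; since $\g$ contains many reflections (e.g.\ $\kappa,\kappa'$) whose images in $Q_5$ are distinct involutions, $Q_5$ has more than one involution and is therefore $\mD_6$ rather than $\mathrm{Dic}_3$, which has a unique involution. In each case the degree-$2$ irreducible is the standard reflection representation in $\R^2$, which settles (i)--(ii) and incidentally shows these representations are real and absolutely irreducible.

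The three-dimensional cases (iii)--(vi) are the substance of the proof. First, because $\chi_7,\dots,\chi_{10}$ have odd degree and real integer character, their Frobenius--Schur indicator must be $+1$, so each is realizable over $\R$ as an irreducible orthogonal representation; the image is then a finite subgroup of $O(3)$ carrying a $3$-dimensional irreducible, i.e.\ of octahedral or tetrahedral type. I would pin down the image by two invariants. The first is whether the central inversion $-I\in O(3)$ lies in the image: an element $g$ has $\widetilde{\chi_j}(g)=-I$ exactly when $\chi_j(g)=-3$. Inspecting the table, $\chi_8$ and $\chi_9$ take the value $-3$ on the class of $\rho\widehat{\sigma}\kappa$ whereas $\chi_7,\chi_{10}$ do not; this separates the order-$48$ groups $\mathbb{O}\times\Z_2,\ \mT_d\times\Z_2$ (inversion present) from the order-$24$ groups $\mathbb{O},\ \mT_d$ (inversion absent). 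The second is the determinant character $g\mapsto\det\widetilde{\chi_j}(g)\in\{\pm1\}$, which distinguishes the octahedral rotation group $\mathbb{O}\subset SO(3)$ from the tetrahedral group $\mT_d\not\subset SO(3)$. I would read it from traces of involutions, using that a $3\times3$ orthogonal matrix of order two has trace $-1$ when it is a rotation by $\pi$ (determinant $+1$) and trace $+1$ when it is a reflection (determinant $-1$). For instance $\chi_7(\kappa')=-1$ but $\chi_{10}(\kappa')=+1$, and $\chi_8(\sigma)=-1$ but $\chi_9(\sigma)=+1$; thus $\widetilde{\chi_7}$ (resp.\ $\widetilde{\chi_8}$) maps these involutions to proper rotations and has image in $SO(3)$, while $\widetilde{\chi_{10}}$ (resp.\ $\widetilde{\chi_9}$) contains improper isometries of tetrahedral type. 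This assigns the four groups exactly as claimed.

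The hard part, and the reason the determinant character is unavoidable, is that the identifications cannot be read off from abstract group structure alone: one has $\mathbb{O}\simeq\mT_d\simeq S_4$ and $\mathbb{O}\times\Z_2\simeq\mT_d\times\Z_2\simeq S_4\times\Z_2$ as abstract groups. In particular $\chi_8$ and $\chi_9$ share the same kernel $\{Id,-Id\}$, so they are two \emph{non-isomorphic} faithful $3$-dimensional representations of the \emph{same} quotient $Q\simeq S_4\times\Z_2$; only the way each embeds in $O(3)$ — encoded by the proper-versus-improper (determinant) character computed above — tells them apart. Once the four subgroups of $O(3)$ are identified, the remaining verification that each $\widetilde{\chi_j}$ is the \emph{standard} geometric representation is routine: it is a class-by-class comparison of $\widetilde{\chi_j}$ with the known character of the defining representation of $\mathbb{O}$, $\mathbb{O}\times\Z_2$, $\mT_d$ and $\mT_d\times\Z_2$.
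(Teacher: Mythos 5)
Your overall architecture matches the frame the paper sets up (non-faithful representations descending to quotients identified with standard subgroups of $O(2)$ and $O(3)$), but your execution of (iii)--(vi) diverges from the paper's actual proof: the paper observes that $-Id$ acts trivially in all four three-dimensional representations, so they all factor through $\g/\{\pm Id\}\simeq \mathbb{O}\times\Z_2$ (using $GL(2,3)/\{\pm Id\}\simeq S_4$), and then matches them against the four three-dimensional irreducibles of the cube group by a character-table comparison done in GAP; cases (i)--(ii) are quoted from the earlier reference. Your kernel computations, the Frobenius--Schur argument for real realizability, the detection of $-I$ via the value $-3$, and your handling of (i)--(ii) and of the \emph{improper} certificates are all sound: $\chi_{10}(\kappa')=+1$ with $\kappa'^2=Id$ does force a reflection in the image, and the absence of the value $-3$ rules out $\mT_h=\mT\times\Z_2$, so $\widetilde{\chi_{10}}$ has image $\mT_d$.

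The genuine gap is in the converse direction, where you write that $\chi_7(\kappa')=-1$ (resp.\ $\chi_8(\sigma)=-1$) shows the representation ``maps these involutions to proper rotations and has image in $SO(3)$.'' The first clause is correct, the second is a non sequitur: the determinant is a homomorphism, and a single involution landing in $SO(3)$ proves nothing about the rest of the image, since $\mT_d$ also contains rotations by $\pi$ and, worse, improper elements of trace $-1$ (the order-$4$ rotary reflections), so not even a full list of involution traces settles the matter without knowing element orders. Indeed $\chi_7(\widehat{\sigma}\kappa)=1$: if that class consisted of involutions, your own trace rule would produce a reflection in the image of $\widetilde{\chi_7}$ and destroy (iii); in fact these elements have order $4$ (proper $\pi/2$-rotations), but your argument nowhere establishes this. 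For $\chi_8$ the sentence is false outright: you correctly noted earlier that its image contains $-I$, and $\chi_8(\kappa)=1$ shows $\kappa$ maps to a reflection, so the image is $O_h$, not a subgroup of $SO(3)$. Moreover, as subgroups of $O(3)$ one has $\mathbb{O}\times\Z_2=\mT_d\times\Z_2=O_h$, and the two faithful three-dimensional irreducibles of $S_4\times\Z_2$ are even interchanged by an automorphism of that group, so no single invariant such as $\det$ evaluated at $\sigma$ can separate (iv) from (v); the assignment only acquires meaning once a specific isomorphism of the quotient with $\mathbb{O}\times\Z_2$ is fixed and full character rows are compared --- which is precisely what the paper's (GAP-assisted) comparison does. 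The cheapest repair of your proof is elimination: having pinned $\widetilde{\chi_{10}}$ as the $\mT_d$-type representation and $\chi_8,\chi_9$ as the two containing $-I$, the quotient $\g/\{\pm Id\}\simeq S_4\times\Z_2$ has exactly four three-dimensional irreducibles, so $\widetilde{\chi_7}$ is forced to be the remaining one, with image $\mathbb{O}\subset SO(3)$.
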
 
\begin{proof}
The proof of this proposition is given in \cite{chossat-faye-etal:11}. We show a slightly different proof for the cases (iii) to (vi). We see from the character table that in $\chi_7$ to $\chi_{10}$, the element $-Id$ acts trivially. Let us write $C_2=\{Id,-Id\}$. Then $GL(2,3)/C_2$ is a 24 element group which is known to be isomorphic to the permutation group $S_4$, hence to $\mathbb{O}$. Therefore $\g/C_2\simeq \mathbb{O}\times \Z_2$. It follows that the four irreducible representations $\widetilde{\chi_7}$ to $\widetilde{\chi_{10}}$ must be isomorphic to the four irreducible representations of dimension three of the symmetry group of the cube $\mathbb{O}\times \Z_2$. Comparison of the character tables of these representations leads to the result (one can make use of GAP to obtain these tables).
\end{proof} 
The actions listed in the proposition are absolutely irreducible. It follows that the representations $\chi_5$ to $\chi_{10}$ are also absolutely irreducible.

\subsubsection{Classification of H-planforms and bifurcation diagrams}\label{subsubsec:octagonal classification}

It follows from Proposition \ref{prop:identification irreps dim<3} that the generic bifurcation diagrams for all representations $\chi_1$ to $\chi_{10}$ are classical. These diagrams are well-known for $\mD_3$ and $\mD_6$, \cite{golubitsky-stewart-etal:88} and they are also known for the 3-dimensional cases. \\
Let us for example consider cases (iii) and (iv). By Theorem \ref{thm:equivariant branching lemma} each axis of symmetry of $\mathbb{O}$ and $\mathbb{O}\times\Z_2$ gives rise to a branch of equilibria with that symmetry. These are the symmetry axes of a cube and they here of three types: 4 fold symmetry (passing through the centers of opposite faces), 3 fold symmetry (passing through opposite vertices) and 2 fold symmetry (through opposite edges). Moreover each symmetry axis is non trivially mapped to itself by some transformation in the group. It follows that there are three different types of pitchfork branches of equilibria in these cases. For the stability and (generic) non existence of other branches we refer to \cite{melbourne:86}. \\
What remains to do in these cases is to actually compute the bifurcated states, which is quite involved since the eigenfunctions of $\Delta_{\D/\Gamma}$ can't be expressed in an explicit manner. This part will be discussed in the next subsection. 

In the rest of this paragraph we study the case of 4 dimensional representations $\chi_{11}$ to $\chi_{13}$, which do not belong to the list in Proposition  \ref{prop:identification irreps dim<3}. This was studied in details in \cite{faye-chossat:11}, to which we refer for proofs. Here we outline the main results and sketch the proofs. 

First note that the dimension of the fixed point subspaces of isotropy subgroups for a representation $\chi_j$ can be determined from the character table \ref{table:caracteres} thanks to the formula \cite{golubitsky-stewart-etal:88,chossat-lauterbach:00}
\begin{equation} \label{eq:formuletrace}
{\rm dim}(V^{H})=\frac{1}{|H|} \sum_{h\in H}\chi_j(h)
\end{equation}
In order to find the H-planforms we therefore need to find those isotropy subgroups such that $\frac{1}{|H|} \sum_{h\in H}\chi_j(h) =1$. However it is important also to determine all the other (classes of) isotropy subgroups because (i) branches in higher dimensional fixed point subspaces should not be a priori excluded and (ii) this is useful for the study of the stability conditions and local dynamics. In \cite{chossat-faye-etal:11} all subgroups of $\g$ have been determined (using GAP), a necessary step to compute the istropy subgroups using formula \ref{eq:formuletrace}. The list is cumbersome, so we simply list the isotropy subgroups for the representations $\chi_{11}$ to $\chi_{13}$ and refer to \cite{chossat-faye-etal:11} for the details. We shall use the notation $\widetilde{\sigma}=\rho^2\sigma\rho^{-2}$, which is the rotation by $\pi$ around point $\tilde S$ in Figure \ref{fig:tesselation}.
\begin{defi} \label{def:maximalisotropy}
\begin{eqnarray*}
\widetilde{\mC}_{2\kappa} &=& \langle\sigma,\kappa\rangle = \{Id, \sigma,\kappa,\kappa''\} \\
\widetilde{\mC}'_{2\kappa} &=&  \langle\widetilde\sigma,\kappa\rangle = \{Id, \tilde\sigma,\kappa,-\rho^2\kappa''\rho^{-2}\} \\
\widetilde{\mC}_{3\kappa'} &=&  \langle\epsilon,\kappa'\rangle = \{Id, \epsilon,\epsilon^2,\kappa',\epsilon\kappa'\epsilon^2,\epsilon^2\kappa'\epsilon\} \\
\widetilde{\mD}_3 &=&  \langle\widetilde\sigma,\epsilon\rangle = \{Id, \epsilon,\epsilon^2,\widetilde\sigma,\epsilon\widetilde\sigma\epsilon^2,\epsilon^2\widetilde\sigma\epsilon\}
\end{eqnarray*}
\end{defi}
The meaning of these groups for the octagonal lattice is not straightforward and will be explained in the next subsection.

\noindent {\em 1. The $\chi_{12}$ and $\chi_{13}$ cases.}

The (conjugacy classes of) isotropy subgroups are ordered by set inclusion and we call {\em lattice of isotropy types} the corresponding graph.

\begin{lem}
The lattices of isotropy types for the representations $\chi_{12}$ and $\chi_{13}$ are identical and are shown in Figure \ref{fig:isotropies_chi12}. The numbers in parentheses indicate the dimension of corresponding fixed-point subspaces.
\begin{figure}[h]
 \centering
 \includegraphics[width=7cm]{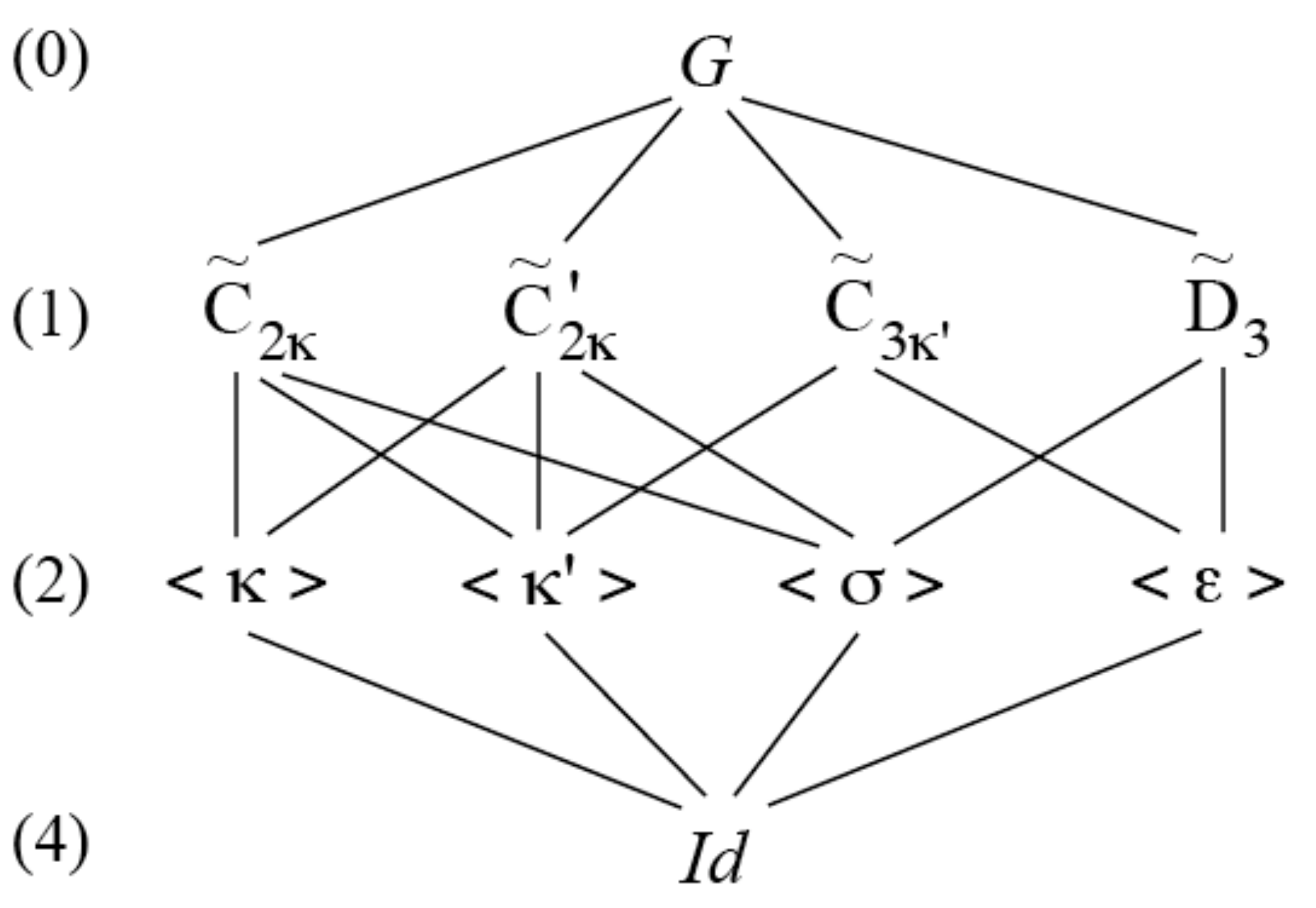}
 \caption{The lattice of isotropy types for the representations $\chi_{12}$ and $\chi_{13}$.}
 \label{fig:isotropies_chi12}
\end{figure}
\end{lem}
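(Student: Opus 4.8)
The plan is to reduce the whole statement to one observation about the character table. From Table \ref{table:caracteres}, $\chi_{12}$ and $\chi_{13}$ take identical values on every conjugacy class except classes $12$ and $13$ (the classes of $\epsilon\kappa$ and $-\epsilon\kappa$), on which they take the opposite irrational values $\pm\sqrt{3}$. So the first step is to record the pointwise difference: $\chi_{12}-\chi_{13}$ vanishes outside those two classes, equals $+2\sqrt{3}$ on the class of $\epsilon\kappa$ and $-2\sqrt{3}$ on the class of $-\epsilon\kappa$. As a conceptual aside one checks directly from the table that $\chi_{13}=\chi_{12}\otimes\chi_3$, which explains why the two characters are so close, but this relation is not actually needed for the argument below.

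The central step is to show that the two representations assign the \emph{same} fixed-point dimension to \emph{every} subgroup $H\le\g$, not merely to the candidate isotropy subgroups. Applying the trace formula \eqref{eq:formuletrace} to both characters and subtracting gives
\bqs
\dim V_{\chi_{12}}^{H}-\dim V_{\chi_{13}}^{H}=\frac{2\sqrt{3}}{|H|}\bigl(|H\cap C_{12}|-|H\cap C_{13}|\bigr),
\eqs
where $C_{12},C_{13}$ denote conjugacy classes $12$ and $13$. The left-hand side is an integer (a difference of dimensions), whereas the right-hand side is a rational multiple of $\sqrt{3}$; both can hold simultaneously only if each side vanishes. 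Hence $\dim V_{\chi_{12}}^{H}=\dim V_{\chi_{13}}^{H}$ for every subgroup $H$.

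It then remains to argue that the entire lattice of isotropy types is encoded by the function $H\mapsto\dim V^{H}$. For this I would use the characterization that $H$ is an isotropy subgroup if and only if $H$ equals the pointwise stabilizer $G_{(V^{H})}$ of its own fixed-point subspace, together with the identity $G_{(V^{H})}=\{\,g\in\g\mid \dim V^{\langle H,g\rangle}=\dim V^{H}\,\}$, which follows from $V^{\langle H,g\rangle}=V^{H}\cap V^{\langle g\rangle}$. Both the set of isotropy subgroups and the partial order by inclusion are therefore determined by the dimension function alone, which we have just shown to agree for $\chi_{12}$ and $\chi_{13}$. Passing to $\g$-conjugacy (a purely group-theoretic operation, independent of the representation) then yields literally the same graph of isotropy types carrying the same dimension labels, which is Figure \ref{fig:isotropies_chi12}.

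The main obstacle is not the coincidence of the two lattices — the irrationality trick dispatches that cleanly — but rather producing the explicit lattice in the first place: one must enumerate the relevant isotropy subgroups (among them those of Definition \ref{def:maximalisotropy}) and evaluate their fixed-point dimensions through \eqref{eq:formuletrace}. This is the step that genuinely relies on the exhaustive list of subgroups of $\g$ obtained with GAP in \cite{chossat-faye-etal:11}; it is routine but unavoidable, and it is what fixes the actual shape of the diagram.
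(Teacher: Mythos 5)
Your proposal is correct, and its key step takes a genuinely different route from the paper. The paper gives no self-contained argument for this lemma: it simply states it and refers to \cite{faye-chossat:11} and \cite{chossat-faye-etal:11}, where the lattice is obtained by brute force --- all subgroups of $\g$ are enumerated with GAP, the trace formula \eqref{eq:formuletrace} is evaluated separately for $\chi_{12}$ and for $\chi_{13}$, and the coincidence of the two lattices is observed a posteriori. Your irrationality argument replaces that second computation with a one-line conceptual reason: since $\chi_{12}-\chi_{13}$ is supported on the two classes where the values are $\pm\sqrt{3}$, the difference of fixed-point dimensions is simultaneously an integer and a rational multiple of $\sqrt{3}$, hence zero, for \emph{every} subgroup $H$. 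Combined with your (correct) characterization that the dimension function $H\mapsto\dim V^H$ determines the set of isotropy subgroups --- $H$ is an isotropy subgroup iff $H=G_{(V^H)}$, which holds because $\g$ is finite, so a generic point of $V^H$ avoids the finitely many proper subspaces $V^H\cap V^{\langle g\rangle}$ --- and with the fact that inclusion and conjugacy are representation-independent, this proves ``identical'' without recomputation and would survive any revision of the explicit subgroup list. Your caveat about the tensor identity $\chi_{13}=\chi_{12}\otimes\chi_3$ is also well placed: since $\chi_3$ restricts nontrivially to some of the relevant subgroups (e.g., those containing $\kappa$ or $\kappa'$), the relation alone would compare $\dim V_{13}^H$ to the multiplicity of $\chi_3|_H$ rather than of the trivial character, so it could not by itself yield equal fixed-point dimensions, and it rightly stays an aside. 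What your route does not (and cannot) avoid --- as you say explicitly --- is the enumeration of subgroups of $\g$ and the evaluation of \eqref{eq:formuletrace} needed to produce the actual shape of Figure \ref{fig:isotropies_chi12}; on that point you and the paper rely on exactly the same GAP computation from \cite{chossat-faye-etal:11}, so your argument upgrades the paper's ``identical'' claim from an empirical coincidence of two computations to a structural fact, while leaving the computational content of the diagram itself untouched.
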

\begin{rmk}
(i) The difference between $\chi_{12}$ and $\chi_{13}$ is quite subtle as we see from the character table and their geometrical propreties are similar as well as their bifurcation diagrams for Equation (\ref{eq:bifurcation}). \\
(ii) We can see on the lattice of isotropy types that each fixed-point plane contains at least 2 or 3 axes of symmetry. The exact number is 4. For example $\text{Fix}(\langle\sigma\rangle)$ contains one copy of $\text{Fix}(\widetilde{\mC}_{2\kappa})$, one copy of $\text{Fix}(\widetilde{\mC}'_{2\kappa})$ and two copies of $\text{Fix}(\widetilde{\mD}_3)$.
\end{rmk}
Thanks to Theorem \ref{thm:equivariant branching lemma} each axis of symmetry contains bifurcated solutions, and moreover the branches are pitchfork because one can easily check that for each such isotropy subgroup $H$ (as defined in \ref{def:maximalisotropy}), $N(H)/H\simeq\Z_2$. In order to study the stability of these solutions and to look for other equilibria we need to know the Taylor expansion of the bifurcation equation \eqref{eq:bifurcation} up to some order large enough to remove degeneracies. It turns out that in this case cubic order in $X$ is enough. However here again the calculations are cumbersome and we refer the reader to \cite{faye-chossat:11} for details. \\
We first need to choose suitable coordinates for $X\in\R^4$ and we do so for the representation $\chi_{12}$ (we know similar results hold for $\chi_{13}$. Let $(z_1,\bar z_1,z_2,\bar z_2)$ be the coordinates which diagonalize the matrix of the 8-fold symmetry $\rho$ in $\R^4$. The equation \eqref{eq:bifurcation} truncated at order 3 in these coordinates reads
\bqq
\dot z_1=\left[\lambda + a(|z_1 |^2+|z_2 |^2)\right] z_1+ b\left[ \sqrt{3} \left(3 z_1^2 +\bar z_2^2\right)\bar z_1-{\mathbf{i}} \left(z_2^2 +3 \bar z_1^2\right) z_2  \right]
\label{eq:f1cubic}
\eqq
\bqq
\dot z_2=\left[\lambda + a(|z_1 |^2+|z_2 |^2)\right] z_2+b\left[\sqrt{3} \left(3 z_2^2 +\bar z_1 ^2\right)\bar z_2+{\mathbf{i}} \left(z_1^2 +3\bar z_2^2\right)z_1 \right]
\label{eq:f2cubic}
\eqq
and their complex conjugates, where $a$, $b$ are real coefficients. These equations can be written as a gradient system. \\
We can now state the main result for representations $\chi_{12}$ and $\chi_{13}$.
\begin{thm}\label{thm:bifchi12}
 Provided that $(a,b)\in \mathcal{P}=\{ (a,b)\in\R^2 ~|~  3a+2b\sqrt{3}<0 \text{ and } 3a+10b\sqrt{3}<0\}$, the following holds for Equations \eqref{eq:f1cubic}-\eqref{eq:f2cubic}.
\begin{itemize}
\item[(i)]  No solution bifurcates in the principal strata of the planes of symmetry.
\item[(ii)] The branches of equilibria with maximal isotropy are pitchfork and supercritical.
\item[(iii)] If $b>0$ (resp. $b<0$), the equilibria with isotropy type $\widetilde{\mC}_{3\kappa'}$ (resp. $\widetilde{\mD}_3$) are stable in $\R^4$. Branches with isotropy $\widetilde{\mC}_{2\kappa}$ and $\widetilde{\mC}'_{2\kappa}$ are always saddles.
\end{itemize} 
\end{thm}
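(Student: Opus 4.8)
The plan is to treat Equations \eqref{eq:f1cubic}--\eqref{eq:f2cubic} as a $\g$-equivariant vector field $F$ on $\R^4\simeq\C^2$ and to exploit the fact, stated just above the theorem, that $F$ derives from a potential. Writing $F(X,\lambda)=-\nabla\mathcal{V}(X,\lambda)$ with $X=(z_1,z_2)$, every equilibrium is a critical point of $\mathcal{V}$ and, crucially, the Jacobian $D_XF$ is symmetric; hence all eigenvalues are real and stability reduces to a signature computation. First I would record, from the explicit action of the generators $\rho,\sigma,\epsilon,\kappa$ in the coordinates $(z_1,\bar z_1,z_2,\bar z_2)$ that diagonalize $\rho$ (as worked out in \cite{faye-chossat:11}), a generator $v_H$ of each one-dimensional fixed-point axis $\mathrm{Fix}(H)$ for $H\in\{\widetilde{\mC}_{2\kappa},\widetilde{\mC}'_{2\kappa},\widetilde{\mC}_{3\kappa'},\widetilde{\mD}_3\}$, together with a basis of $\R^4$ adapted to the isotypic decomposition of the transverse $H$-action.

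For part (ii), I would substitute $X=t\,v_H$, $t\in\R$, into the system. Since fixed-point subspaces are invariant under equivariant maps (the proposition preceding Theorem \ref{thm:equivariant branching lemma}), each axis is flow-invariant, so this yields a scalar amplitude equation $\dot t=t\,[\lambda+c_H\,t^2]$ with $c_H$ a combination of $a,b$ read directly off the cubic terms. Theorem \ref{thm:equivariant branching lemma} gives the bifurcating branch, and since $N(H)/H\simeq\Z_2$ for each of the four subgroups the branch is a pitchfork; supercriticality is then equivalent to $c_H<0$. The expectation is that the four coefficients collapse to two values, proportional to $3a+2b\sqrt{3}$ for $\widetilde{\mC}_{3\kappa'}$ and $\widetilde{\mD}_3$ and to $3a+10b\sqrt{3}$ for $\widetilde{\mC}_{2\kappa}$ and $\widetilde{\mC}'_{2\kappa}$, so that the two defining inequalities of $\mathcal{P}$ are exactly the conditions that all four branches be supercritical. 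For part (i), I would restrict $F$ to each two-dimensional symmetry plane $\mathrm{Fix}(\langle\sigma\rangle)$, $\mathrm{Fix}(\langle\widetilde\sigma\rangle)$, $\mathrm{Fix}(\langle\epsilon\rangle)$, parametrize it by the residual dihedral symmetry and invoke the standard planar $\mathbf{D}_n$-equivariant analysis \cite{golubitsky-stewart-etal:88}: at cubic order the only critical points are $0$ and the axial ones, provided a "mixed-mode" coefficient — again an explicit combination of $a,b$ — does not vanish, which I would verify holds throughout $\mathcal{P}$; hence no equilibrium sits in the principal stratum.

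For part (iii), the stability of each axial branch is governed by the spectrum of the symmetric Jacobian $D_XF$ at $X=t\,v_H$ with $t^2=-\lambda/c_H$. The eigenvalue along the axis equals $\lambda+3c_H t^2=-2\lambda<0$, so it is always stabilizing on the (supercritical) branch; the three transverse eigenvalues split according to the isotypic components of the transverse $H$-action and evaluate to affine functions of $\lambda$ with slopes built from $a$ and $b$. I would compute these and show: (a) for $\widetilde{\mC}_{3\kappa'}$ all transverse eigenvalues are negative precisely when $b>0$ and for $\widetilde{\mD}_3$ precisely when $b<0$, their transverse signatures being exchanged by the linear dependence on $b$, which yields the asserted stability; and (b) for $\widetilde{\mC}_{2\kappa}$ and $\widetilde{\mC}'_{2\kappa}$ at least one transverse eigenvalue stays positive for either sign of $b$ in $\mathcal{P}$, so these branches are always saddles.

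The main obstacle is not conceptual but the equivariant bookkeeping: correctly writing down the generators $v_H$ of the four axes and the adapted transverse bases from the explicit matrices of $\rho,\sigma,\epsilon,\kappa$, since each axis involves a genuine phase relation between $z_1$ and $z_2$ dictated by the group rather than by inspection of the equations (naive guesses such as $z_2=0$ or $z_2=\mathbf{i}z_1$ fail to be invariant). Once the axes and the transverse isotypic splitting are in hand, the gradient (hence symmetric) structure makes the eigenvalue computations and the sign analysis in $a,b$ routine, and the one delicate point that remains is verifying that the $b$-dependence of the transverse spectrum flips the stability of $\widetilde{\mC}_{3\kappa'}$ against $\widetilde{\mD}_3$ exactly at $b=0$.
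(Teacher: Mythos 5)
Your overall architecture coincides with the paper's: the paper itself only sketches the argument and defers the computations to \cite{faye-chossat:11}, and that proof proceeds exactly as you propose (axes read off the explicit group action, pitchforks from the Equivariant Branching Lemma together with $N(H)/H\simeq\Z_2$, gradient structure making the Jacobian symmetric, cubic bookkeeping for supercriticality and stability, restriction to the planes of symmetry for (i)). However, two of the computations you anticipate come out differently, and in both cases the discrepancy leaves a real gap in your logic. First, the four cubic branch coefficients do \emph{not} collapse to two values with the pairing you predict: normalizing by $t^2=\|X\|^2$, the coefficient along $\mathrm{Fix}(\widetilde{\mD}_3)$ is $\tfrac13(3a+2\sqrt3\,b)$, along $\mathrm{Fix}(\widetilde{\mC}_{3\kappa'})$ it is $\tfrac13(3a+10\sqrt3\,b)$, and along $\mathrm{Fix}(\widetilde{\mC}_{2\kappa})$ and $\mathrm{Fix}(\widetilde{\mC}'_{2\kappa})$ one obtains the two distinct intermediate values $a+(2\sqrt3\pm 2)b$; this can be checked by restricting \eqref{eq:f1cubic}--\eqref{eq:f2cubic} to a reflection plane such as $\{z_1=x\,e^{i\pi/8},\,z_2=y\,e^{-i\pi/8}\}$, whose invariant lines are $y=\pm x$ and $y=(-\sqrt3\pm\sqrt2)x$. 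Hence your inference ``two values, therefore the two inequalities defining $\mathcal{P}$ are exactly supercriticality of all four branches'' does not go through as stated; the correct link requires the extra observation that $\tfrac{2\sqrt3}{3}<2\sqrt3-2<2\sqrt3+2<\tfrac{10\sqrt3}{3}$ and that $a+\beta b$ is affine in $\beta$, so negativity of the two extreme coefficients (which is precisely $\mathcal{P}$) forces negativity of the intermediate ones. The misassignment also touches (iii): since $\widetilde{\mC}_{3\kappa'}$ and $\widetilde{\mD}_3$ have \emph{different} branch amplitudes, the exchange of stability at $b=0$ cannot be organized around a shared radial coefficient; it comes from the angular eigenvalues, which are positive multiples of $-b$ at the $\widetilde{\mC}_{3\kappa'}$ equilibria and of $+b$ at the $\widetilde{\mD}_3$ equilibria.

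Second, part (i) cannot be delegated to the ``standard planar $\mathbf{D}_n$-equivariant analysis'': in the planes of type $\mathrm{Fix}(\langle\sigma\rangle)$ and $\mathrm{Fix}(\langle\kappa'\rangle)$ the residual in-plane action has too few reflections to account for all four axes. As the paper's own remark records, $\mathrm{Fix}(\langle\sigma\rangle)$ contains \emph{two} copies of $\mathrm{Fix}(\widetilde{\mD}_3)$, and these are invariant lines that are not reflection axes of the restricted system: in a suitable angular variable $\psi$ the in-plane tangential component is proportional to $b\sin\psi\,(\sqrt3\cos\psi+1)$, and the $\widetilde{\mD}_3$ copies sit at $\cos\psi=-1/\sqrt3$. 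A naive dihedral argument would either miss these lines or misclassify the corresponding equilibria as principal-stratum solutions, contradicting (i). What is actually needed (and what the referenced proof does) is to compute the tangential component of the cubic field in each plane and verify that its zero set is exactly the union of the four group-theoretic axes. Note finally that the nondegeneracy condition for this step is simply $b\neq 0$, and it is \emph{not} verifiable ``throughout $\mathcal{P}$'' as you propose: $\mathcal{P}$ contains the ray $\{b=0,\ a<0\}$, on which the cubic truncation is $O(4)$-degenerate; statements (i) and (iii) implicitly assume $b\neq 0$.
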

We didn't prove the non existence of bifurcated equilibria with trivial isotropy, however there is a strong evidence that such solutions are generically forbidden for equations \eqref{eq:f1cubic} and \eqref{eq:f2cubic}. \\
We now turn to the $\chi_{11}$ case.

2. {\em The $\chi_{11}$ case.}

\begin{lem}
The lattice of isotropy types for the representation $\chi_{11}$ is shown in Figure \ref{fig:isotropies_chi11}. The numbers in parentheses indicate the dimension of corresponding fixed-point subspaces.
\begin{figure}[h]
 \centering
 \includegraphics[width=5cm]{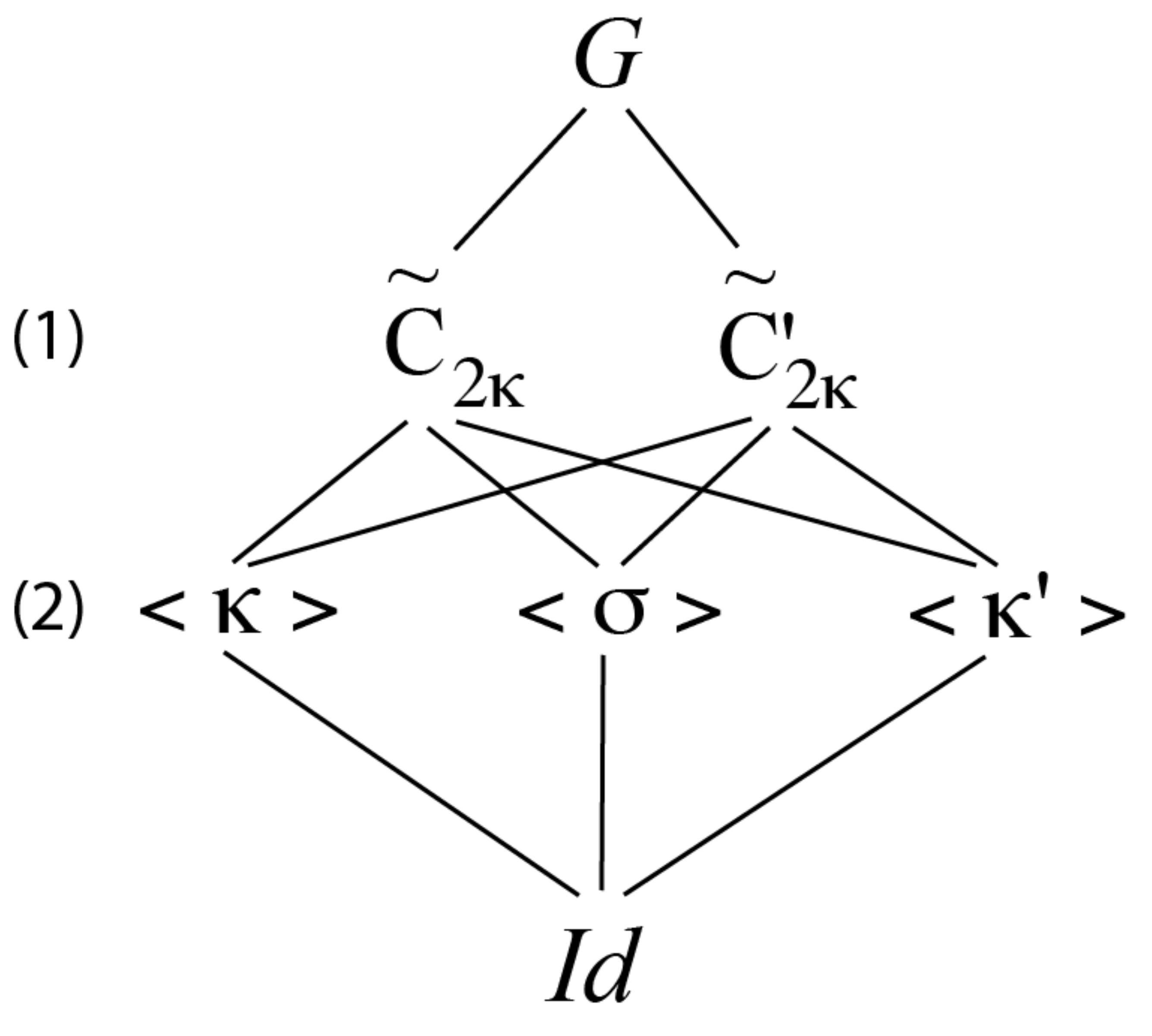}
 \caption{The lattice of isotropy types for the representation $\chi_{11}$.}
 \label{fig:isotropies_chi11}
\end{figure}
\end{lem}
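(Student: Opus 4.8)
The plan is to read off the entire isotropy lattice from the character row of $\chi_{11}$ in Table~\ref{table:caracteres}, together with the complete list of conjugacy classes of subgroups of $\g$ (obtained via GAP in \cite{chossat-faye-etal:11}), using the trace formula \eqref{eq:formuletrace}. The first and most useful reduction comes from the entry $\chi_{11}(-Id)=-4=-\dim V$: this means $-Id$ acts as $-\mathrm{Id}_V$, so $V^{\langle -Id\rangle}=\{0\}$ and, a fortiori, $V^H=\{0\}$ for every subgroup $H$ containing the central element $-Id$. Consequently every isotropy subgroup of $\chi_{11}$ must avoid $-Id$ and therefore injects into the quotient $\g/\{\pm Id\}\simeq \mathbb{O}\times\Z_2$; this drastically shrinks the list of candidate subgroups one has to test.

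Next I would compute $\dim V^H=\frac{1}{|H|}\sum_{h\in H}\chi_{11}(h)$ for the candidate maximal subgroups, starting with those of Definition~\ref{def:maximalisotropy}. The relevant character values are $\chi_{11}=4$ on $Id$, $\chi_{11}=0$ on all rotations by $\pi$ (class of $\sigma$) and on every orientation-reversing element, and $\chi_{11}=-2$ on $\epsilon,\epsilon^2$. Hence for the order-four groups $\widetilde{\mC}_{2\kappa}=\{Id,\sigma,\kappa,\kappa''\}$ and $\widetilde{\mC}'_{2\kappa}$ one gets $\frac14(4+0+0+0)=1$, so these are axes of symmetry, whereas for the two order-six groups $\widetilde{\mC}_{3\kappa'}$ and $\widetilde{\mD}_3$ one finds $\frac16(4-2-2+0+0+0)=0$, so they fix only the origin. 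This is exactly the reverse of the situation for $\chi_{12}$ and $\chi_{13}$, and it already signals that the $\chi_{11}$ lattice is genuinely different. I would likewise check the order-two subgroups generated by a $\pi$-rotation or a reflection, each of which gives $\frac12(4+0)=2$, producing the fixed-point planes; since no involution has character $2$, there is no three-dimensional fixed space, so nothing sits strictly between the planes and the principal (trivial) isotropy $V^{\{Id\}}=V$.

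Having the fixed-point dimensions, the remaining conceptual step is to separate genuine isotropy subgroups from subgroups that merely fix some nonzero vector. For this I would use the characterization that $H$ is an isotropy subgroup precisely when it equals the pointwise stabilizer of $V^H$, equivalently when $V^H$ is not covered by the union of the fixed spaces $V^K$ of the proper supergroups $K\supsetneq H$; a generic point of such a $V^H$ then has stabilizer exactly $H$. Applying this to the data above retains the order-four groups as the dimension-one axes, retains the order-two groups as dimension-two planes, and retains the trivial subgroup as the principal isotropy type, while discarding the two order-six groups. Finally, I would determine all containments up to conjugacy in $\g$ — which order-two plane types lie inside which order-four axis types, and with what multiplicity — and assemble them into the Hasse diagram of Figure~\ref{fig:isotropies_chi11}, annotating each node with the dimensions just computed.

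The main obstacle is not any single computation but the combinatorial bookkeeping over the $96$-element group $\g$: enumerating all conjugacy classes of subgroups, testing each against the pointwise-stabilizer criterion, and, most delicately, deciding the inclusions modulo conjugacy (two abstractly isomorphic subgroups may be $\g$-conjugate or not, which changes the edges and the multiplicities in the lattice). This is precisely the part that is carried out with computer algebra in \cite{faye-chossat:11,chossat-faye-etal:11}; the character-theoretic shortcuts above — the $-Id$ reduction and the trace formula — serve mainly to pin down the fixed-point dimensions and to make the final lattice checkable by hand.
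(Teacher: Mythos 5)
Your proposal is correct and takes essentially the same route as the paper: the lemma is established there by applying the trace formula \eqref{eq:formuletrace} to the complete GAP-generated classification of subgroups of $\g$ from \cite{chossat-faye-etal:11,faye-chossat:11}, then deciding which subgroups are genuine isotropy subgroups and assembling their conjugacy-class inclusions into the lattice, exactly as you outline. Your character-theoretic shortcuts (the $-Id$ reduction, the computed fixed-point dimensions for the subgroups of Definition \ref{def:maximalisotropy}, and the exclusion of three-dimensional fixed-point subspaces) are all consistent with Table \ref{table:caracteres} and simply streamline the same computation, whose combinatorial bookkeeping you correctly identify as the part delegated to computer algebra in the cited references.
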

We see that this representation is somewhat different from the other 4 dimensional representations of $\g$, and indeed it leads to quite different bifurcation diagrams. In particular il allows for the bifurcation of equilibria with isotropy $\sigma$ or $\kappa'$ under generic conditions on the coefficients of terms of order 5 in the Taylor expansion of the bifurcation equation. Indeed there is no term with even order and the only term of order 3 is the "radial" one: $\|X\|^2X$, which can't provide pattern selection. The next order is 5 and one can show that there are 4 independant $\g$-equivariant terms  at this order. Details are provided in \cite{faye-chossat:11}. Among these terms, three are gradients while one is non gradient. It was shown in \cite{faye-chossat:11} that when the latter is non zero, a robust heteroclinic cycle can bifurcate from the trivial state. However this phenomenon, which is interesting in the context of non gradient systems, does not occur for the Swift-Hohenberg equation.

As before let $(z_1,\bar z_1,z_2,\bar z_2)$ be the coordinates which diagonalize the matrix of the 8-fold symmetry $\rho$ in $\R^4$. The equation \eqref{eq:bifurcation} truncated at order 5 and with no non gradient terms reads in these coordinates
\bqq
\dot z_1=\left[\lambda + A(|z_1 |^2+|z_2 |^2)\right] z_1+ \frac{\partial}{\partial \bar z_1}Q(z_1,\bar z_1,z_2,\bar z_2)
\label{eq:f1quintic}
\eqq
\bqq
\dot z_2=\left[\lambda + A(|z_1 |^2+|z_2 |^2)\right] z_2+\frac{\partial}{\partial \bar z_2}Q(z_1,\bar z_1,z_2,\bar z_2)
\label{eq:f2quintic}
\eqq 
where $Q$ is defined as follows (C.C. means "complex conjugate"):
$$
Q(z_1,\bar z_1,z_2,\bar z_2)=a(|z_1 |^2+|z_2 |^2)^3 + b(z_1^4\bar z_1\bar z_2 + 2z_1\bar z_1^2 z_2^3 - \bar z_1 z_2^4\bar z_2 - 2z_1^3 z_2\bar z_2^2 + C.C.) + d(z_1 z_2^5-z_1^5 z_2 + C.C.).
$$
Compared with the equations in \cite{faye-chossat:11}, we see that a term with coefficient $c$ in the latter does not appear here. This is due to the gradient structure and the correspondance between the two equations is obtained by setting $c=-2b$ in \cite{faye-chossat:11}.

In these coordinates the equations of the fixed-point subspaces are 
\begin{itemize}
\item[-] $\text{Fix}(\sigma)$:  $z_2=i(1+\sqrt{2})z_1$,
\item[-] $ \text{Fix}(\kappa)$: $z_1=i \bar z_1 \text{ and } z_2 =-i \bar z_2$,
\item[-] $ \text{Fix}(\kappa')$: $z_1=\frac{\sqrt{2}}{2}(i -1)\bar z_1 \text{ and } z_2=\frac{\sqrt{2}}{2}(i +1)\bar z_2$,
\end{itemize}
and the intersections of these planes are the fixed point axes for the isotropy subgroups $\widetilde{\mC}_{2\kappa}$ and $\widetilde{\mC}'_{2\kappa}$. Using these informations one can show the following
\begin{thm}
Provided that $A<0$ in (\ref{eq:f1quintic})-(\ref{eq:f2quintic}), the branches of equilibria with isotropy $\widetilde{\mC}_{2\kappa}$ and $\widetilde{\mC}'_{2\kappa}$ are pitchfork and supercritical. Moreover: 
\begin{itemize}
\item[(i)] $ \text{Fix}(\kappa)$ contains two copies of each type of axes of symmetry and there is generically no other solutions bifurcating in this plane.
\item[(ii)] $\text{Fix}(\sigma)$ and $ \text{Fix}(\kappa')$ contain each only one copy of both types axes of symmetry. Moreover in either plane solutions can bifurcate in the principal stratum, depending on the values of the coefficients $b$ and $d$ in the bifurcation equations (\ref{eq:f1quintic})-(\ref{eq:f2quintic}). The bifurcation and stability diagrams in terms of $b$ and $d$ are shown in Figure \ref{fig:diag23}. The regions I-II-III which are referred to in this figure are defined in Figure \ref{fig:dom}. Region IV is similar to II (bifurcation of solutions in the principal stratum).
\end{itemize}
\end{thm}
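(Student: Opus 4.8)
The plan is to exploit the gradient structure of \eqref{eq:f1quintic}--\eqref{eq:f2quintic}. Writing $X=(z_1,z_2)$ and $\|X\|^2=|z_1|^2+|z_2|^2$, the flow is (up to a positive constant) $\dot z_j=\partial_{\bar z_j}\Phi$ with real potential $\Phi(X)=\lambda\|X\|^2+\tfrac{A}{2}\|X\|^4+Q(X)$, as one checks by differentiating. Hence equilibria are exactly the critical points of $\Phi$, and their stability is governed by the signature of its Hessian. Since every fixed-point subspace is flow invariant (Proposition on fixed-point subspaces), I would analyse each isotropy stratum by restricting $\Phi$ to it.

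\textbf{The one-dimensional axes.} For $H=\widetilde{\mC}_{2\kappa}$ or $\widetilde{\mC}'_{2\kappa}$ one has $\dim\text{Fix}(H)=1$ and $N(H)/H\simeq\Z_2$, so Theorem \ref{thm:equivariant branching lemma} already produces a pitchfork branch. For supercriticality I would set $X=x\,v$ with $v$ a unit generator of the axis: the quintic gradient terms $\partial_{\bar z_j}Q$ enter only at order $x^5$, so the reduced equation is $\dot x=\lambda x+Ax^3+O(x^5)$. Because $A<0$ the cubic coefficient is negative, and the implicit function theorem shows the higher-order terms do not alter the conclusion; the branch is supercritical.

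\textbf{The two-dimensional planes.} On $\text{Fix}(g)$ with $g=\kappa,\sigma,\kappa'$ I would introduce amplitude-angle coordinates $X=R\,e(\psi)$, $e(\psi)$ tracing the unit circle of the plane, so that $\Phi|_{\text{Fix}(g)}=\lambda R^2+\tfrac{A}{2}R^4+R^6\,q_g(\psi)$ with $q_g:=Q|_{\{R=1\}}$, using that $Q$ is homogeneous of degree six. Nonzero equilibria then satisfy $q_g'(\psi)=0$ together with $\lambda+AR^2+3R^4q_g(\psi)=0$; with $A<0$ the latter gives $\lambda=-AR^2+O(R^4)>0$, so every in-plane branch is supercritical. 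The radial cubic being purely isotropic, angular selection is entirely encoded in $q_g$, whose nonconstant part comes only from the $b$- and $d$-terms of $Q$ (the $a$-term contributes a constant). From the inclusion relations in Figure \ref{fig:isotropies_chi11} I would first read off the symmetry-forced critical angles: the four directions (two copies of each axis type) in $\text{Fix}(\kappa)$, and the two directions (one copy of each) in $\text{Fix}(\sigma)$ and $\text{Fix}(\kappa')$. For (i) I would show that the residual symmetry acting on $\text{Fix}(\kappa)$ pins $q_\kappa$ to a trigonometric form whose critical set coincides, for $(b,d)$ outside a measure-zero set, with these four forced directions, whence generically no principal-stratum equilibrium. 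For (ii) the smaller residual symmetry on $\text{Fix}(\sigma)$ and $\text{Fix}(\kappa')$ leaves room for extra pairs of critical angles of $q_g$ to be created by saddle-node as $(b,d)$ crosses explicit curves; these are precisely the principal-stratum solutions, and counting them partitions the parameter plane into the regions $I$–$IV$. Stability follows from the Hessian: $\operatorname{sgn} q_g''(\psi_0)$ fixes the in-plane angular eigenvalue, the supercritical radial direction is stable, and the two transverse eigenvalues are computed from the full $4\times4$ Hessian of $\Phi$; assembling these yields the labels of Figures \ref{fig:dom}--\ref{fig:diag23}.

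The hard part will be the explicit sixth-order classification of $q_g(\psi)$ on each plane: expanding $Q$ in $\psi$, isolating the harmonics carrying $b$ and $d$, and deciding for which $(b,d)$ additional critical angles appear and whether they are angular maxima or minima. This bookkeeping is genuinely cumbersome — it is exactly the computation deferred to \cite{faye-chossat:11} — and it is also where the correspondence $c=-2b$ between the present gradient normal form and the general normal form of \cite{faye-chossat:11} must be invoked, so as to suppress the non-gradient direction that would otherwise support the robust heteroclinic cycle and thereby guarantee that only the critical points of $\Phi$ survive.
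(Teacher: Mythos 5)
Your proposal is correct and follows essentially the same route as the paper: both rest on the gradient structure of the quintic truncation (with the identification $c=-2b$ eliminating the non-gradient term of \cite{faye-chossat:11}), obtain the axis branches from the equivariant branching lemma with supercriticality forced by the purely radial cubic coefficient $A<0$, and reduce parts (i)--(ii) to restricting the equations to the explicitly parametrized planes $\text{Fix}(\kappa)$, $\text{Fix}(\sigma)$ and $\text{Fix}(\kappa')$, with the detailed $(b,d)$-classification deferred to \cite{faye-chossat:11} exactly as the paper itself does. Your repackaging via the restricted potential $\Phi$ and the angular function $q_g(\psi)$ is a clean way of organizing the same computation (indeed on $\text{Fix}(\kappa)$ one finds $q_\kappa(\psi)=a-\tfrac{b-d}{2}\sin 4\psi$, whose critical set is generically the eight forced angles, confirming (i), while on $\text{Fix}(\sigma)$ and $\text{Fix}(\kappa')$ the mixture of second and sixth harmonics produces the saddle-node creation of principal-stratum equilibria behind regions I--IV).
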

\begin{figure}[htp]
 \centering
\subfigure[In $\text{Fix}(\sigma)$, solutions with isotropy $\sigma$ exist in the regions $II$ and $IV$. We have set $x=-b$ and $y=d$.]{\label{fig:domP2}
\includegraphics[width=6cm]{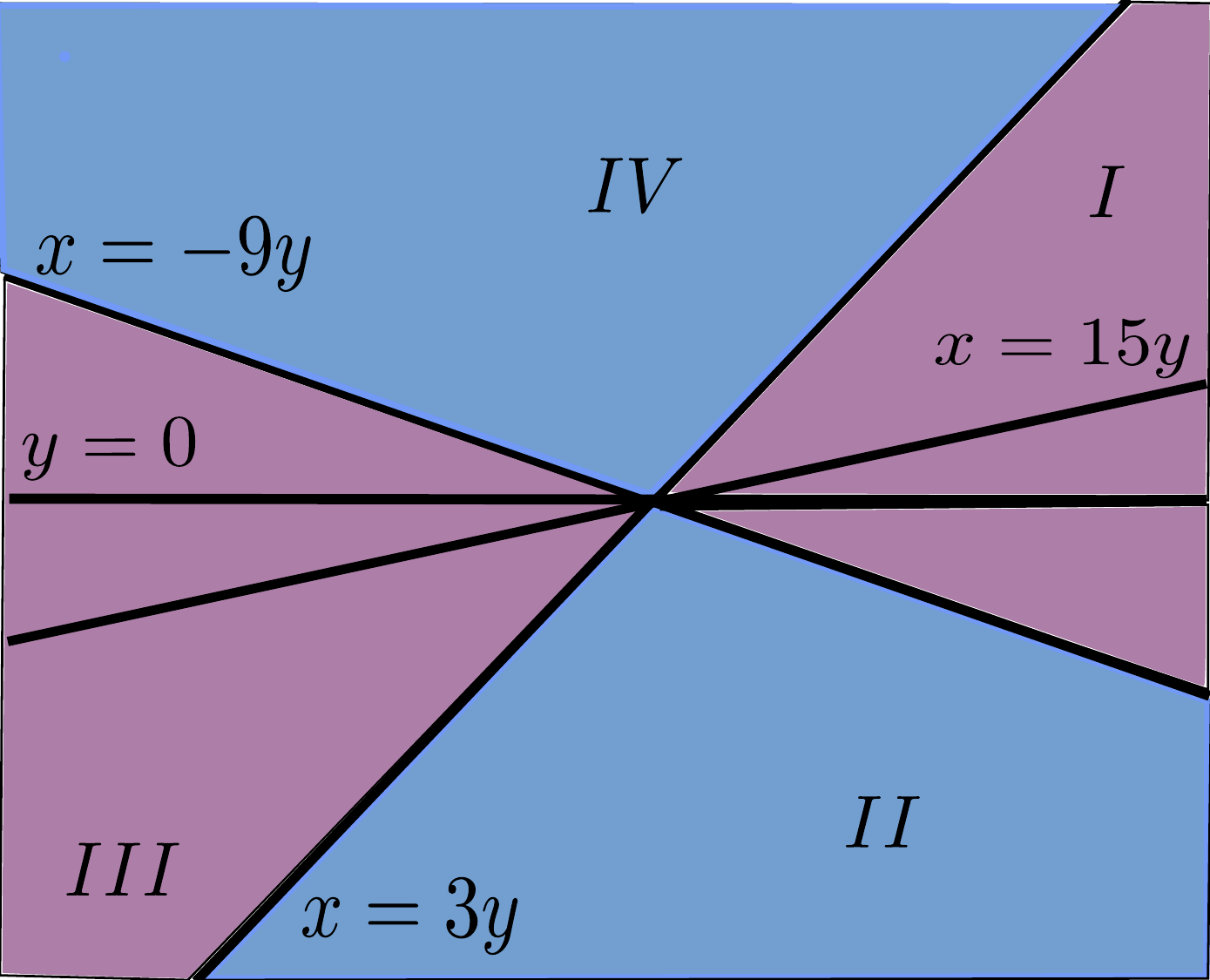}}
\hspace{.3in}
\subfigure[In $\text{Fix}(\kappa')$, solutions with isotropy $\kappa'$ exist in the regions $II$ and $IV$. We have set $x=-2b$ and $y=b+d$.]{\label{fig:domP3}
\includegraphics[width=6cm]{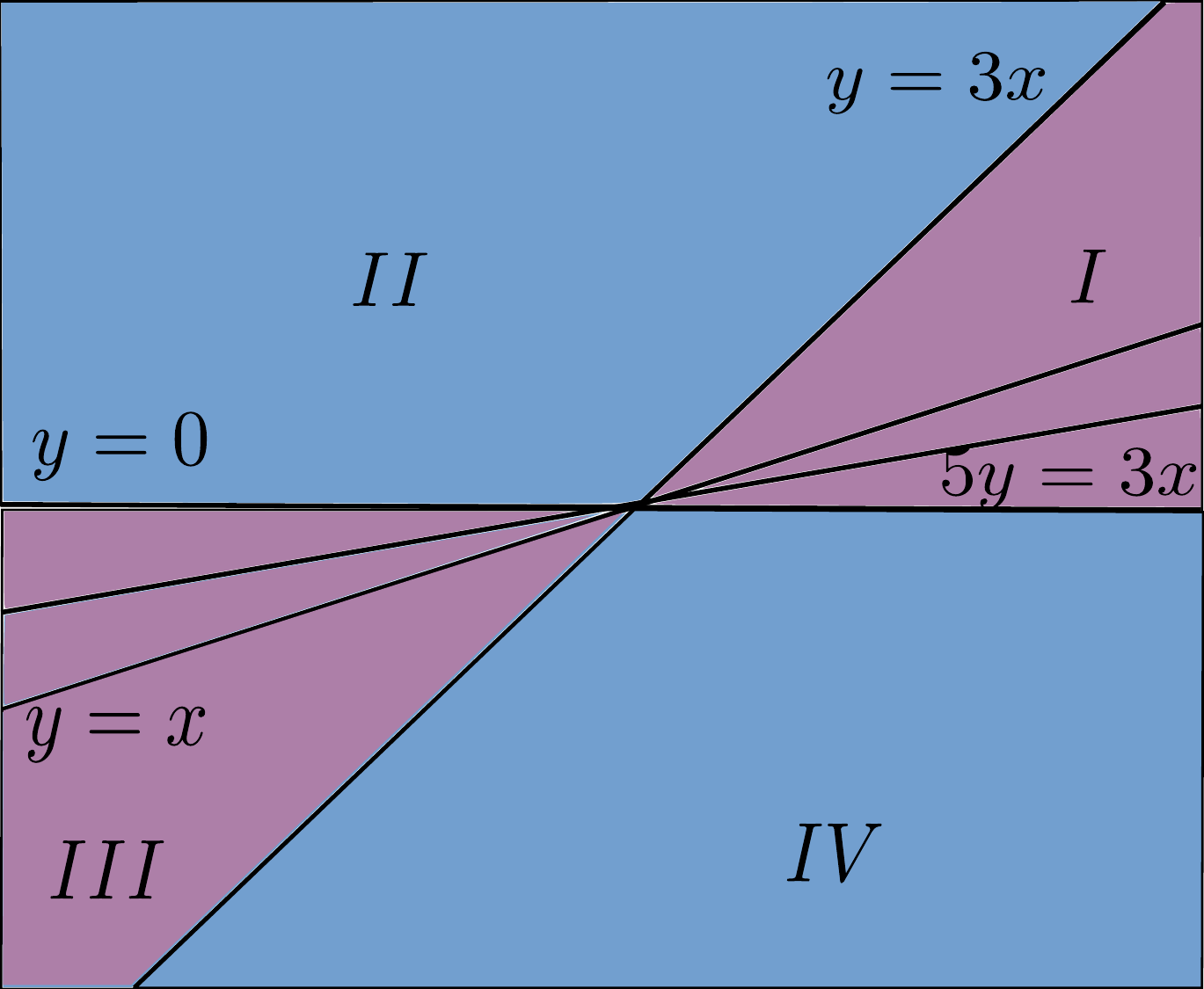}}
\caption{Conditions of existence of solutions with isotropy $\sigma$ and $\kappa'$ (in blue).}
\label{fig:dom}
\end{figure}

\begin{figure}[htp]
 \centering
\includegraphics[width=10cm]{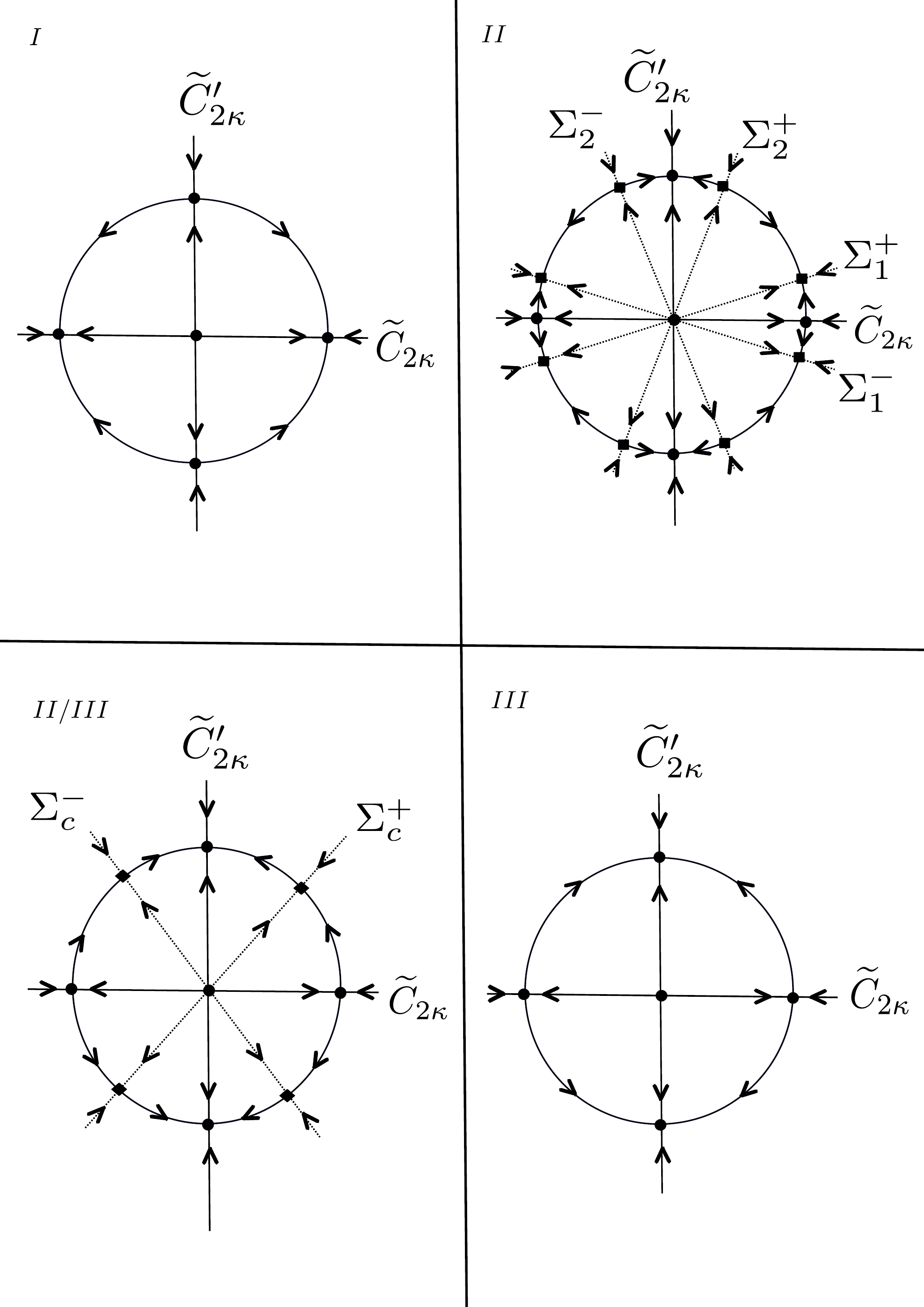}
\caption{Changes of phase diagram in $\text{Fix}(\sigma)$ and $\text{Fix}(\kappa')$ as the coefficients $(b,d)$ pass from regions $I-II-III$ in Figure \ref{fig:domP2} (for $\text{Fix}(\sigma)$) and in Figure \ref{fig:domP3} (for $\text{Fix}(\kappa')$). $\Sigma_1^{\pm}$ and $\Sigma_2^{\pm}$ indicate solutions with isotropy $\sigma$ or $\kappa'$. The case $II/III$ corresponds to coefficient values at the boundary between region $II$ and $III$ ( saddle-node bifurcation of equilibria with isotropy $\sigma$ or $\kappa'$).}
\label{fig:diag23}
\end{figure}

\subsubsection{Computation of H-planforms}

It follows from the definition that H-planforms are eigenfunctions of the Laplace-Beltrami operator in $\D$ which satisfy certain isotropy conditions: (i) being invariant under a lattice group $\Gamma$ and (ii) being invariant under the action of an isotropy subgroup of the symmetry group of the fundamental domain $\D/\Gamma$ (mod $\Gamma$). Therefore in order to exhibit H-planforms, we need first to compute eigenvalues and eigenfunctions of $\Delta_\D$ in $\D$, and second to find those eigenfunctions which satisfy the desired isotropy conditions. In this subsection, we use the notations $\mu$ and $\Psi$ for eigenvalue and eigenfunction of $\Delta_\D$:
\bqs
-\Delta_\D \Psi(z)=\mu \Psi(z),\quad \forall z\in \Oct.
\eqs

Over the past decades, computing the eigenmodes of the Laplace-Beltrami operator on compact manifolds has received much interest from physicists. The main applications are certainly in quantum chaos \cite{balazs-voros:86,aurich-steiner:89,aurich-steiner:93,schmit:91,cornish-turok:98} and in cosmology \cite{inoue:99,cornish-spergel:99,lehoucq-weeks-etal:02}.

In order to find these H-planforms, we use the finite-element method with periodic boundary conditions. This choice is dictated by the fact that this method will allow us to compute all the first $n$ eigenmodes and among all these we will identify those which correspond to a given isotropy group.

\begin{figure}[htp]
\centering
\includegraphics[width=0.6\textwidth]{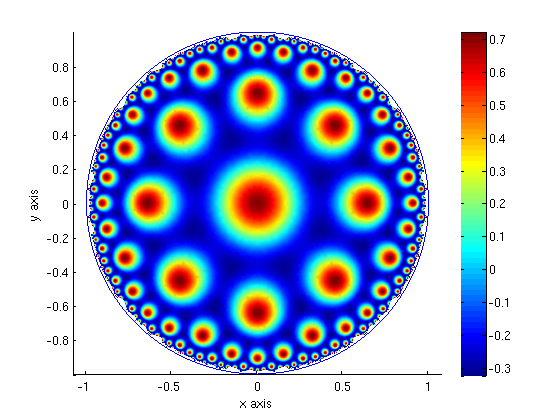}
\caption{Plot of the eigenfunction of the Laplace-Beltrami operator in the Poincar\'e disk $\D$ with $\g$-symmetry with eigenvalue $\mu=23.0790$.}
\label{fig:chi1}
\end{figure}

As there exists an extensive literature on the finite element methods (see for an overview \cite{ciarlet-lions:91,allaire:05}) and as numerical analysis is not the main goal of this review, we do not detail the method itself but rather focus on the way to actually compute the eigenmodes of the Laplace-Beltrami operator. We mesh the full octagon with 3641 nodes in such a way that the resulting mesh enjoys a $\mD_8$-symmetry. We implement, in the finite element method of order 1, the periodic boundary conditions of the eigenproblem and obtain the first 100 eigenvalues of the octagon. Our results, as reported in \cite{chossat-faye-etal:11}, are in agreement with those of Aurich and Steiner \cite{aurich-steiner:89}.

\begin{figure}[htp]
\centering
\subfigure[H-planform $\Psi_3$ with symmetry $\mD_{8}$.]{
\label{fig:hpd3_3_disk}
\includegraphics[width=0.45\textwidth]{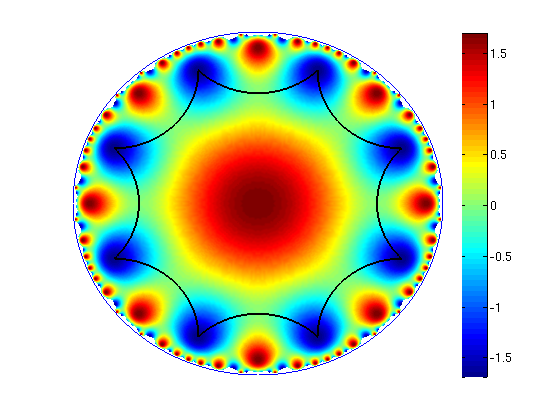}}\\
\subfigure[H-planform $\Psi_1$ with symmetry $\gamma_1\cdot \mD_8$.]{
\label{fig:hpd3_1_disk}
\includegraphics[width=0.45\textwidth]{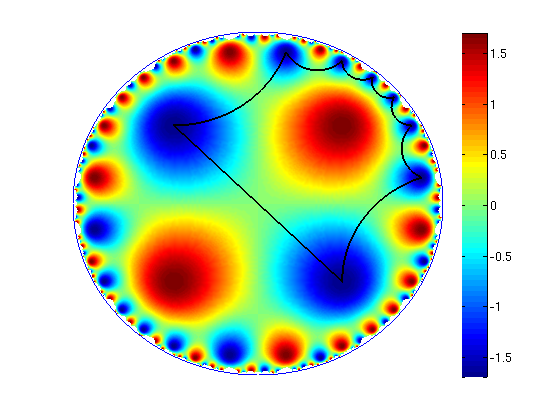}}
\hspace{.3in}
\subfigure[H-planform $\Psi_2$ with symmetry $\gamma_2\cdot \mD_8$.]{
\label{fig:hpd3_2_disk}
\includegraphics[width=0.45\textwidth]{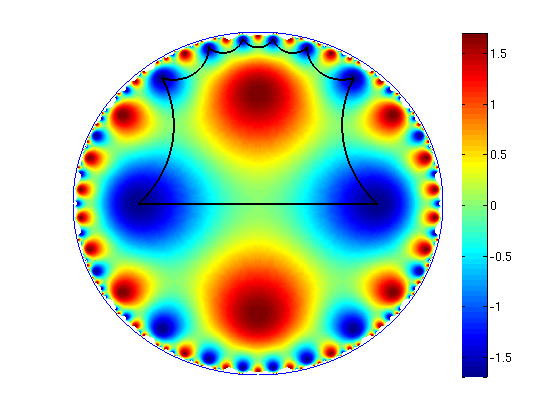}}
\caption{Plot of the eigenfunctions of the Laplace-Beltrami operator in the Poincar\'e disk $\D$ associated to the lowest non-negative eigenvalue $\mu=3.8432$ corresponding to the irreducible representation $\chi_8$. In (a) we also plot the octagon (black line) and in (b),(c) its image by $\gamma_1,\gamma_2$ respectively.}
\label{fig:hpd3_disk}
\end{figure}

We plot in Figure \ref{fig:chi1}, the first eigenfunctions of the Laplcae-Beltrami operator with full octagonal symmetry $\g$ with non zero eigenvalue. It is associated to representation $\chi_1$. Next, in Figure \ref{fig:hpd3_disk}, we plot the corresponding eigenfunctions of the Laplace-Beltrami operator associated to the lowest non-negative eigenvalue $\mu=3.8432$ with multiplicity 3. We identify each solution by its symmetry group. It is clear that \ref{fig:hpd3_1_disk} and \ref{fig:hpd3_2_disk} can be obtained from \ref{fig:hpd3_3_disk} by hyperbolic transformations. From the definitions of $g_0$ in \eqref{eq:boostg}, we see that $g_0=a_{r_0}$ with $r_0=\ln \left(1+\sqrt{2}+\sqrt{2+\sqrt{2}} \right)$. If we define $\gamma_k\in\g$ by:
\bqq
\label{eq:gmk}
\gamma_k=\text{rot}_{k\pi/4}a_{r_0/2}\text{rot}_{-k\pi/4}
\eqq
then Figure \ref{fig:hpd3_1_disk} (resp. \ref{fig:hpd3_2_disk}) is obtained from \ref{fig:hpd3_3_disk}  by applying $\gamma_1$ (resp. $\gamma_2$). A gallery of eigenfunctions can be found in \cite{chossat-faye-etal:11,faye-chossat:11}.

\subsubsection{Case study: computation of the bifurcation equations for irreducible representation $\chi_8$}

For most of the irreducible representations, it is practically impossible to compute the reduced equation \eqref{eq:bifurcation} given by the center manifold theorem. The main reason being that we only know numerically the eigenfunctions of the Laplace-Beltrami operator. It turns out that we have been able to successfully conduct this computation in the case of 3 dimensional irreducible representation $\chi_8$, when the evolution equation is a neural field equation \cite{faye-chossat:12}. In that particular study, the choice of $\chi_8$ was dictated by the simple interpretation of H-planforms given in Figure \ref{fig:hpd3_disk} in terms of preferred orientations within a hypercolumn of the visual cortex and allows the computation of geometric visual hallucinations across the cortex \cite{faye-chossat:12}. As already explained in the previous section, the lowest non-negative eigenvalue $\mu=3.8432$ is of multiplicity 3 and associated to the irreducible representation $\chi_8$. We then select the parameter $\alpha$ in equation \eqref{eq:sh} to be equal to $\alpha_c$ such that for $\mu_c=\mu=3.8432$ we have $\alpha_c^2=\mu_c$. Restricting to the class of $\Gamma$-periodic functions, this ensures that the first modes which bifurcate from $\lambda=0$ are associated to the irreducible representation $\chi_8$, all other modes being damped to zero (see discussion in section \ref{subsub:linear}). We then rewrite the Swift-Hohenberg equation \eqref{eq:sh} with $\alpha=\alpha_c$
\bqq
\label{eq:SHc}
u_t=-(\alpha_c^2+\Delta_\D)^2u+\lambda u+ \mathcal{N}(u).
\eqq

\begin{prop}\label{prop:fps}
 For the three dimensional irreducible representation $\chi_8$ of $\g$, the isotropy subgroups with one dimensional fixed point subspace are the following:
\begin{align*}
 \mD_8&=\langle \rho, \kappa \rangle \\
 \widetilde{\mC}_{6\kappa'}&= \langle -\epsilon, \kappa'\rangle \\
 \widetilde{\mD}_{2\kappa}&= \langle -Id,\sigma,\kappa \rangle.
\end{align*}
\end{prop}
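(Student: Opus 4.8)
The plan is to avoid any brute-force pass over the $96$ elements of $\g$ and instead exploit the identification already made in Proposition \ref{prop:identification irreps dim<3}(iv), reducing the question to the classical isotropy lattice of the cube and checking each candidate against the trace formula \eqref{eq:formuletrace}. The one structural fact that drives everything is that in $\chi_8$ the central element $-Id$ acts trivially: from Table \ref{table:caracteres} its character equals $\chi_8(Id)=3$, so the action factors through the quotient $\g/C_2\simeq\mathbb{O}\times\Z_2$, where $C_2=\{Id,-Id\}$, and moreover $\chi_8(-g)=\chi_8(g)$ for every $g\in\g$. Thus $\widetilde{\chi_8}$ is the natural orthogonal action of the full cube symmetry group $\mathbb{O}\times\Z_2$ on $\R^3$.

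Next I would invoke the classical description of the isotropy lattice of $\mathbb{O}\times\Z_2$ on $\R^3$: the isotropy subgroups with one-dimensional fixed-point subspace are exactly the stabilizers of the three inequivalent symmetry axes of the cube, namely the $4$-fold axes through opposite face centers (stabilizer of order $8$), the $3$-fold axes through opposite vertices (order $6$), and the $2$-fold axes through opposite edge midpoints (order $4$). These form three conjugacy classes, each maximal, each with a one-dimensional fixed axis. Since $C_2$ lies in the kernel of the action, taking preimages under $\g\to\mathbb{O}\times\Z_2$ yields a bijection between isotropy subgroups of $\chi_8$ in $\g$ and those of $\widetilde{\chi_8}$; this bijection preserves fixed-point subspaces (hence the property $\dim V^H=1$) and doubles orders, to $16$, $12$ and $8$ respectively.

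I would then check that the three subgroups in the statement are precisely these preimages. Each contains $-Id$ — explicitly for $\widetilde{\mD}_{2\kappa}$, while $\rho^4=-Id\in\mD_8$ and $(-\epsilon)^3=-Id\in\widetilde{\mC}_{6\kappa'}$ — and their orders are $|\mD_8|=16$, $|\widetilde{\mC}_{6\kappa'}|=12$, $|\widetilde{\mD}_{2\kappa}|=8$, i.e. twice $8$, $6$, $4$. As these three orders are distinct, order matching alone pins down the correspondence $\mD_8\leftrightarrow$ (4-fold), $\widetilde{\mC}_{6\kappa'}\leftrightarrow$ (3-fold), $\widetilde{\mD}_{2\kappa}\leftrightarrow$ (2-fold), with no need to track the explicit isomorphism $\g/C_2\simeq\mathbb{O}\times\Z_2$. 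To confirm $\dim V^H=1$ in each case I would apply \eqref{eq:formuletrace} directly, using $\chi_8(-g)=\chi_8(g)$ and the fact that reflections lying in a common dihedral subgroup are conjugate and hence share their $\chi_8$-value (so that $\kappa''=\epsilon\kappa'=\sigma\kappa$ has $\chi_8(\kappa'')=\chi_8(\kappa')=1$, and the reflections $\rho^k\kappa$ in $\mD_8$ all take the value $1$). A short computation then gives $\sum_{h\in\mD_8}\chi_8(h)=16$, $\sum_{h\in\widetilde{\mC}_{6\kappa'}}\chi_8(h)=12$ and $\sum_{h\in\widetilde{\mD}_{2\kappa}}\chi_8(h)=8$, each producing $\dim V^H=1$; completeness is then automatic from the bijection with the three cube classes.

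The hard part will be the bookkeeping that separates a subgroup with a one-dimensional fixed-point subspace from an honest isotropy subgroup, i.e. the full stabilizer of a point on that line, since a non-maximal subgroup can share the same fixed axis. The cube correspondence resolves this cleanly, because the three stabilizers are maximal isotropy subgroups and the $2$-to-$1$ cover transports maximality. A secondary nuisance is justifying the conjugacies among $\kappa$, $\kappa'$, $\kappa''$ needed to read their $\chi_8$-values off Table \ref{table:caracteres}; I would handle this by the dihedral conjugation argument above rather than by enumerating conjugacy classes element by element.
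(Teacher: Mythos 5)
Your overall route is the right one and in fact matches the paper's (implicit) argument: the paper never proves Proposition \ref{prop:fps} in the text, but justifies it exactly through Proposition \ref{prop:identification irreps dim<3}(iv) — $\chi_8$ factors through $\g/C_2\simeq\mathbb{O}\times\Z_2$ acting as the full cube group on $\R^3$ — together with the classical octahedral classification cited from \cite{melbourne:86} and the GAP-based subgroup enumeration of \cite{chossat-faye-etal:11}. Your preimage bijection, the order computations ($16$, $12$, $8$), the identity $\chi_8(-g)=\chi_8(g)$, and the trace-formula sums ($16$, $12$, $8$, each giving $\dim V^H=1$) are all correct; your replacement of the GAP citation by the dihedral conjugacy argument for $\chi_8(\kappa'')=\chi_8(\kappa')=1$ is also sound, since all three reflections in $\langle\epsilon,\kappa'\rangle\simeq\mD_3$ are conjugate.

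There is, however, one genuine gap, precisely at the point you flag and then wave away: the claim that ``order matching alone pins down the correspondence'' fails for the order-$8$ candidate. Your bijection shows the isotropy subgroups with $\dim V^H=1$ have orders $16$, $12$, $8$, and for $\mD_8$ and $\widetilde{\mC}_{6\kappa'}$ divisibility forces them to equal the stabilizers of their fixed axes. But an order-$8$ subgroup containing $-Id$ with a one-dimensional fixed subspace need not be an isotropy subgroup: $\langle\rho^2,\kappa\rangle\subset\mD_8$ is a concrete impostor — it has order $8$, contains $-Id=\rho^4$, and its trace sum is $2\cdot 3+2\cdot(-1)+4\cdot 1=8$, so formula \eqref{eq:formuletrace} gives $\dim V^H=1$, yet its fixed axis is the $4$-fold axis of $\mD_8$ and its full stabilizer is $\mD_8$ itself. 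Thus neither the order, nor containing $-Id$, nor the trace formula distinguishes $\widetilde{\mD}_{2\kappa}$ from such a subgroup, and ``maximality transports through the $2$-to-$1$ cover'' does not by itself show the named group is a full stabilizer. The fix is one structural observation you omitted: $\widetilde{\mD}_{2\kappa}=\{\pm Id,\pm\sigma,\pm\kappa,\pm\kappa''\}\simeq\Z_2^3$ is elementary abelian, while the maximal elementary abelian $2$-subgroups of the order-$16$ dihedral group $\mD_8$ are Klein four groups (two reflections $\rho^j\kappa$, $\rho^k\kappa$ commute only if $j\equiv k \pmod 4$), so $\Z_2^3$ does not embed in any conjugate of $\mD_8$; since $8\nmid 12$ rules out the $3$-fold stabilizer as well, the stabilizer of the fixed axis of $\widetilde{\mD}_{2\kappa}$ must have order $8$ and hence equal $\widetilde{\mD}_{2\kappa}$. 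With that line added, your proof is complete.
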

In Figure \ref{fig:mis}, we represented the different axes of symmetry of the cube with isotropy subgroups given in proposition \ref{prop:fps}. Note that planforms in Figure \ref{fig:hpd3_disk} correspond to the three coordinate axes of the cube.

\begin{figure}[htp]
\centering
\includegraphics[width=0.7\textwidth]{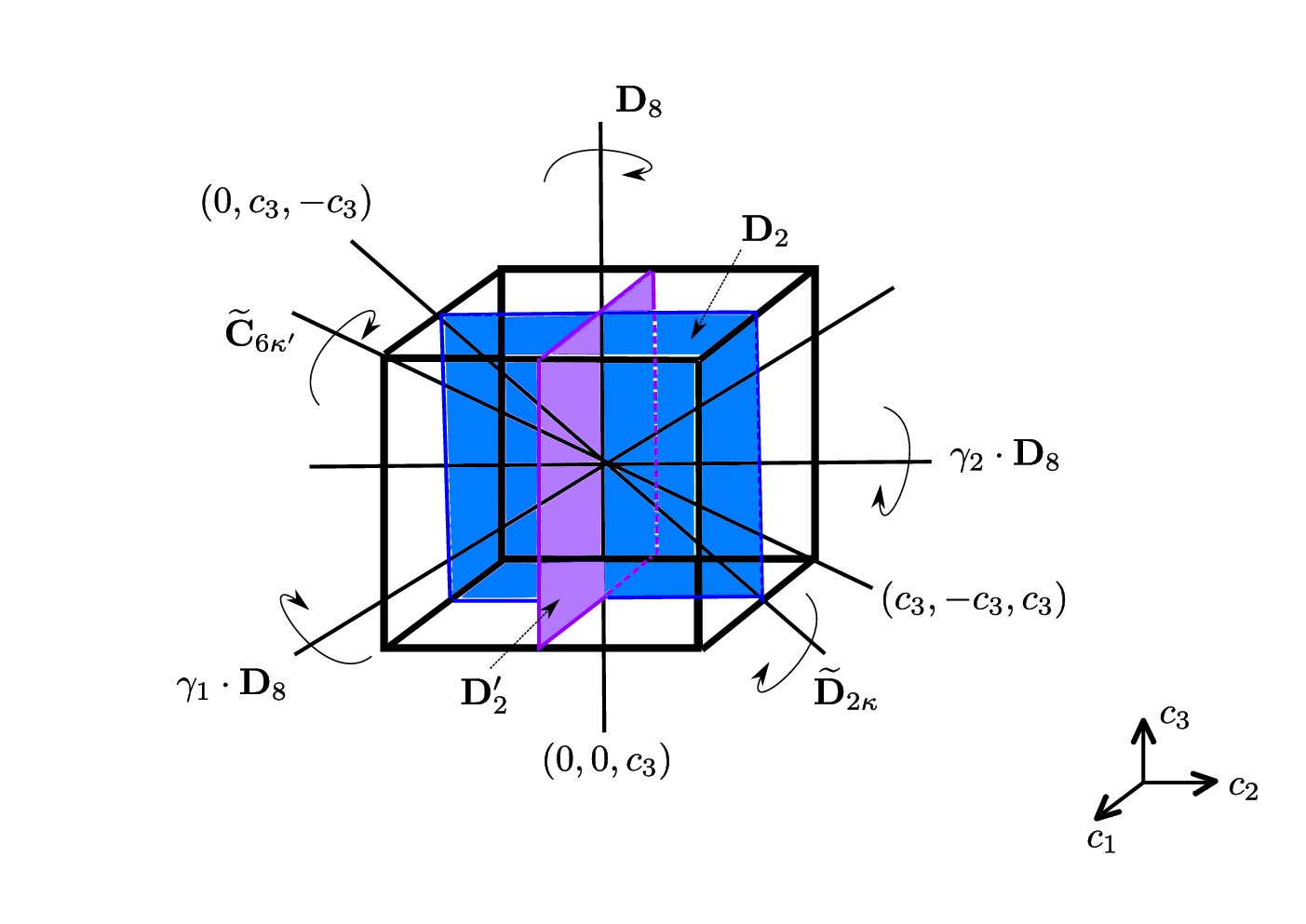}
\caption{Maximal isotropy subgroups $\mD_8, \widetilde{\mathbf{C}}_{6\kappa'}$ and $\widetilde{\mD}_{2\kappa}$ of $\mathbb{O} \ltimes \mathbb{Z}_2$. The axes $\gamma_1\cdot\mD_8$ and $\gamma_2\cdot\mD_8$ are copies of $\mD_8$ by the elements $\gamma_1,\gamma_2\in\g$ (see \eqref{eq:gmk}). The plane $(0,c_2,c_3)$ (resp. $(c_1,0,c_3)$) has symmetry $\mD_2$ (resp. $\mD_2'$).}
\label{fig:mis}
\end{figure}

If we denote $\Psi_1$ the H-planform in Figure \ref{fig:hpd3_1_disk}, $\Psi_2$ the H-planform in Figure \ref{fig:hpd3_2_disk} and $\Psi_3$ the H-planform corresponding to the symmetry group $\mD_8$ in Figure \ref{fig:hpd3_3_disk}, then $(\Psi_1,\Psi_2,\Psi_3)$ is a basis for the irreducible representation $\chi_8$. This can be easily seen through the identification of each H-planform to the three coordinate axes of the cube in Figure \ref{fig:mis}. Note that we have normalized planforms such that:
\bqs
\langle \Psi_i,\Psi_j \rangle =\frac{1}{4\pi}\int_\Oct\Psi_i (z) \Psi_j (z) \text{dm}(z)=\delta_{i,j}.
\eqs

We rewrite equation \eqref{eq:SHc} as a dynamical system in the infinite-dimensional phase space $\mathcal{X}=\text{L}^2_{per}(\Oct)$ consisting $\Gamma$-periodic functions:
\bqs
u_t= \mL_\lambda u +\mathcal{N}(u)
\eqs
with 
\bqs
 \mL_\lambda = -(\alpha_c^2+\Delta_\D)^2+\lambda.
\eqs
The linear part $\mL_\lambda$ is a closed linear operator with dense domain $\mathcal{Y}=\text{H}^4_{per}(\Oct)$ where
\bqs
\text{H}^4_{per}(\Oct)=\{u \in \text{H}^4_{loc}(\D) ~|~ u(\gamma \cdot z)=u(z), \forall (\gamma,z)\in \g\times \D \}
\eqs
and the nonlinear map $\mathcal{N}:\mathcal{Y}\rightarrow \mathcal{Y}$ is $\mathcal{C}^p$ for all positive integer $p$. The linear operator $\mL_0$ is closed in $\mathcal{X}$, with dense and compactly embedded domain $\mathcal{Y}$. Then $\mL_0$ has a compact resolvent, and its spectrum $\sigma(\mL_0)$ is purely point spectrum, only. Moreover we have $\sigma(\mL_0)\cap i\R=\{ 0 \}$. The eigenvalue $0$ is geometrically and algebraically triple (consequence of the absolute irreducibility of the representation of $\g$ in that eigenspace). We denote $V$ the corresponding three-dimensional eigenspace. A direct calculation also shows that there exists $\omega_0>0$ and $C>0$ such that for all $|\omega|>\omega_0$, we have the resolvent estimate:
\bqs
\|(i\omega -\mL_0)^{-1}\|_{\mathcal{L}(\mathcal{X})}\leq \frac{C}{|\omega|}.
\eqs
We can apply the equivariant center manifold reduction introduced in \ref{subsection:CM}. So there exists a map $\Phi:V\times \R \rightarrow V^\perp$ such that all solutions $u$ that lie on the center manifold can be written
\bqs
u=x_1\Psi_1+x_2\Psi_2+x_3\Psi_3+\Phi(x_1,x_2,x_3,\lambda),
\eqs
and $(c_1,c_2,c_3)$ satisfies the following reduced system:
\bqq
\label{eq:NF}
\left\{
\begin{array}{lcl}
 \dfrac{dx_1}{dt} &=&\lambda x_1+\left[ a (x_2^2+x_3^2)+b x_1^2 \right]x_1+\text{h.o.t.}\\
 \dfrac{dx_2}{dt} &=&\lambda x_2+\left[ a (x_1^2+x_3^2)+b x_2^2 \right]x_2+\text{h.o.t.}\\
 \dfrac{dx_3}{dt} &=&\lambda x_3+\left[ a (x_1^2+x_2^2)+b x_3^2 \right]x_3+\text{h.o.t.}
\end{array}
\right.
\eqq
We refer to \cite{melbourne:86} for the computation of the normal form \eqref{eq:NF} and  for a review on bifurcation problems with octahedral symmetry.

Taylor expanding the map $\Phi$:
\bqs
\Phi(x_1,x_2,x_3,\lambda)=\sum_{0\leq r+s+l+m\leq 3}x_1^rx_2^sx_3^l\lambda^m\Phi_{rslm}+\text{h.o.t}
\eqs
and denoting $\cR$:
\bqs
\cR(u,\lambda)=\lambda u+\mathcal{N}(u)=\cR_{11}(u,\lambda)+\cR_{20}(u,u)+\cR_{30}(u,u,u)
\eqs
with
\begin{align*}
 \cR_{11}(u,\lambda)&=\lambda u \\
 \cR_{20}(u,v)&=\nu uv \\
 \cR_{300}(u,v,w)&=-\eta uvw ,
\end{align*}
we obtain the following system of equations:
\begin{align}
\label{eq:sys_oct}
 0&= -\mL_0 \Phi_{0020}-\cR_{20}(\Psi_3,\Psi_3) \nonumber \\
 0&= -\mL_0 \Phi_{1010}-2\cR_{20}(\Psi_1,\Psi_3) \nonumber \\
 a &= \langle 2\cR_{20}(\Phi_{0020},\Psi_1)+2\cR_{20}(\Phi_{1010},\Psi_3)+3\cR_{30}(\Psi_1,\Psi_3,\Psi_3),\Psi_1\rangle \nonumber \\
 b &= \langle 2\cR_{20}(\Psi_3,\Phi_{0020})+\cR_{30}(\Psi_3,\Psi_3,\Psi_3),\Psi_3\rangle.
\end{align}
Here $\langle\cdot,\cdot\rangle$ is the scalar product on $\text{L}^2_{per}(\Oct)$.

In order to solve the two first equations of the previous system, we need to know if the functions $\Psi_3(z)\Psi_3(z)$ and $\Psi_1(z)\Psi_3(z)$ can be expressed as a linear combination of eigenfunctions of the Laplace-Beltrami operator on $\Oct$. In general, it is very difficult to obtain these expressions because the eigenfunctions are only known numerically and one needs the computation of the associated Clebsch-Gordan coefficients. It turns out that in our case we have been able to conjecture and numerically verify the following relations:
\begin{align*}
 \Psi_1(z)\Psi_3(z) &= \frac{1}{\sqrt{3}}\Psi_{\chi_{10}}^{\mD_{2\kappa}'}(z) \\
 \Psi_3^2(z)&= \frac{6}{5}\Psi_{\chi_6}^{\widetilde{\mD}_{8\kappa}}(z)+1 
\end{align*}
where the corresponding isotropy subgroups are given by:
\bqs
\mD_{2\kappa}'=<-Id,\rho^2 \kappa,\rho^2\sigma> \text{ and } \widetilde{\mD}_{8\kappa}= <\rho, \rho^2\sigma\rho^{-2}, \kappa >.
\eqs
The notation $\Psi_{\chi_{10}}^{\mD_{2\kappa}'}(z)$ means that the product $ \Psi_1(z)\Psi_3(z)$ is an eigenfunction of the Laplace-Beltrami operator associated to the irreducible representation $\chi_{10}$ with isotropy subgroup $\mD_{2\kappa}'$. Similarly the notation $\Psi_{\chi_6}^{\widetilde{\mD}_{8\kappa}}(z)$ stands for an eigenfunction of the Laplace-Beltrami operator associated to the irreducible representation $\chi_{6}$ with isotropy subgroup $\mD_{8\kappa}$. 
Furthermore we have normalized $\Psi_{\chi_{10}}^{\mD_{2\kappa}'}$ and $\Psi_{\chi_6}^{\widetilde{\mD}_{8\kappa}}$ such that:
\bqs
\langle \Psi_{\chi_{10}}^{\mD_{2\kappa}'},\Psi_{\chi_{10}}^{\mD_{2\kappa}'}\rangle=\langle \Psi_{\chi_6}^{\widetilde{\mD}_{8\kappa}},\Psi_{\chi_6}^{\widetilde{\mD}_{8\kappa}}\rangle=1.
\eqs
In Figure \ref{fig:Psi}, we plot the eigenfunctions $\Psi_{\chi_{10}}^{\mD_{2\kappa}'}$ and $\Psi_{\chi_6}^{\widetilde{\mD}_{8\kappa}}$ of the Laplace-Beltrami operator in the octagon $\Oct$. One interesting remark is that the product $\Psi_1 \Psi_3$ corresponding to the three dimensional irreducible representation $\chi_8$ produces an eigenfunction associated to another three dimensional irreducible representation: $\chi_{10}$ whereas $\Psi_3^2$ is the linear combination of the constant function which has $\g$ as isotropy subgroup and thus corresponds to $\chi_1$ and the eigenfunction $\Psi_{\chi_6}^{\widetilde{\mD}_{8\kappa}}$ which is associated to two dimensional irreducible representation $\chi_6$. We denote $\mu_{\chi_{10}}^{\mD_{2\kappa}'}$ and $\mu_{\chi_6}^{\widetilde{\mD}_{8\kappa}}$ the corresponding eigenvalues:
\bqs
-\Delta_\D \Psi_{\chi_{10}}^{\mD_{2\kappa}'}= \mu_{\chi_{10}}^{\mD_{2\kappa}'} \Psi_{\chi_{10}}^{\mD_{2\kappa}'} \text{ and } -\Delta_\D \Psi_{\chi_6}^{\widetilde{\mD}_{8\kappa}} = \mu_{\chi_6}^{\widetilde{\mD}_{8\kappa}} \Psi_{\chi_6}^{\widetilde{\mD}_{8\kappa}}.
\eqs

\begin{figure}[htp]
\centering
\subfigure[Plot of $\Psi_{\chi_{10}}^{\mD_{2\kappa}'}$.]{
\label{fig:Psi1Psi3}
\includegraphics[width=0.45\textwidth]{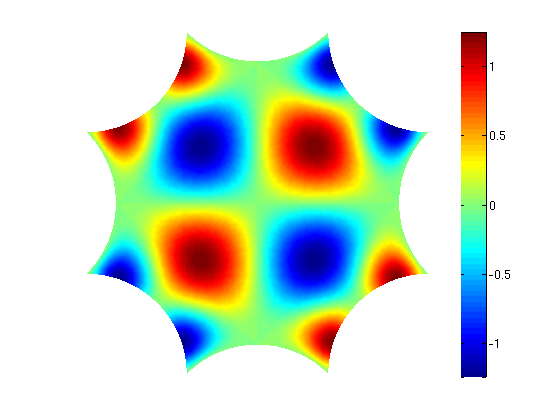}}
\hspace{.3in}
\subfigure[Plot of $\Psi_{\chi_6}^{\widetilde{\mD}_{8\kappa}}$.]{
\label{fig:Psi6}
\includegraphics[width=0.45\textwidth]{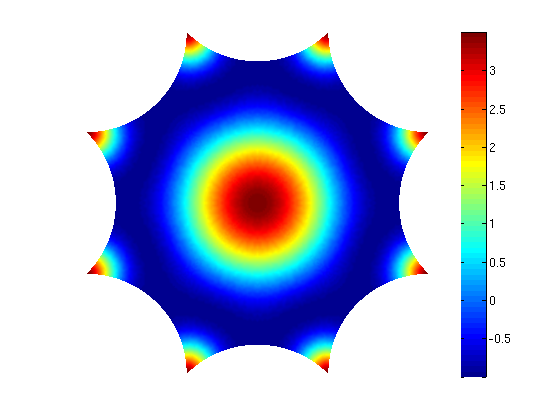}}
\caption{Plot of the eigenfunctions of the Laplace-Beltrami operator in the octagon $\Oct$ corresponding to the irreducible representations $\chi_{10}$ with eigenvalue $\mu_{\chi_{10}}^{\mD_{2\kappa}'}=15.0518$ (left) and $\chi_6$ with eigenvalue $\mu_{\chi_6}^{\widetilde{\mD}_{8\kappa}}=8.2501$ (right). }
\label{fig:Psi}
\end{figure}

With these notations, the two first equations of system \eqref{eq:sys_oct} give
\begin{align*}
 \Phi_{0020}&=\text{Span}\left( \Psi_1,\Psi_2,\Psi_3 \right)+\nu \left[\frac{1}{\alpha_c^4} +\frac{6}{5(\alpha_c^2-\mu_{\chi_6}^{\widetilde{\mD}_{8\kappa}})^2}\Psi_{\chi_6}^{\widetilde{\mD}_{8\kappa}}\right] \\
 \Phi_{1010}&=\text{Span}\left( \Psi_1,\Psi_2,\Psi_3 \right)+\frac{2\nu}{\sqrt{3}(\alpha_c^2-\mu_{\chi_{10}}^{\mD_{2\kappa}'})^2}\Psi_{\chi_{10}}^{\mD_{2\kappa}'}.
\end{align*}
A straightforward but lengthly calculation gives the expression of the coefficients $a$ and $b$ in the reduced equation \eqref{eq:NF}
 \bqq
\label{eq:alpha}
a= \nu^2\left[\frac{2}{\alpha_c^4}-\frac{8}{5(\alpha_c^2-\mu_{\chi_6}^{\widetilde{\mD}_{8\kappa}})^2}+\frac{4}{3(\alpha_c^2-\mu_{\chi_{10}}^{\mD_{2\kappa}'})^2} \right]-\eta
\eqq
\bqq
\label{eq:beta}
b = \nu^2\left[\frac{2}{\alpha_c^4} +\frac{72}{25(\alpha_c^2-\mu_{\chi_6}^{\widetilde{\mD}_{8\kappa}})^2}\right]-\frac{61}{25}\eta.
\eqq


From the analysis derived in \cite{melbourne:86}, we have the following result.
\begin{thm}\label{lemma:stability}
The stability of the branches of solutions corresponding to the three maximal isotropy subgroups given in proposition \ref{prop:fps} is:
 \begin{enumerate}
  \item [(i)] the $\mD_8$ branch is stable if and only if $a<b<0$,
  \item [(ii)] the $\widetilde{\mathbf{C}}_{6\kappa'}$ branch is stable if and only if $2a+b<0$ and $b-a<0$,
  \item [(iii)] the $\widetilde{\mD}_{2\kappa}$ branch is never stable.
 \end{enumerate}
\end{thm}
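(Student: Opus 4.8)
The plan is to analyze the cubic truncation of the reduced system \eqref{eq:NF} directly in each one-dimensional fixed-point subspace, solve for the bifurcated equilibria, and then diagnose stability by computing the $3\times 3$ Jacobian at each branch. Writing $f_i=\lambda x_i+[a(x_j^2+x_k^2)+bx_i^2]x_i$ with $\{i,j,k\}=\{1,2,3\}$ for the right-hand side, the three maximal isotropy subgroups of Proposition \ref{prop:fps} correspond, under the identification of $V$ with $\R^3$ carrying the octahedral action, to the three inequivalent symmetry axes of the cube: $\mD_8$ to a coordinate (four-fold) axis $(x,0,0)$, $\widetilde{\mC}_{6\kappa'}$ to a main (three-fold) diagonal $(x,x,x)$, and $\widetilde{\mD}_{2\kappa}$ to a (two-fold) face diagonal $(x,x,0)$. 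First I would substitute each ansatz into $f_i=0$, obtaining the amplitude relations $bx^2=-\lambda$, $(2a+b)x^2=-\lambda$ and $(a+b)x^2=-\lambda$ respectively, which together with the sign of the relevant denominator fixes the pitchfork branch guaranteed by Theorem \ref{thm:equivariant branching lemma}.

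Second, I would compute the Jacobian $Df$ at each equilibrium. The crucial simplification is that $Df$ commutes with the isotropy subgroup, hence is block-diagonal with respect to the isotypic decomposition of $\R^3$ under that subgroup, so its eigenvalues can be read off component by component rather than from the characteristic polynomial. Using $\partial f_i/\partial x_i=\lambda+a(x_j^2+x_k^2)+3bx_i^2$ and $\partial f_i/\partial x_j=2ax_ix_j$ and inserting the amplitude relations, I would find: for the $\mD_8$ branch a diagonal Jacobian with entries $2bx^2$ and $(a-b)x^2$ (double), so stability forces $b<0$ and $a-b<0$, i.e. $a<b<0$; for the $\widetilde{\mC}_{6\kappa'}$ branch a circulant Jacobian with eigenvalues $2(2a+b)x^2$ and $2(b-a)x^2$ (double), giving the conditions $2a+b<0$ and $b-a<0$; and for the $\widetilde{\mD}_{2\kappa}$ branch a Jacobian with eigenvalues $2(a+b)x^2$, $2(b-a)x^2$ and $(a-b)x^2$. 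In this last case the requirements $b-a<0$ and $a-b<0$ are mutually contradictory, so at least one eigenvalue is always positive and the branch can never be stable, which is exactly statement (iii).

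Finally, two points need care. The identification of the abstract subgroups $\widetilde{\mC}_{6\kappa'}$ and $\widetilde{\mD}_{2\kappa}$ with the three- and two-fold cube axes must be made explicit, via the isomorphism $\g/C_2\simeq\mathbb{O}\times\Z_2$ and the basis $\Psi_1,\Psi_2,\Psi_3$ of $\chi_8$, and I would check it by comparing fixed-point dimensions computed from \eqref{eq:formuletrace}. The second, and genuinely the main, obstacle is justifying that the cubic truncation already decides linear stability: the higher-order terms suppressed in \eqref{eq:NF} do not change the signs of the Jacobian eigenvalues as long as those eigenvalues are nonzero, so one must verify that $b$, $a-b$, $2a+b$ and $a+b$ are generically nonvanishing along the branches, ensuring the equilibria are hyperbolic and the cubic-order signs persist for the full reduced equation. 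The borderline loci $a=b$, $2a+b=0$ and $a+b=0$ are precisely where the octahedral normal-form analysis of \cite{melbourne:86} signals the need for higher-order information, and these are excluded as non-generic.
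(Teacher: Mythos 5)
Your computation is correct, but it is a genuinely different route from the paper: the paper gives no computation at all and simply invokes the singularity-theory analysis of bifurcation problems with octahedral symmetry of Melbourne \cite{melbourne:86}, from which the stability conditions in (i)--(iii) are quoted verbatim. Your direct approach — substituting the three axis types $(x,0,0)$, $(x,x,x)$, $(x,x,0)$ into the cubic truncation of \eqref{eq:NF}, obtaining the amplitude relations $bx^2=-\lambda$, $(2a+b)x^2=-\lambda$, $(a+b)x^2=-\lambda$, and reading the Jacobian eigenvalues $\{2bx^2,(a-b)x^2\ (\text{double})\}$, $\{2(2a+b)x^2,2(b-a)x^2\ (\text{double})\}$ and $\{2(a+b)x^2,2(b-a)x^2,(a-b)x^2\}$ from the isotypic block structure — reproduces exactly the conditions $a<b<0$, $\{2a+b<0,\ b-a<0\}$, and the contradiction $b-a<0$ together with $a-b<0$ that kills the $\widetilde{\mD}_{2\kappa}$ branch. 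Your identification of $\mD_8$, $\widetilde{\mC}_{6\kappa'}$ and $\widetilde{\mD}_{2\kappa}$ with the four-, three- and two-fold cube axes agrees with the paper's Figure of maximal isotropy subgroups, and your closing remarks on hyperbolicity are the right justification for why the cubic truncation decides stability away from the non-generic loci $b=0$, $a=b$, $2a+b=0$, $a+b=0$. What your elementary argument buys is a self-contained, checkable proof; what the citation to \cite{melbourne:86} buys instead is the full singularity-theoretic unfolding, including the treatment of the degenerate transition cases, the rigorous persistence statement under higher-order terms, and the generic nonexistence of submaximal branches — points your proposal correctly flags but only handles by genericity assumptions rather than by classification.
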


For each value of $(\nu,\eta)$ in $[0,5]\times[0,5]$ we have numerically computed the coefficients $a$ and $b$ given in equations \eqref{eq:alpha}-\eqref{eq:beta} and then checked if the stability conditions in Lemma \ref{lemma:stability} are satisfied. Our results are ploted in Figure \ref{fig:stab_oct}. We can see the different regions of the plane $(\nu,\eta)$ where the branches $\mD_8$ and $\widetilde{\mC}_{6\kappa'}$ are stable: in bright gray the region where $\mD_8$ is stable and in dark gray the region where $\widetilde{\mC}_{6\kappa'}$ is stable. Note that there is a whole region of parameter space where all the branches are unstable.

\begin{figure}[htp]
\centering
\includegraphics[width=0.7\textwidth]{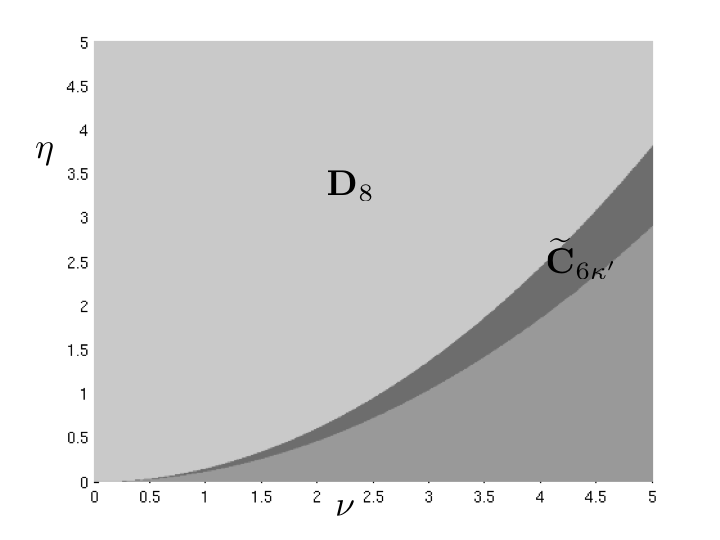}
\caption{Regions of the plane $(\nu,\eta)$ where the branches $\mD_8$ and $\widetilde{\mC}_{6\kappa'}$ are stable: in bright gray the region where $\mD_8$ is stable and in dark gray the region where $\widetilde{\mC}_{6\kappa'}$ is stable.}
\label{fig:stab_oct}
\end{figure}

\section{Radially localized solutions}\label{sec4}

Recently, there has been much progress made in understanding radially localized solutions
in the planar Swift-Hohenberg equation. For the
Swift-Hohenberg equation near the Turing instability, three types of
small amplitude radially symmetric localized solutions have been
proven to exist: a localized ring decaying to almost zero at the core,
a spot with a maximum at the origin (called spot A) and a spot with minimum at the origin
(called spot B); see
\cite{lloyd-sandstede:09,mccalla-sandstede:10,mccalla:11,mccalla-sandstede:12}. The
proofs rely on matching, at order $O(1/r^2)$, the ``core" manifold that
describes solutions that remain bounded near $r=0$ with the
``far-field" manifold that describes how solutions decay to the
trivial state for large $r$. The core manifold is found by carrying
out an asymptotic expansion involving Bessel functions while the
far-field manifold is calculated by carrying out a radial normal form
expansion near $r=\infty$. The main technical difficulty is that the
far-field normal form is only valid down to order $O(1/r)$ and so the
manifold has to be carefully followed up to order $O(1/r^2)$. Localized
rings occur due to a localized pulse in the far-field normal form
equations and require that the bifurcation of rolls at $r=\infty$ is
subcritical. Spot A solutions occur due to the unfolding of a
quadratic tangency of the core manifold and the cubic tangency of the
far-field manifold with the trivial state at onset; see \cite[Figure
4]{lloyd-sandstede:09}. The spot B state is formed by `gluing' the
spot A and localized ring solution. Crucially, all these localized
radial states are $\text{L}^2$-functions that can not be found via a
Lyapunov-Schmidt or center manifold reduction.

In this section, we will
only be interested in the existence of spot A type of solutions for
the Swift-Hohenberg equation \eqref{eq:sh}. The proof of the existence of such solutions will closely follow the one presented by Faye \etal \cite{faye-rankin-etal:12} for neural field equations on the Poincar\'e disk. In the hyperbolic case, the major difficulty comes from the fact that it is not clear how to define the
core and far-field manifolds in order to carry out the matching. It
turns out that the far-field manifold is easier to define than in the
Euclidean case since there is no bifurcation in the
far-field. However, calculating the core manifold is
significantly more involved than in the Euclidean case and constitutes
the main challenge in the existence proof of spots in hyperbolic
geometry. From now on we shall use the terms, spot and bump interchangeably to refer to the spot A states.

\subsection{Notations and definitions}

Throughtout this section, we work in geodesic polar coordinates $z=(\tau,\varphi)\in\D$, with $z=\tanh(\tau/2)e^{i\varphi}$. In these coordinates, the measure element defined in equation \eqref{eq:measure_element} is transformed into $\text{dm}(z)=\sinh(\tau)d\tau d\varphi$. Furthermore, in order to fix ideas, we set the value of $\alpha$ in equation \eqref{eq:sh} to be equal to $1$:
\bqq
\label{eq:sh1}
u_t=-(1+\Delta_\D)^2u+\lambda u+ \nu u^2- \eta u^3.
\eqq
The Laplace-Beltrami operator defined in equation \eqref{eq:laplace} can be written in geodesic polar coordinates as
\bqq
\label{eq:laplace_polar}
\Delta_\D=\frac{\partial^2}{\partial \tau^2}+\coth(\tau)\frac{\partial}{\partial \tau}+\sinh(\tau)^{-2}\frac{\partial^2}{\partial \varphi^2}.
\eqq
We define the radial part of the Laplace-Beltrami operator to be
\bqq
\label{eq:laplace_radial}
\Delta_\tau=\frac{\partial^2}{\partial \tau^2}+\coth(\tau)\frac{\partial}{\partial \tau}.
\eqq

Stationary radial solutions $u(\tau)$ of equation \eqref{eq:sh1} depend only on the radial variable $\tau=d_\D(z,0)$ and therefore satisfy the ordinary differential equation
\bqq
\label{eq:sh_stationary}
0=-(1+\Delta_\tau)^2u+\lambda u+ \nu u^2- \eta u^3.
\eqq
We are interested in finding localized solutions $u(\tau)$ of \eqref{eq:sh_stationary} that decay to zero as $\tau\rightarrow \infty$ and that belong to the functional space $\text{L}^2(\R^+,\sinh(\tau)d\tau)$. We shall therefore seek for such solutions for $\lambda<0$, where the background state $u=0$ is stable with respect to perturbations of the form $e^{\sigma t}e_{\rho,b}(z)$ (see \ref{subsub:linear}). In fact, our results are restricted to $0<-\lambda \ll 1$, and we shall construct localized radial solutions with small amplitude that bifurcate from $u=0$ at $\lambda=0$ into the region $\lambda<0$. 

In the Euclidean case, Bessel functions play a key role in this analysis of equation \eqref{eq:sh_stationary} close to the core $\tau=0$. In the hyperbolic setting, the analog of the Bessel functions are the associated Legendre functions of the first king $\mP_\alpha^\beta$ and second kind $\Q_\alpha^\beta$. We now recall their definition (see Erdelyi \cite{erdelyi:85}).
\begin{defi}
We denote $\mP_\alpha^\beta(z)$ and $\Q_\alpha^\beta(z)$ the two linearly independent solutions of the equation 
\bqq (1-z^2)\frac{d^2}{dz^2}
  u(z)-2z\frac{d}{dz} u(z)+ \left(
    \alpha(\alpha+1)-\frac{\beta^2}{1-z^2}\right)u(z)=0.
\label{eq:associated_legendre}
\eqq
$\mP_\alpha^\beta(z)$ and $\Q_\alpha^\beta(z)$ are respectively called associated Legendre function of the first and second kind. For $\beta=0$, we use the simplified notation $\mP_\alpha(z)=\mP_\alpha^0(z)$ and $\Q_\alpha(z)=\Q_\alpha^0(z)$.
\end{defi}

\subsection{Main result}

We can now state the main result of this section.
\begin{thm}[Existence of spot solutions]\label{thm:existence_spot_disk}
 Fix $\nu \neq 0$ and any $\eta \in \R$, then there exists $\lambda_*<0$ such that the Swift-Hohenberg equation \eqref{eq:sh1} has a stationary localized radial solution $u(\tau)\in L^2(\mathbb{R}^+,\sinh(\tau)d\tau)$ for each $\lambda\in(\lambda_*,0)$: these solutions stay close to $u=0$ and, for each fixed $\tau_*>0$, we have the asymptotics
\bqq
\label{eq:asymptotic_disk}
v(\tau)=\mathbf{c} \sqrt{|\lambda|}~\mP_{-\frac{1}{2}+i\frac{\sqrt{3}}{2}}(\cosh \tau)+O(\lambda) \text{ as }\lambda \rightarrow 0,
\eqq
uniformly in $0\leq \tau \leq \tau_*$ for an appropriate constant $\mathbf{c}$ with $\mathrm{sign}(\mathbf{c})=\mathrm{sign} (\nu)$.
\end{thm}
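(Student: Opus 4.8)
The plan is to recast the stationary radial equation \eqref{eq:sh_stationary} as a first–order dynamical system in the radial variable $\tau$ and to construct the desired solution as the intersection of a ``core'' manifold of profiles that stay bounded as $\tau\to 0$ with a ``far–field'' manifold of profiles lying in $\text{L}^2(\R^+,\sinh(\tau)d\tau)$ as $\tau\to\infty$. The starting observation, which replaces the role played by $J_0$ in the Euclidean theory, is that $\mP_{-\frac12+i\frac{\sqrt3}{2}}(\cosh\tau)$ spans the regular part of $\ker(1+\Delta_\tau)$: setting $z=\cosh\tau$ in \eqref{eq:associated_legendre} with $\beta=0$ gives $\Delta_\tau\mP_\alpha(\cosh\tau)=\alpha(\alpha+1)\mP_\alpha(\cosh\tau)$, and with $\alpha=-\frac12+i\frac{\sqrt3}{2}$ one has $\alpha(\alpha+1)=-1$, so that $(1+\Delta_\tau)\mP_{-\frac12+i\frac{\sqrt3}{2}}(\cosh\tau)=0$ and $(1+\Delta_\tau)^2$ annihilates it as well. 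Writing the fourth–order operator in first–order form on a $4$–dimensional phase space, the linearization has at $\tau=0$ a regular singular point (because $\coth\tau\sim\tau^{-1}$), with two solutions regular at the origin ($\mP$ and a generalized partner coming from the double root) and two logarithmically singular ones ($\Q$ and its partner).

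First I would build the core manifold, i.e. parametrize the two–dimensional set of solutions of \eqref{eq:sh_stationary} bounded as $\tau\to0^+$, following the desingularization strategy of \cite{faye-rankin-etal:12,scheel:03}: use $\{\mP,\Q\}$ together with the generalized eigenfunctions of $(1+\Delta_\tau)^2$ as a basis for the linear flow, rewrite the nonlinear problem by variation of constants, and apply a contraction argument in weighted norms that suppress the $\Q$–type (logarithmic) contributions. Since the bifurcating spot has small amplitude, I would expand solutions in powers of an amplitude $A$ tied to $\sqrt{|\lambda|}$ and push the expansion to the order at which the quadratic term $\nu u^2$ feeds back into the profile; it is this quadratic coefficient that produces the quadratic tangency of the core manifold with the trivial state, exactly as in the Euclidean spot~A analysis of \cite{lloyd-sandstede:09}. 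The leading core profile is $A\,\mP_{-\frac12+i\frac{\sqrt3}{2}}(\cosh\tau)$, and I would record the precise oscillatory asymptotics $\mP_{-\frac12+i\frac{\sqrt3}{2}}(\cosh\tau)\sim e^{-\tau/2}\bigl(c\,e^{i\frac{\sqrt3}{2}\tau}+\bar c\,e^{-i\frac{\sqrt3}{2}\tau}\bigr)$ as $\tau\to\infty$ to prepare the matching.

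Next I would set up the far–field manifold, which the hyperbolic geometry makes considerably simpler than in $\R^2$. Linearizing \eqref{eq:sh_stationary} at $u=0$ and using $\Delta_\tau\to\partial_\tau^2+\partial_\tau$ as $\tau\to\infty$, the characteristic equation is $(1+s+s^2)^2=\lambda$; for $\lambda<0$ its four roots perturb from the double pair $-\frac12\pm i\frac{\sqrt3}{2}$ and satisfy $\mathrm{Re}(s)=-\frac12\pm\frac{\sqrt{|\lambda|}}{\sqrt3}+O(|\lambda|)$, two roots with each sign. Weighting by $\sinh\tau\sim e^{\tau}$, exactly the pair with $\mathrm{Re}(s)<-\frac12$ yields $\text{L}^2$ decay; since no purely imaginary mode is present (there is no bifurcation of rolls at infinity, in contrast with the Euclidean case), the trivial state is hyperbolic in these coordinates and standard invariant–manifold theory gives a smooth, two–dimensional strong stable manifold of $\text{L}^2$–decaying profiles.

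Finally I would match the two manifolds on an overlap region $\tau\sim\tau_*$. Each is two–dimensional inside the four–dimensional phase space, so their intersection is generically zero–dimensional; reducing the difference of the two parametrizations by Lyapunov–Schmidt (or the implicit function theorem) yields a scalar equation relating $A$ and $\lambda$. At $\lambda=0$ the core profile is exactly marginal, its $e^{-\tau/2}$ decay sitting at the $\text{L}^2$ borderline, and switching on $\lambda<0$ together with the quadratic feedback unfolds this tangency; solving the resulting relation gives $A=\mathbf{c}\sqrt{|\lambda|}+O(|\lambda|)$ with $\mathrm{sign}(\mathbf{c})=\mathrm{sign}(\nu)$ (the quadratic term being the selection mechanism, which is why $\nu\neq0$ is required), and hence the asymptotics \eqref{eq:asymptotic_disk}. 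I expect the main obstacle to be the construction of the core manifold: the regular singular point at $\tau=0$ must be desingularized for the full fourth–order operator and the amplitude expansion carried to the order where $\nu u^2$ enters, the analogue of the delicate $O(1/r^2)$ bookkeeping of the Euclidean proof, which is precisely the step that is significantly harder than in $\R^2$.
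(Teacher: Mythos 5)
Your proposal is correct and follows essentially the same route as the paper: a two-dimensional core manifold built from $\mP_{\alpha_0}(\cosh\tau)$, $\Q_{\alpha_0}(\cosh\tau)$ (with $\alpha_0(\alpha_0+1)=-1$) via variation of constants and a contraction argument, whose quadratic tangency coefficient $\propto\nu\int_0^\infty\mP_{\alpha_0}(\cosh s)^3\sinh s\,ds$ is then matched against the strong stable (fast) manifold selected by the $\text{L}^2(\R^+,\sinh\tau\,d\tau)$ requirement, with the eigenvalue splitting $\Re(s)=-\tfrac12\pm\sqrt{|\lambda|/3}$ and the absence of a far-field bifurcation exactly as in the paper. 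The only implementation details you leave implicit--the compactification $\epsilon=\coth\tau-1$ making the far-field system autonomous and the normal-form change of coordinates used to track the fast manifold to the matching fiber--are subsumed in your appeal to standard invariant-manifold theory and do not constitute a different argument.
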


As for the Euclidean case, this theorem states that spots bifurcate for any value of $\nu \neq 0$, regardless of the value of $\eta$. The matching arguments in \ref{subsub:ffm} yield a similar theorem for the case $\lambda>0$ where one finds a bifurcating branch of solutions also given by \eqref{eq:asymptotic_disk}. However, these solutions are not $L^2(\mathbb{R}^+,\sinh(\tau)d\tau)$-functions.

Note that spots are initially unstable with respect to the PDE dynamics of the Swift-Hohenberg equation \eqref{eq:sh}. However, our numerical continuation results indicate that spots stabilize in a certain region in the $(\lambda,\nu)$-parameter space, at least when the nonlinear part of equation \eqref{eq:sh} is $\mathcal{N}(u)=\nu u^2-u^3$.

In the following subsection we sketch the proof of Theorem \ref{thm:existence_spot_disk}. It closely follows the one given in \cite{faye-rankin-etal:12} to which we refer for further details.

\subsection{Proof of Theorem \ref{thm:existence_spot_disk}}\label{sub:proof}

\subsubsection{The equation near the core}\label{subsub:enc}

We first rewrite equation \eqref{eq:sh_stationary} as a four dimensional system of non-autonomous differential equations to yield
\begin{align*}
 \partial_\tau u_1 &= u_3,\\
 \partial_\tau u_2 &= u_4,\\
 \left(\partial^2_\tau+\coth(\tau) \partial_\tau +1 \right)u_1 &= u_2,\\
 \left(\partial^2_\tau +\coth(\tau)\partial_\tau +1 \right) u_2 &= \lambda u_1+\nu u_1^2-\eta u_1^3,
\end{align*}
and we may rewrite \eqref{eq:sh_stationary} as a spatial dynamical system 
\bqq
\label{eq:core}
 U_\tau=\mathcal{A}(\tau)U+\mathcal{F}(U,\lambda),
\eqq
with
\bqs
\mathcal{A}(\tau)=\left( \begin{matrix}
 0 & 0 & 1 & 0 \\
 0 & 0 & 0 & 1 \\
 -1 & 1 & -\coth(\tau) & 0 \\
 0 & -1 & 0 & -\coth(\tau)
\end{matrix}
 \right),\quad \mathcal{F}(U,\lambda)=\left( \begin{matrix}
 0  \\
 0  \\
 0  \\
\lambda u_1+\nu u_1^2-\eta u_1^3
\end{matrix}
 \right).
\eqs
and $U=(u_1,u_2,u_3,u_4)^T$.

We begin our analysis by characterizing all small radial solutions of \eqref{eq:core} that are bounded and smooth in the interval $[0,\tau_0]$ for any fixed $\tau_0$. We set $\lambda=0$ and linearize \eqref{eq:core} about $U=0$ to get the linear system $\partial_\tau U=\mathcal{A}(\tau)U$. Proposition~\ref{prop:linear_syst_solution} defines the four linearly independent solutions of the linear system $\partial_\tau U=\mathcal{A}(\tau)U$.
\begin{prop}\label{prop:linear_syst_solution}
The linear system $\partial_\tau U=\mathcal{A}(\tau)U$ has four linearly independent solutions given by
\begin{eqnarray*}
 V_1(\tau)&=&\left(\mP_{\alpha_0}(\cosh \tau),0,\mP_{\alpha_0}^1(\cosh \tau),0 \right)^\mT \\
 V_2(\tau)&=&\left(V_2^1(\tau),\mP_{\alpha_0}(\cosh \tau),V_2^3(\tau),\mP_{\alpha_0}^1(\cosh \tau) \right)^\mT\\
 V_3(\tau)&=&\left(\Q_{\alpha_0}(\cosh \tau),0,\Q_{\alpha_0}^1(\cosh \tau),0 \right)^\mT\\
 V_4(\tau)&=&\left(V_4^1(\tau),\Q_{\alpha_0}(\cosh \tau),V_4^3(\tau),\Q_{\alpha_0}^1(\cosh \tau)\right)^\mT
\end{eqnarray*}
where 
\bqq
\label{eq:nu}
\alpha_0=-\frac{1}{2}+i\frac{\sqrt{3}}{2}
\eqq
and
\begin{align*}
 V_2^1(\tau)=&\mP_{\alpha_0}(\cosh \tau)\int_0^\tau \mP_{\alpha_0}(\cosh s)\Q_{\alpha_0}(\cosh s) \sinh(s)ds
-\Q_{\alpha_0}(\cosh \tau)\int_0^\tau \left(\mP_{\alpha_0}(\cosh s)\right)^2\sinh(s)ds,\\
V_2^3(\tau)=&\mP_{\alpha_0}^1(\cosh \tau)\int_0^\tau \mP_{\alpha_0}(\cosh s)\Q_{\alpha_0}(\cosh s) \sinh(s)ds
-\Q_{\alpha_0}^1(\cosh \tau)\int_0^\tau \left(\mP_{\alpha_0}(\cosh s)\right)^2\sinh(s)ds,\\
V_4^1(\tau)=&\mP_{\alpha_0}(\cosh \tau)\int_0^\tau \left(\Q_{\alpha_0}(\cosh s)\right)^2 \sinh(s)ds
-\Q_{\alpha_0}(\cosh \tau)\int_0^\tau \mP_{\alpha_0}(\cosh s)\Q_{\alpha_0}(\cosh s)\sinh(s)ds,\\
V_4^3(\tau)=&\mP_{\alpha_0}^1(\cosh \tau)\int_0^\tau \left(\Q_{\alpha_0}(\cosh s)\right)^2 \sinh(s)ds
-\Q_{\alpha_0}^1(\cosh \tau)\int_0^\tau \mP_{\alpha_0}(\cosh s)\Q_{\alpha_0}(\cosh s)\sinh(s)ds.
\end{align*}
\end{prop}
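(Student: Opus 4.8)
The plan is to exploit the block-triangular structure of $\mathcal{A}(\tau)$. Eliminating the auxiliary variables $u_3=\partial_\tau u_1$ and $u_4=\partial_\tau u_2$, the linear system $\partial_\tau U=\mathcal{A}(\tau)U$ is equivalent to the pair of scalar equations $(\Delta_\tau+1)u_2=0$ and $(\Delta_\tau+1)u_1=u_2$, with $\Delta_\tau$ the radial operator \eqref{eq:laplace_radial}. Thus $u_2$ solves a homogeneous second-order ODE while $u_1$ solves the same ODE forced by $u_2$, and the four-dimensional solution space is spanned by the choice $u_2=0$ (yielding the homogeneous solutions $V_1$ and $V_3$) together with the two choices of a nonzero homogeneous $u_2$ completed by a particular $u_1$ (yielding $V_2$ and $V_4$). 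The third and fourth slots are in every case recovered as $\partial_\tau$ of the first and second.

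First I would solve $(\Delta_\tau+1)w=w''+\coth(\tau)w'+w=0$ via $z=\cosh\tau$. Using $\partial_\tau=\sinh\tau\,\partial_z$, a direct computation gives $w''+\coth(\tau)w'=(z^2-1)w_{zz}+2zw_z$, so the equation becomes $(1-z^2)w_{zz}-2zw_z-w=0$, which is \eqref{eq:associated_legendre} with $\beta=0$ and $\alpha(\alpha+1)=-1$. The roots of $\alpha^2+\alpha+1=0$ are $\alpha_0=-\tfrac12+i\tfrac{\sqrt3}{2}$ and $\bar\alpha_0=-\alpha_0-1$; since the equation only sees $\alpha(\alpha+1)$, both give the same pair $\{\mP_{\alpha_0}(\cosh\tau),\Q_{\alpha_0}(\cosh\tau)\}$, a basis of solutions, and this directly produces $V_1$ and $V_3$ with $u_2=0$. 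For the derivative slots I would invoke the Erdelyi relation $\mP_\alpha^1(z)=(z^2-1)^{1/2}\,\tfrac{d}{dz}\mP_\alpha(z)$ for $z>1$ (and likewise for $\Q$), specialized at $z=\cosh\tau$, which reads $\partial_\tau\mP_{\alpha_0}(\cosh\tau)=\mP_{\alpha_0}^1(\cosh\tau)$; this accounts for the degree-one associated functions appearing in the third and fourth components of every $V_j$.

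It then remains to build the particular solutions $V_2^1$ and $V_4^1$ of $(\Delta_\tau+1)u_1=\mP_{\alpha_0}(\cosh\tau)$ and $(\Delta_\tau+1)u_1=\Q_{\alpha_0}(\cosh\tau)$. I would use variation of parameters, writing the operator in Sturm--Liouville form $\tfrac{1}{\sinh\tau}(\sinh\tau\,u_1')'+u_1=f$, so that the $\tau$-Wronskian of $\mP_{\alpha_0}(\cosh\tau)$ and $\Q_{\alpha_0}(\cosh\tau)$ equals $c/\sinh\tau$ for a constant $c$; this $c$ is fixed by the classical Legendre Wronskian (proportional to $(z^2-1)^{-1}$) together with the chain rule, and it is precisely the normalization that makes the stated integral formulas solve the forced equations. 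Differentiating $V_2^1$ and $V_4^1$ yields $V_2^3$ and $V_4^3$: the two boundary contributions from the variable upper limits cancel, being proportional to $\mP_{\alpha_0}(\cosh\tau)\cdot\mP_{\alpha_0}(\cosh\tau)\Q_{\alpha_0}(\cosh\tau)\sinh\tau-\Q_{\alpha_0}(\cosh\tau)\cdot\mP_{\alpha_0}(\cosh\tau)^2\sinh\tau=0$, leaving exactly the claimed expressions. Linear independence follows from the block structure, since the full $4\times4$ Wronskian factors through the nonvanishing Wronskian of $\mP_{\alpha_0}(\cosh\tau)$ and $\Q_{\alpha_0}(\cosh\tau)$.

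The step I expect to be the main obstacle is controlling these formulas at the core $\tau=0$, where $\coth\tau$ is singular and $\Q_{\alpha_0}(\cosh\tau)$ diverges logarithmically. One must verify that the variation-of-parameters integrals $\int_0^\tau(\cdots)\sinh(s)\,ds$ converge at the lower endpoint: as $s\to0$ one has $\mP_{\alpha_0}(\cosh s)$ bounded, $\Q_{\alpha_0}(\cosh s)=O(\ln s)$ and $\sinh s=O(s)$, so the three integrands behave like $s$, $s\ln s$ and $s(\ln s)^2$, all integrable. This guarantees that $V_2$ and $V_4$ extend to bounded, smooth solutions on $[0,\tau_0]$, which is exactly the regularity class required for the subsequent core/far-field matching.
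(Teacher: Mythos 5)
Your proposal is correct and follows essentially the same route as the paper's proof: reduce the system via its triangular structure to $(\Delta_\tau+1)u_2=0$, $(\Delta_\tau+1)u_1=u_2$, identify $\mP_{\alpha_0}(\cosh\tau)$ and $\Q_{\alpha_0}(\cosh\tau)$ as the homogeneous basis through $z=\cosh\tau$ and $\alpha_0(\alpha_0+1)=-1$, and construct $V_2$, $V_4$ by variation of parameters using the Erdelyi Wronskian identity $\mP_{\alpha_0}(\cosh\tau)\Q_{\alpha_0}^1(\cosh\tau)-\mP_{\alpha_0}^1(\cosh\tau)\Q_{\alpha_0}(\cosh\tau)=-1/\sinh\tau$. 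Your additional verifications (cancellation of boundary terms when differentiating, integrability of the integrands at $\tau=0$, and the Wronskian argument for independence) are details the paper leaves implicit, but they do not change the method.
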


\begin{proof}
From equation \eqref{eq:associated_legendre}, the associated
  Legendre functions $\mP_{\alpha}(\cosh \cdot)$ and $\Q_{\alpha}(\cosh \cdot)$
  form a basis of solutions for the equation \bqs
\partial^2_\tau \Psi(\tau)+\coth(\tau) \partial_\tau \Psi(\tau)-{\alpha}({\alpha}+1)\Psi(\tau)=0. 
\eqs
If $\alpha=-\frac{1}{2}+i\frac{\sqrt{3}}{2}$ then we have ${\alpha}({\alpha}+1)=-1$. This implies that $\mP_{-\frac{1}{2}+i\frac{\sqrt{3}}{2}}(\cosh \tau)$ and $\Q_{-\frac{1}{2}+i\frac{\sqrt{3}}{2}}(\cosh \tau)$ are solutions of
\bqs
\partial^2_\tau \Psi(\tau)+\coth(\tau) \partial_\tau \Psi(\tau)+\Psi(\tau)=0.
\eqs
From now on $\alpha=\alpha_0=-\frac{1}{2}+i\frac{\sqrt{3}}{2}$. The solution of linear system $\partial_\tau U=\mathcal{A}(\tau)U$ can be found by inspecting the equivalent system
\bqs
\left(\partial^2_\tau+\coth(\tau) \partial_\tau +1\right)u_1=u_2,\quad \left(\partial^2_\tau+\coth(\tau) \partial_\tau +1\right)u_2=0.
\eqs
Consequently, solutions $V_1(\tau)=\left(\mP_{\alpha_0}(\cosh \tau),0,\mP_{\alpha_0}^1(\cosh \tau),0 \right)^\mT$ and $V_3(\tau)=\left(\Q_{\alpha_0}(\cosh \tau),0,\Q_{\alpha_0}^1(\cosh \tau),0 \right)^\mT$ are found to be two linearly independent solutions, where we have used the relations
\bqs
\partial_\tau \left(\mP_{\alpha_0}(\cosh \tau) \right)=\mP_{\alpha_0}^1(\cosh \tau), \quad \partial_\tau \left(\Q_{\alpha_0}(\cosh \tau) \right)=\Q_{\alpha_0}^1(\cosh \tau).
\eqs
In order to find the other two linearly independent solutions, we have to solve the two equations
\bqq
\label{eq:teq1}
\left(\partial_\tau^2+\coth \tau \partial_\tau+1\right) u(\tau)=\mP_{\alpha_0}( \cosh \tau),
\eqq
\bqq
\label{eq:teq2}
\left(\partial_\tau^2+\coth \tau \partial_\tau+1\right) u(\tau)=\Q_{\alpha_0}( \cosh \tau).
\eqq

From formula in \cite[page 123]{erdelyi:85}
 \bqs
 \mP_\alpha( \cosh \tau)\Q_\alpha^1(\cosh \tau)-\mP_\alpha^1( \cosh
\tau)\Q_\alpha(\cosh \tau)=-\frac{1}{\sinh \tau}, \quad \tau>0, \quad \nu\in\C
 \eqs
and straightforward computations, we obtain that the solutions of \eqref{eq:teq1} are given by
\begin{align*}
u(\tau)=c_1\mP_{\alpha_0}( \cosh \tau)+c_2\Q_{\alpha_0}(\cosh \tau)+\mP_{\alpha_0}(\cosh \tau)\int_0^\tau \mP_{\alpha_0}(\cosh s)\Q_{\alpha_0}(\cosh s) \sinh(s)ds\\
-\Q_{\alpha_0}(\cosh \tau)\int_0^\tau \left(\mP_{\alpha_0}(\cosh s)\right)^2\sinh(s)ds,
\end{align*}
with $c_1,c_2\in\R$ two real constants. Equivalently, the solutions of \eqref{eq:teq2} are given by
\begin{align*}
u(\tau)=c_3\mP_{\alpha_0}( \cosh \tau)+c_4\Q_{\alpha_0}(\cosh \tau)+\mP_{\alpha_0}(\cosh \tau)\int_0^\tau \left(\Q_{\alpha_0}(\cosh s)\right)^2 \sinh(s)ds\\
-\Q_{\alpha_0}(\cosh \tau)\int_0^\tau \mP_{\alpha_0}(\cosh s)\Q_{\alpha_0}(\cosh s)\sinh(s)ds,
\end{align*}
with constants $c_3,c_4\in\R$.

Chosing only linearly independent solutions, we finally obtain the result stated in the proposition.
\end{proof}

In Table \ref{table:expansion}, we summarize the expansions of the
associated Legendre functions in the limits $\tau\rightarrow 0$ and
$\tau \rightarrow \infty$; Proposition~\ref{prog:exp_consts} defines
the constants in the $\tau\rightarrow\infty$ limit. Thus, $V_1(\tau)$
and $V_2(\tau)$ stay bounded as $\tau\rightarrow 0$, while the norms
of $V_3(\tau)$ and $V_4(\tau)$ behave like $\ln \tau$ as
$\tau\rightarrow 0$. We expect that the set of solutions of
\eqref{eq:core} that are bounded as $\tau\rightarrow 0$ forms a
two-dimensional manifold in $\R^4$ for each fixed $\tau>0$. We denote
the projection $P_-^{cu}(\tau_0)$ onto the space spanned by
$V_1(\tau_0),V_2(\tau_0)$ with null space given by the span of
$V_3(\tau_0),V_4(\tau_0)$.

\begin{table}[h]   
\begin{center}
\begin{tabular}{|c|c|c|} \hline  & $\tau\rightarrow 0$ & $\tau \rightarrow \infty$ \\ \hline
 $\mP_{\alpha_0}(\cosh \cdot)$ & $1+O(\tau^2)$ & $C_0\cos(\frac{\sqrt{3}\tau}{2}+\Phi_0)e^{-\frac{\tau}{2}}+O(e^{-\frac{3\tau}{2}})$  \\ 
 $\mP_{\alpha_0}^1(\cosh \cdot)$ & $\tau \left(-\frac{1}{2}+O(\tau^2)\right)$ & $C_0\cos(\frac{\sqrt{3}\tau}{2}+\Phi_0+\frac{2\pi}{3})e^{-\frac{\tau}{2}}+O(e^{-\frac{3\tau}{2}})$\\ 
 $\Q_{\alpha_0}( \cosh \cdot)$ & $(-1+O(\tau^2))\ln \tau +O(1)$ & $C_1\cos(\frac{\sqrt{3}\tau}{2}-\Phi_1)e^{-\frac{\tau}{2}}+O(e^{-\frac{3\tau}{2}})$\\ 
 $\Q_{\alpha_0}^1( \cosh \cdot)$ & $(1+O(\tau^2))\tau \ln \tau -\frac{1}{\tau}+O(1)$ & $C_1\cos(\frac{\sqrt{3}\tau}{2}-\Phi_1+\frac{2\pi}{3})e^{-\frac{\tau}{2}}+O(e^{-\frac{3\tau}{2}})$\\ \hline
\end{tabular}
\end{center}
\caption{Expansions of associated Legendre functions $\mP_{\alpha_0}^k(\cosh \cdot)$ and $\Q_{\alpha_0}^k(\cosh \cdot)$ for $\tau\rightarrow 0$ and $\tau \rightarrow \infty$; see \cite{erdelyi:85,virchenko-fedotova:01}. $C_0,C_1,\Phi_0$ and $\Phi_1$ are all real constants given in Proposition~\ref{prog:exp_consts}.}\label{table:expansion}
\end{table}

\begin{prop}\label{prog:exp_consts}
 The constants $C_0,C_1,\Phi_0$ and $\Phi_1$, given in Table \ref{table:expansion}, are
 \bqq
\label{eq:C0_C1_formula}
C_0=2\sqrt{\frac{2\sqrt{3}}{3\pi\tanh\left(\frac{\sqrt{3}\pi}{2}\right)}}\text{ and } C_1=\sqrt{\frac{2\pi\sqrt{3}\tanh\left(\frac{\sqrt{3}\pi}{2}\right)}{3}},
\eqq
 \bqq
\label{eq:relation_Phi}
\Phi_0=\arg \left(\frac{\Gamma\left(i\frac{\sqrt{3}}{2} \right)}{\Gamma\left(\frac{1}{2}+i\frac{\sqrt{3}}{2}\right)} \right) \text{ and } \Phi_0+\Phi_1=-\frac{\pi}{2}.
\eqq
\end{prop}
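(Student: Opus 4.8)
The plan is to read off the four constants from the classical large-argument asymptotics of the Legendre functions (equivalently the $\tau\to\infty$ limit, since $\cosh\tau\sim\frac12 e^{\tau}$), using the hypergeometric connection formulas in Erdelyi \cite{erdelyi:85}. Write $\mu=\frac{\sqrt3}{2}$ so that $\alpha_0=-\frac12+i\mu$. The connection formula expresses $\mP_{\alpha_0}(\cosh\tau)$ as a sum of two contributions proportional to $(2\cosh\tau)^{\alpha_0}\sim e^{(-\frac12+i\mu)\tau}$ and to $(2\cosh\tau)^{-\alpha_0-1}\sim e^{(-\frac12-i\mu)\tau}$, with coefficients $\frac{1}{\sqrt\pi}\frac{\Gamma(i\mu)}{\Gamma(\frac12+i\mu)}$ and its complex conjugate. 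Because $\alpha_0$ and $-\alpha_0-1=\overline{\alpha_0}$ yield the same Legendre function, $\mP_{\alpha_0}(\cosh\tau)$ is real, and the two conjugate exponentials combine into $C_0\cos(\mu\tau+\Phi_0)e^{-\tau/2}$ with $C_0=\frac{2}{\sqrt\pi}\,|\Gamma(i\mu)/\Gamma(\tfrac12+i\mu)|$ and $\Phi_0=\arg\frac{\Gamma(i\mu)}{\Gamma(\frac12+i\mu)}$; the latter is exactly the stated value, and the $O(e^{-3\tau/2})$ remainder comes from the next term of the expansion together with $2\cosh\tau=e^{\tau}(1+e^{-2\tau})$.

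First I would reduce $C_0$ to closed form with the reflection formula. For real $\mu$ one has $|\Gamma(i\mu)|^2=\frac{\pi}{\mu\sinh(\pi\mu)}$ and $|\Gamma(\frac12+i\mu)|^2=\frac{\pi}{\cosh(\pi\mu)}$, so $|\Gamma(i\mu)/\Gamma(\tfrac12+i\mu)|^2=\frac{1}{\mu\tanh(\pi\mu)}$ and hence $C_0^2=\frac{4}{\pi\mu\tanh(\pi\mu)}$; substituting $\mu=\frac{\sqrt3}{2}$ gives precisely $C_0=2\sqrt{2\sqrt3/(3\pi\tanh(\sqrt3\pi/2))}$. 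For $\Q_{\alpha_0}$ the large-$\tau$ expansion has a single dominant term proportional to $(2\cosh\tau)^{-\alpha_0-1}$ with coefficient $\frac{\sqrt\pi\,\Gamma(\alpha_0+1)}{\Gamma(\alpha_0+\frac32)}=\frac{\sqrt\pi\,\Gamma(\frac12+i\mu)}{\Gamma(1+i\mu)}$; using $\Gamma(1+i\mu)=i\mu\,\Gamma(i\mu)$ its modulus gives $C_1^2=\frac{\pi\tanh(\pi\mu)}{\mu}$, which for $\mu=\frac{\sqrt3}{2}$ is the stated $C_1$. Here $\Q_{\alpha_0}$ is taken in the real normalization of Proposition~\ref{prop:linear_syst_solution} consistent with the Wronskian used below, so that its amplitude is the modulus of this coefficient.

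The cleanest route to the two phase statements is through the derivative relations and the Wronskian already invoked in Proposition~\ref{prop:linear_syst_solution}. Differentiating $\mP_{\alpha_0}(\cosh\tau)\sim C_0\cos(\mu\tau+\Phi_0)e^{-\tau/2}$ and retaining the leading $e^{-\tau/2}$ part, the identity $-\frac12\cos\theta-\mu\sin\theta=\cos(\theta+\frac{2\pi}{3})$ (valid precisely because $\cos\frac{2\pi}{3}=-\frac12$ and $\sin\frac{2\pi}{3}=\mu$, and with unit amplitude since $\mu^2+\frac14=1$) shows that $\mP_{\alpha_0}^1=\partial_\tau\mP_{\alpha_0}$ keeps the amplitude $C_0$ and shifts the phase by $+\frac{2\pi}{3}$, matching Table~\ref{table:expansion}; the same computation handles $\Q_{\alpha_0}^1$. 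Finally, inserting the four asymptotics into $\mP_{\alpha_0}(\cosh\tau)\Q_{\alpha_0}^1(\cosh\tau)-\mP_{\alpha_0}^1(\cosh\tau)\Q_{\alpha_0}(\cosh\tau)=-\frac{1}{\sinh\tau}$, the oscillatory $\cos(2\mu\tau+\cdots)$ terms cancel at order $e^{-\tau}$ and a product-to-sum reduction leaves $\frac{\sqrt3}{2}C_0C_1\sin(\Phi_0+\Phi_1)e^{-\tau}$ on the left, while the right-hand side is $\sim-2e^{-\tau}$. Since the amplitude formulas give $C_0C_1=\frac{2}{\mu}=\frac{4}{\sqrt3}$, this forces $\sin(\Phi_0+\Phi_1)=-1$, i.e. $\Phi_0+\Phi_1=-\frac{\pi}{2}$.

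I expect the main obstacle to be the careful bookkeeping of the complex Gamma-function coefficients and the branch/sign conventions for $\Q_{\alpha_0}$: one must track exactly which reflection and recurrence relations ($\Gamma(z)\Gamma(1-z)=\pi/\sin\pi z$ and $\Gamma(1+i\mu)=i\mu\,\Gamma(i\mu)$) are used, so that the $\sinh$ and $\cosh$ factors combine into the single $\tanh(\sqrt3\pi/2)$, and so that the table's phase conventions (the $-\Phi_1$ and the $+\frac{2\pi}{3}$ shifts) emerge with the correct signs. The Wronskian identity is the device that makes the relation $\Phi_0+\Phi_1=-\frac{\pi}{2}$ robust, since it bypasses an independent, convention-sensitive evaluation of $\Phi_1$ from the $\Q$ connection coefficient; as a consistency check one may note that $\Phi_1=\arg\frac{\Gamma(\frac12+i\mu)}{\Gamma(1+i\mu)}$ also gives $\Phi_0+\Phi_1=\arg\frac{1}{i\mu}=-\frac{\pi}{2}$ directly.
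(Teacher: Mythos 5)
Your proposal is correct, and it is worth noting that the paper itself gives no argument for this proposition: its ``proof'' is a one-line citation to Faye, Rankin and Lloyd \cite{faye-rankin-etal:12}. You have therefore supplied a genuine self-contained derivation, and all of its computational content checks out. The connection coefficients are right: with $\mu=\tfrac{\sqrt3}{2}$, the two branches $(2\cosh\tau)^{\alpha_0}$ and $(2\cosh\tau)^{-\alpha_0-1}$ carry the conjugate coefficients $\pi^{-1/2}\Gamma(i\mu)/\Gamma(\tfrac12+i\mu)$ and its conjugate, and since $|\Gamma(i\mu)|^2=\pi/(\mu\sinh\pi\mu)$ and $|\Gamma(\tfrac12+i\mu)|^2=\pi/\cosh\pi\mu$, one gets $C_0^2=4/(\pi\mu\tanh\pi\mu)=8\sqrt3/\bigl(3\pi\tanh(\sqrt3\pi/2)\bigr)$, matching \eqref{eq:C0_C1_formula}; likewise $C_1^2=\pi\tanh(\pi\mu)/\mu$ via $\Gamma(1+i\mu)=i\mu\,\Gamma(i\mu)$. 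The phase-shift computation for the order-one functions is exact, since $\mu^2+\tfrac14=1$ makes $-\tfrac12\cos\theta-\mu\sin\theta=\cos(\theta+\tfrac{2\pi}{3})$ an identity with unit amplitude, and your Wronskian argument closes correctly: the cross terms at order $e^{-\tau}$ reduce to $\sin(\tfrac{2\pi}{3})\sin(\Phi_0+\Phi_1)$, and with $C_0C_1=2/\mu=4/\sqrt3$ the balance against $-1/\sinh\tau\sim-2e^{-\tau}$ forces $\sin(\Phi_0+\Phi_1)=-1$. Using the Wronskian to pin down $\Phi_0+\Phi_1$ is a nice, convention-robust device, exactly because an independent evaluation of $\Phi_1$ is sensitive to the normalization of $\Q_{\alpha_0}$.

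One point deserves a line of justification rather than the parenthetical remark you give it: for $\alpha_0=-\tfrac12+i\mu$ the function $\Q_{\alpha_0}(\cosh\tau)$ is genuinely complex, so the table's entry $C_1\cos(\tfrac{\sqrt3}{2}\tau-\Phi_1)e^{-\tau/2}$ only describes a real normalization of it. The fix is short: since $\overline{\Q_{\alpha_0}}=\Q_{\overline{\alpha_0}}=\Q_{-\alpha_0-1}$ and $\Q_{-\nu-1}=\Q_\nu-\pi\cot(\nu\pi)\mP_\nu$ with $\cot(\alpha_0\pi)=-i\tanh(\mu\pi)$, one gets $\operatorname{Im}\Q_{\alpha_0}(\cosh\tau)=-\tfrac{\pi}{2}\tanh(\mu\pi)\,\mP_{\alpha_0}(\cosh\tau)$, so replacing $\Q_{\alpha_0}$ by its real part changes it only by a multiple of $\mP_{\alpha_0}$, preserves the Wronskian $\mP\,\Q^1-\mP^1\Q=-1/\sinh\tau$, and yields precisely the cosine asymptotics with $C_1$ the modulus of the connection coefficient. (As a bonus, the proportionality $\operatorname{Im}\Q_{\alpha_0}\propto\mP_{\alpha_0}$ at leading order independently reconfirms $\Phi_0+\Phi_1=-\tfrac{\pi}{2}$.) You should also note that the $O(e^{-3\tau/2})$ remainders in Table \ref{table:expansion} contribute only $O(e^{-2\tau})$ to the Wronskian, so the leading $e^{-\tau}$ balance you exploit is legitimate. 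With these two sentences added, your argument is a complete proof of the proposition, filling in what the paper delegates to the external reference.
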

\begin{proof}
See \cite{faye-rankin-etal:12}.
\end{proof}

We are now able to present the hyperbolic equivalent of Lemma 1 \cite{lloyd-sandstede:09} for the center-unstable manifold $\mathcal{W}_-^{cu}(\lambda)$, the set of bounded and continuous solutions of \eqref{eq:core} close to $\tau=0$. That is, fix $\tau_0>0$ and $\delta_0>0$:
\bqs
\mathcal{W}_-^{cu}(\lambda)=\{U\in \mathcal{C}^0([0,\tau_0],\R^4)\text{ solution of \eqref{eq:core} }~|~ \sup_{0\leq \tau\leq
    \tau_0}|U(\tau)|<\delta_0\text{ for }|\lambda|<\delta_0\}.
\eqs
    
\begin{lem}\label{lem:core_disk}
  Fix $\tau_0>0$, then there exist constants $\delta_0,\delta_1$ such
  that the set $\mathcal{W}_-^{cu}(\lambda)$ of solutions $U(\tau)$ of
  \eqref{eq:core} for which $\sup_{0\leq \tau\leq
    \tau_0}|U(\tau)|<\delta_0$ is, for $|\lambda|<\delta_0$, is a smooth
  two-dimensional manifold. Furthermore,
  $U\in\mathcal{W}_-^{cu}(\lambda)$ with
  $|P_-^{cu}(\tau_0)U(\tau_0)|<\delta_1$ if and only if 
\bqq
\begin{split}
U(\tau_0)&=\tilde d_1 V_1(\tau_0)+\tilde d_2 V_2(\tau_0)+V_3(\tau_0)O_{\tau_0}(|\lambda||\tilde d|+|\tilde d|^2)\\
~&~+V_4(\tau_0)\left(-\left[\mathcal{I}+o(1) \right] \nu \tilde{d}_1^2+O_{\tau_0}(|\lambda||\tilde d|+|\tilde d_2|^2+|\tilde d_1|^3) \right),
\label{eq:estimate_tau0_disk}
\end{split}
\eqq
with
\bqs
 \mathcal{I}=\int_0^{\infty}\left(\mP_{\alpha_0}( \cosh s) \right)^3 \sinh sds <\infty,\\
\eqs
for some $\tilde d=(\tilde d_1,\tilde d_2)\in\R^2$ with $|\tilde d|<\delta_1$, where the right hand side in \eqref{eq:estimate_tau0_disk} depends smoothly on $(\tilde d,\lambda)$.
\end{lem}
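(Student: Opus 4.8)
The plan is to follow the core-manifold construction of \cite{lloyd-sandstede:09}, as adapted to the hyperbolic setting in \cite{faye-rankin-etal:12}, recasting \eqref{eq:core} as an integral equation built from the explicit fundamental system $\{V_1,V_2,V_3,V_4\}$ of Proposition~\ref{prop:linear_syst_solution}. First I would use the Wronskian identity $\mP_{\alpha_0}(\cosh\tau)\Q_{\alpha_0}^1(\cosh\tau)-\mP_{\alpha_0}^1(\cosh\tau)\Q_{\alpha_0}(\cosh\tau)=-1/\sinh\tau$ recalled from \cite{erdelyi:85} to write down the variation-of-constants (Green's function) representation of solutions of $U_\tau=\mathcal{A}(\tau)U+\mathcal{F}(U,\lambda)$. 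The key structural point is that $V_1,V_2$ remain bounded while $V_3,V_4$ diverge like $\ln\tau$ as $\tau\to0$ (Table~\ref{table:expansion}); hence I would assign the free data $\tilde d_1,\tilde d_2$ to the bounded modes and determine the coefficients of the singular modes $V_3,V_4$ by integrals starting at $\tau=0$, precisely so that the logarithmically growing contributions are cancelled and $U$ stays bounded on $[0,\tau_0]$.

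Next I would solve this integral equation by a contraction-mapping argument on $\mathcal{C}^0([0,\tau_0],\R^4)$ restricted to the ball $\sup_{0\le\tau\le\tau_0}|U(\tau)|<\delta_0$. Because $\mathcal{F}(U,\lambda)$ has only a fourth component, equal to $\lambda u_1+\nu u_1^2-\eta u_1^3$, the nonlinear map is Lipschitz with a small constant once $\delta_0$ and $|\lambda|$ are small, so the fixed point exists and is unique for each $(\tilde d,\lambda)$ with $|\tilde d|<\delta_1$. The uniform contraction principle then gives smooth dependence of the fixed point on $(\tilde d,\lambda)$, which identifies $\mathcal{W}_-^{cu}(\lambda)$ as a smooth two-dimensional manifold parametrised by $\tilde d=(\tilde d_1,\tilde d_2)$, these being exactly the coordinates of $P_-^{cu}(\tau_0)U(\tau_0)$ in the basis $V_1(\tau_0),V_2(\tau_0)$. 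This simultaneously yields both implications of the stated equivalence: every bounded solution near $0$ is of the form \eqref{eq:estimate_tau0_disk}, and every small $\tilde d$ produces one.

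It then remains to extract the leading-order coefficients by evaluating the fixed point at $\tau=\tau_0$ and reading off its components in the basis $\{V_i(\tau_0)\}$. To leading order the solution near the core is $u_1(\tau)\approx \tilde d_1\,\mP_{\alpha_0}(\cosh\tau)$, since the $u_1$-component $V_2^1$ of $V_2$ is higher order there (it is $O(\tau^2\ln\tau)$ as $\tau\to0$), which is also why contributions involving $\tilde d_2$ enter only at higher order. Feeding this into the singular-mode integrals, the linear term $\lambda u_1$ together with the quadratic and cubic terms produce a $V_3$-coefficient of size $O_{\tau_0}(|\lambda||\tilde d|+|\tilde d|^2)$, while the quadratic term $\nu u_1^2$ contributes to the $V_4$-coefficient the dominant quantity $-\nu\tilde d_1^2\int_0^{\tau_0}(\mP_{\alpha_0}(\cosh s))^3\sinh s\,ds$; the cubic term, the $\lambda$-correction, and the $\tilde d_2$-contributions are of order $O_{\tau_0}(|\lambda||\tilde d|+|\tilde d_2|^2+|\tilde d_1|^3)$, giving \eqref{eq:estimate_tau0_disk}.

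The main obstacle is the precise bookkeeping of this last step, namely showing that the $V_4$-coefficient equals $-[\mathcal{I}+o(1)]\nu\tilde d_1^2$. This requires the convergence of $\mathcal{I}=\int_0^\infty(\mP_{\alpha_0}(\cosh s))^3\sinh s\,ds$ --- integrability at $\infty$ following from the decay $\mP_{\alpha_0}(\cosh s)\sim C_0\cos(\frac{\sqrt{3}}{2}s+\Phi_0)e^{-s/2}$ in Table~\ref{table:expansion}, and at $s=0$ from $\mP_{\alpha_0}(\cosh s)\to1$ --- together with careful control of the $o(1)$ corrections coming from the difference between the true $u_1$ and its leading term $\tilde d_1\mP_{\alpha_0}(\cosh\cdot)$. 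I expect this asymptotic matching of orders, rather than the functional-analytic fixed-point set-up, to be the delicate part, and I would rely on the detailed estimates of \cite{faye-rankin-etal:12}.
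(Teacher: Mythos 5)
Your proposal is correct and takes essentially the same approach as the paper: its proof likewise sets up a fixed-point equation on $\mathcal{C}^0([0,\tau_0],\R^4)$ with the bounded-mode integrals based at $\tau_0$ and the singular-mode integrals based at $0$, solved by the uniform contraction mapping principle, the only cosmetic difference being that the paper writes your variation-of-constants (Green's function) projections through explicit solutions $W_1,\dots,W_4$ of the adjoint system $\partial_\tau U=-\mathcal{A}^T(\tau)U$, normalized by the same Wronskian identity you quote so that $\langle V_i,W_j\rangle_{\R^4}=\delta_{i,j}$. The quadratic $V_4$-coefficient is then extracted exactly as you indicate, via $\nu\int_0^{\tau_0}W_{4,4}(s)\left(\mP_{\alpha_0}(\cosh s)\right)^2 ds=-\nu\left[\mathcal{I}+o(1)\right]$.
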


\begin{proof}
We observe that four independent solutions to the adjoint problem $\partial_\tau U=-\mathcal{A}^T(\tau)U$ are given by (see \cite{faye-rankin-etal:12}):
\begin{eqnarray*}
 W_1(\tau)&=&\sinh \tau  \left(\Q_{\alpha_0}^1(\cosh \tau),W_1^2(\tau),-\Q_{\alpha_0}(\cosh \tau),W_1^4(\tau) \right)^\mT,\\
 W_2(\tau)&=&\sinh \tau \left(0,\Q_{\alpha_0}^1(\cosh \tau),0,-\Q_{\alpha_0}(\cosh \tau)\right)^\mT,\\
 W_3(\tau)&=&\sinh \tau \left(\mP_{\alpha_0}^1( \cosh \tau),W_3^2(\tau),-\mP_{\alpha_0}( \cosh \tau),W_3^4(\tau) \right)^\mT \\
 W_4(\tau)&=&\sinh \tau \left(0,\mP_{\alpha_0}^1( \cosh \tau),0,-\mP_{\alpha_0}( \cosh \tau)\right)^\mT,
\end{eqnarray*}
where
\begin{align*}
 W_1^2(\tau)=&\mP_{\alpha_0}^1(\cosh \tau)\int_0^\tau \mP_{\alpha_0}^1( \cosh s) \Q_{\alpha_0}^1(\cosh s) \sinh(s)ds- \Q_{\alpha_0}^1(\cosh \tau)\int_0^\tau \left(\mP_{\alpha_0}^1( \cosh s) \right)^2\sinh(s)ds,\\
 W_1^4(\tau)=&-\Q_{\alpha_0}(\cosh \tau)-\mP_{\alpha_0}(\cosh \tau)\int_0^\tau \mP_{\alpha_0}^1( \cosh s) \Q_{\alpha_0}^1(\cosh s) \sinh(s)ds+ \Q_{\alpha_0}(\cosh \tau)\int_0^\tau \left(\mP_{\alpha_0}^1( \cosh s) \right)^2\sinh(s)ds,\\
 W_3^2(\tau)=&\mP_{\alpha_0}^1(\cosh \tau)\int_0^\tau \left(\Q_{\alpha_0}^1(\cosh s)\right)^2 \sinh(s)ds-\Q_{\alpha_0}^1(\cosh \tau)\int_0^\tau \mP_{\alpha_0}^1(\cosh s)\Q_{\alpha_0}^1(\cosh s)\sinh(s)ds,\\
 W_3^4(\tau)=&-\mP_{\alpha_0}( \cosh \tau)-\mP_{\alpha_0}(\cosh \tau)\int_0^\tau \left(\Q_{\alpha_0}^1(\cosh s)\right)^2 \sinh(s)ds-\Q_{\alpha_0}(\cosh \tau)\int_0^\tau \mP_{\alpha_0}^1(\cosh s)\Q_{\alpha_0}^1(\cosh s)\sinh(s)ds.
\end{align*}

It follows from 
\bqs
\sinh \tau\left(\mP_{\alpha_0}^1( \cosh \tau)\Q_{\alpha_0}(\cosh \tau)-\mP_{\alpha_0}( \cosh \tau)\Q_{\alpha_0}^1(\cosh \tau)\right)=1,
\eqs
that
\bqs
\langle V_i(\tau),W_j(\tau) \rangle_{\R^4}=\delta_{i,j} \quad i,j=1,\dots,4,
\eqs
is independent of $\tau$. For a given $\tilde d=(\tilde d_1,\tilde d_2)\in\R^2$, we consider the fixed-point equation:
\begin{eqnarray}
 U(\tau)&=& \sum_{j=1}^2\tilde d_j V_j(\tau)+\sum_{j=1}^2 V_j(\tau)\int_{\tau_0}^\tau\langle W_j(s),\mathcal{F}(U(s),\lambda) \rangle ds + \sum_{j=3}^4 V_j(\tau)\int_{0}^{\tau}\langle W_j(s),\mathcal{F}(U(s),\lambda) \rangle ds\nonumber \\
~&=& \sum_{j=1}^2\tilde d_j V_j(\tau)+\sum_{j=1}^2 V_j(\tau)\int_{\tau_0}^\tau W_{j,4}(s)\mathcal{F}_4(U(s),\lambda) ds + \sum_{j=3}^4 V_j(\tau)\int_{0}^{\tau} W_{j,4}(s)\mathcal{F}_4(U(s),\lambda) ds \nonumber \\
\label{eq:fixed_point_disk}
\end{eqnarray}
on $\mathcal{C}^0([0,\tau_0],\R^4)$, where $W_{j,4}(\tau)$ (resp. $\mathcal{F}_4(U(\tau),\lambda)$) denotes the fourth component of $W_j(\tau)$ (resp. $\mathcal{F}(U(\tau),\lambda)$). 

Hence, we have that:
\begin{itemize}
 \item Each solution $U\in\mathcal{C}^0([0,\tau_0],\R^4)$ of \eqref{eq:fixed_point_disk} gives a solution of \eqref{eq:core} that is bounded on $[0,\tau_0]$.
 \item Every bounded solution $U\in\mathcal{C}^0([0,\tau_0],\R^4)$ of
   \eqref{eq:core} satisfies \eqref{eq:fixed_point_disk} provided
   that we add $\tilde d_3 V_3(\tau)+\tilde d_4V_4(\tau)$ to the right
   hand side for an appropriate $\tilde d\in\R^4$.
 \item Existence of solutions of \eqref{eq:fixed_point_disk} is given by the uniform contraction mapping principle for sufficiently small $(\tilde d_1,\tilde d_2)$ and $\lambda$.
 \item The resulting solution $U$ satisfies
$U(\tau)=\sum_{j=1}^2\tilde d_j V_j(\tau)+O_{\tau_0}(|\lambda||\tilde d|+|\tilde d|^2)$ on $[0,\tau_0]$.
\end{itemize}

We also need to compute the quadratic coefficient in $\tilde d$ in
front of $V_4(\tau_0)$. Using a Taylor expansion, we find that this coefficient is given by
\bqs
\nu \int_0^{\tau_0}W_{4,4}(s) \left(\mP_{\alpha_0}( \cosh s) \right)^2ds=-\nu \left[\int_0^{\infty}\left(\mP_{\alpha_0}( \cosh s) \right)^3 \sinh sds +o(1)\right].
\eqs
\end{proof}

\subsubsection{The far-field equations and matching}\label{subsub:ffm}

In this section, we look into the far-feild regime, where the radial variable $\tau$ is large. We make spatial dynamical system \eqref{eq:core} autonomous by augmenting the system with the equation  $\partial_\tau \epsilon=-\epsilon(2+\epsilon)$ where $\epsilon(\tau)=\coth \tau-1$ to yield the new system
\bqq
\frac{d}{d\tau}\left(\begin{array}{lcl}
                   u_1\\
                   u_2\\
                   u_3\\
                   u_4\\
                   \epsilon
                  \end{array}
 \right)= \left(\begin{matrix}
                   u_3\\
                   u_4\\
                   -u_1+u_2-(1+\epsilon) u_3\\
                   -u_2-(1+\epsilon) u_4+\lambda u_1+\nu u_1^2-\eta u_1^3\\
                   -\epsilon(2+\epsilon)
                  \end{matrix}
 \right).
\label{eq:far_field}
\eqq

In the remainder of this section, we focus on the regime $\epsilon \approx 0$ which corresponds to the far field $\tau \gg 1$. We denote  $\mathcal{A}(\infty,\lambda)$ by the matrix
\bqs
 \mathcal{A}(\infty,\lambda)=\left( \begin{matrix}
 0 & 0 & 1 & 0 \\
 0 & 0 & 0 & 1 \\
 -1 & 1 & -1 & 0 \\
 \lambda & -1 & 0 & -1
\end{matrix}
 \right),
\eqs
where $\partial_{\tau}U = \mathcal{A}(\tau,\lambda)U$ is
the linearisation of \eqref{eq:far_field} about the
trivial state.  We find that the matrix $\mathcal{A}(\infty,0)$ has
four eigenvalues $\alpha_0,\bar \alpha_0$ with multiplicity two ($\alpha_0$ is
defined in equation \eqref{eq:nu}); see
Figure~\ref{fig:eigenvalues}. As $\Re (\nu)=-1/2$, the trivial
state $U=0$ is asymptotically stable at $\lambda=0$ and then there is
no bifurcation at the far field. Recall that in the Euclidean case a Turing instability occurs at infinity. In Figure \ref{fig:eigenvalues}, we summarize how the eigenvalues $\ell$ of $\mathcal{A}(\infty,\lambda)$ split close to $\lambda=0$. For $\lambda>0$, there exist four complex conjugate eigenvalues with $\Re (\ell) =-1/2$. For $\lambda<0$, there exist four complex conjugate eigenvalues with $\Re (\ell) \neq-1/2$ and stable manifold $\mathcal{W}^s_+$ is the union of the stable fast manifold (which we denote by $\mathcal{W}^{sf}_+$) and the stable slow manifold (which we denote by $\mathcal{W}^{ss}_+$) corresponding to the fast and slow decay to the trivial state. 

\begin{figure}[h]
\centering
\includegraphics[width=0.8\textwidth]{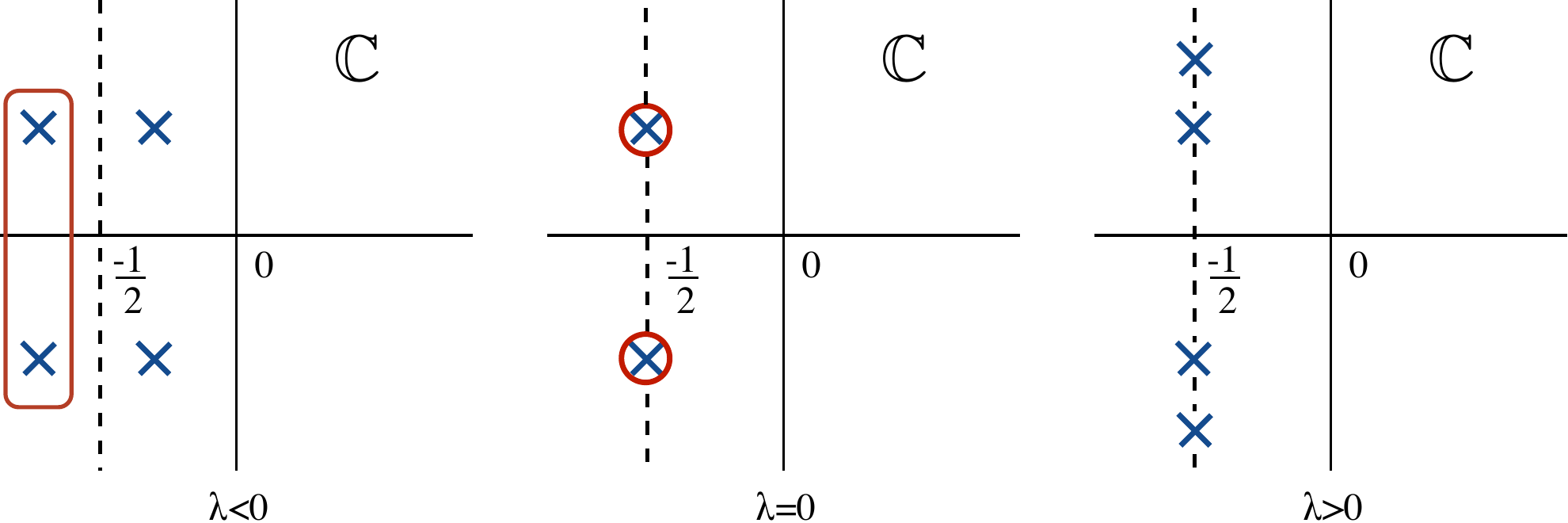}
 \caption{Schematic splitting of the eigenvalues $\ell$ of $\mathcal{A}(\infty,\lambda)$ for different values of $\lambda$. At $\lambda=0$, the multiplicity is two. Eigenvalues in the red box correspond to the stable fast manifold $\mathcal{W}_+^{sf}(\lambda)$.}
 \label{fig:eigenvalues}
\end{figure}

First we argue that the centre-unstable manifold $\mathcal{W}^{cu}_-$ and stable manifold $\mathcal{W}^s_+$ should intersect. We have that $V_1(\tau)$ and $V_3(\tau)$ decay like $e^{-\tau/2}$, while $V_2(\tau)$ and $V_4(\tau)$ decay like $\tau e^{-\tau/2}$ as $\tau\rightarrow \infty$. Hence the tangent space of the stable manifold at $(u,\lambda)=0$ is spanned by $\left(V_1(\tau),V_2(\tau),V_3(\tau),V_4(\tau)\right)$. On the other hand, we showed in Lemma \ref{lem:core_disk} that the tangent space of the core manifold is spanned by $V_1(\tau)$ and $V_2(\tau)$. Then these tangent spaces would intersect along the two-dimensional subspace spanned by $V_1(\tau)$ and $V_2(\tau)$.

In order to show that the centre-unstable manifold $\mathcal{W}^{cu}_-$ intersects with the stable fast manifold $\mathcal{W}^{sf}_+$, we need to find an explicit description of $\mathcal{W}^{sf}_+$. To do this, we use successive, well chosen change of variables to put \eqref{eq:far_field} into normal form. We first define the linear change of coordinates
\bqs
 U=\widetilde{A}\left( \begin{matrix}
 1  \\
 0  \\
 \alpha_0  \\
 0 
\end{matrix}
 \right)+\widetilde{B}\left( \begin{matrix}
 0 \\
 2\alpha_0+1  \\
 1 \\
 \alpha_0(2\alpha_0+1)
\end{matrix}
 \right)+\text{c.c.},
\eqs
or equivalently,
\bqs
\left( \begin{matrix}
 \widetilde{A}\\
\widetilde{B}
\end{matrix}
 \right)=\left( \begin{matrix}
\frac{1}{2} u_1+i\frac{\sqrt{3}}{3}\left(-\frac{1}{2} u_1-\frac{1}{3} u_2- u_3-\frac{2}{3} u_4 \right)\\
-\frac{1}{3} \left(\frac{1}{2} u_2+u_4\right)-i\frac{\sqrt{3}}{6} u_2\end{matrix}
 \right),\quad\mbox{and}\quad  U=( u_1, u_2, u_3, u_4)^T.
\eqs

In these coordinates, the linear part of \eqref{eq:far_field} becomes at $\lambda=0$,
\begin{eqnarray}\label{eq:transformed_disk}
 \partial_\tau \widetilde{A}&=&\left(-\frac{1}{2}-\frac{\epsilon}{2}+ i\left[ \frac{\sqrt{3}}{2}-\frac{\epsilon\sqrt{3}}{6}\right]\right)\widetilde{A}+\left(1+\frac{\epsilon}{3} \right)\widetilde{B}+\epsilon \left(\frac{1}{2}- \frac{i\sqrt{3}}{6}\right) \overline{\widetilde{A}}-\frac{\epsilon}{3}\overline{\widetilde{B}}, \nonumber \\
 \partial_\tau \widetilde{B}&=&\left( -\frac{1}{2}-\frac{\epsilon}{2}+ i\left[ \frac{\sqrt{3}}{2}-\frac{\epsilon\sqrt{3}}{6}\right] \right)\widetilde{B}+\epsilon \left(-\frac{1}{2}+\frac{i\sqrt{3}}{6}\right)\overline{\widetilde{B}}, \nonumber \\
\partial_\tau \epsilon &=& -\epsilon(2+\epsilon).
\end{eqnarray}

\begin{lem}\label{lem:hyper_normal}
 Fix $0<m<\infty$, then there exists a change of coordinates
\bqq
\label{eq:change_coordinate_AB_disk}
\left(\begin{matrix}
       A\\
B
      \end{matrix}
 \right)=e^{-i\phi(r)}[1+\mathcal{T}(\epsilon)]
\left( \begin{matrix}
 \widetilde{A}\\
\widetilde{B}
\end{matrix}
 \right)+O((|\lambda|+|\widetilde{A}|+|\widetilde{B}|)(|\widetilde{A}|+|\widetilde{B}|)),
\eqq
so that \eqref{eq:transformed_disk} becomes

\begin{eqnarray}\label{eq:normal_transformed_disk}
 \partial_\tau A&=& \left(-\frac{1}{2}-\frac{\epsilon}{2} \right)A +  B +\text{h.o.t.},  \nonumber\\
\partial_\tau B &=& \left(-\frac{1}{2}-\frac{\epsilon}{2} \right)B - \frac{1}{3} \lambda A +\text{h.o.t.}, \nonumber\\
\partial_\tau \epsilon &=&-\epsilon(2+\epsilon).
\end{eqnarray}

The coordinate change is polynomial in $(A,B,\epsilon)$ and smooth in $\lambda$ and $\mathcal{T}(\epsilon)=O(\epsilon)$ is linear and upper triangular for each $\epsilon$, while $\phi(r)$ satisfies
\bqs
\partial_r \phi(r)=\frac{\sqrt{3}}{2} +O(|\lambda|+|\epsilon|+|A|^2),\quad \phi(0)=0.
\eqs

\end{lem}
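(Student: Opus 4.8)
The plan is to reduce \eqref{eq:transformed_disk} to \eqref{eq:normal_transformed_disk} by a sequence of near-identity changes of variables, in the spirit of the radial normal form of Lloyd--Sandstede \cite{lloyd-sandstede:09} and its hyperbolic counterpart \cite{faye-rankin-etal:12}. Throughout I treat $\epsilon=\coth\tau-1=O(e^{-2\tau})$ as a dynamical coordinate, so that any monomial carrying a factor $\epsilon$ counts as higher order, and $\lambda$ as a small smooth parameter. The guiding principle is that the eigenvalues $\alpha_0,\bar\alpha_0$ have a single nonzero frequency $\tfrac{\sqrt3}{2}$, so every term coupling $(\widetilde A,\widetilde B)$ to their complex conjugates is non-resonant and removable, while the only obstruction that must be retained is the Jordan coupling between the two copies of the $\alpha_0$-eigenspace.

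First I would pass to a co-rotating frame by setting $(A_1,B_1)^{\mT}=e^{-i\phi}(\widetilde A,\widetilde B)^{\mT}$ and choosing $\phi$ so that $\partial_\tau\phi$ equals the imaginary part $\tfrac{\sqrt3}{2}-\tfrac{\epsilon\sqrt3}{6}$ of the diagonal coefficient in \eqref{eq:transformed_disk}. This renders the diagonal real and equal to $-\tfrac12-\tfrac\epsilon2$, and multiplies the two conjugate terms $\epsilon(\tfrac12-\tfrac{i\sqrt3}{6})\overline{\widetilde A}$ and $-\tfrac\epsilon3\overline{\widetilde B}$ by the rapidly oscillating factor $e^{-2i\phi}$, while leaving the real off-diagonal coupling $1+\tfrac\epsilon3$ unchanged since both components are rotated by the same phase. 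I then apply the linear, upper-triangular, $O(\epsilon)$ map $[1+\mathcal{T}(\epsilon)]$ to normalize this coupling from $1+\tfrac\epsilon3$ to exactly $1$; because $\mathcal{T}$ is upper triangular it respects the Jordan structure, and the extra terms it generates through $\partial_\tau\epsilon=-\epsilon(2+\epsilon)$ remain $O(\epsilon)$ and hence higher order.

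Next I would remove the surviving non-resonant terms. After the rotation the oscillatory conjugate terms are products of the decaying coordinate $\epsilon$ with a barred variable, hence \emph{quadratic} in $(A,B,\epsilon)$; together with the conjugate part of the $\lambda$-term they are eliminated by a further polynomial near-identity transformation whose homological equations are solvable precisely because the frequency $\sim\sqrt3$ is bounded away from zero. The resulting correction is bounded uniformly in $\epsilon$ and smooth in $\lambda$, so it is absorbed into the remainder $O((|\lambda|+|\widetilde A|+|\widetilde B|)(|\widetilde A|+|\widetilde B|))$ of \eqref{eq:change_coordinate_AB_disk}. For the parameter dependence I would track the single source term $\lambda u_1$ in the $u_4$-equation of \eqref{eq:far_field}: inverting the defining linear change of variables gives $u_1=\widetilde A+\overline{\widetilde A}$ and $\partial_{u_4}\widetilde B=-\tfrac13$, so its resonant part contributes exactly $-\tfrac13\lambda A$ to the $B$-equation, while the $\lambda$-term it simultaneously produces in the $A$-equation is removed by a $\lambda$-linear shear exploiting the Jordan block (a versal-deformation normalization) that leaves the parameter in the canonical companion position. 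Finally, the quadratic and cubic terms $\nu u_1^2,\eta u_1^3$ transform into monomials that are either higher order in amplitude, hence h.o.t., or resonant cubic phase terms, which I absorb into the amplitude-dependent correction $O(|A|^2)$ of $\partial_r\phi$; none of them alters the displayed linear part, yielding \eqref{eq:normal_transformed_disk}.

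The hardest part will be this third step: establishing that the non-resonant (conjugate and oscillatory) terms can be removed by a transformation that is simultaneously polynomial in $(A,B,\epsilon)$, smooth in $\lambda$, and uniformly bounded as $\tau\to\infty$, i.e. proving solvability and uniform control of the homological equations while their coefficients carry the decaying factor $\epsilon=O(e^{-2\tau})$. Verifying that the accumulated corrections to the phase indeed take the stated form $\partial_r\phi=\tfrac{\sqrt3}{2}+O(|\lambda|+|\epsilon|+|A|^2)$ with $\phi(0)=0$, and that the coefficient of the $\lambda$-coupling is exactly $-\tfrac13$, requires carrying the bookkeeping of all three transformations consistently; this is the routine but delicate computational core, which is worked out in detail in \cite{faye-rankin-etal:12}.
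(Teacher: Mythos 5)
Your proposal is correct and takes essentially the same route as the paper, which gives no self-contained proof of Lemma \ref{lem:hyper_normal} but defers to the normal-form construction of Faye \etal \cite{faye-rankin-etal:12} (itself adapted from Lloyd and Sandstede \cite{lloyd-sandstede:09}): a rotating phase absorbing the imaginary diagonal part (and later the $O(|\lambda|)$ and $O(|A|^2)$ phase corrections), an upper-triangular $O(\epsilon)$ linear normalization of the Jordan coupling, a polynomial near-identity transformation removing the non-resonant conjugate, quadratic and cubic terms, and a $\lambda$-linear shear that places the parameter in the companion position with the coefficient $-\tfrac{1}{3}\lambda$ you correctly extract from $\partial_{u_4}\widetilde{B}=-\tfrac{1}{3}$ and $u_1=\widetilde{A}+\overline{\widetilde{A}}$. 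One bookkeeping caveat: the terms generated by $1+\mathcal{T}(\epsilon)$ through $\partial_\tau\epsilon=-\epsilon(2+\epsilon)$ are $\epsilon$-linear in $(A,B)$ and hence of the \emph{same} order as the retained $-\tfrac{\epsilon}{2}$ diagonal, not higher order as you assert; they are nevertheless removable by the homological step you already invoke, because for a monomial $\epsilon^k$ times a linear map the homological operator is $\mathrm{ad}_N+2k$ with $N$ nilpotent, invertible for every $k\geq 1$ --- a non-resonance supplied by the decay rate of $\epsilon$, not only by the frequency $\sqrt{3}$ being bounded away from zero.
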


Note that at $(\alpha,\lambda)=(0,0)$, the trivial state $(A,B)=(0,0)$ is hyperbolic such that the higher order terms in equation \eqref{eq:normal_transformed_disk} are exponentially small for $\tau\gg 1$ and $\lambda$ small enough and can be neglected. We can also directly solve the linear part of equation \eqref{eq:normal_transformed_disk}, for $\lambda<0$, to obtain
\bqq
\left(\begin{array}{ll}
     A(\tau) \\
     B(\tau)
      \end{array}
 \right)= \frac{1}{\sqrt{\sinh(\tau)}}\left[  q_1 e^{-\tau\sqrt{\frac{-\lambda}{3}}}\left(\begin{array}{c}
      1 \\
       -\sqrt{\frac{-\lambda}{3}}
      \end{array}
 \right)
+q_2 e^{\tau\sqrt{\frac{-\lambda}{3}}}\left(\begin{array}{c}
        1\\
        \sqrt{\frac{-\lambda}{3}}
      \end{array}
 \right)\right].
\label{eq:center_stable_AB_disk}
\eqq

We want to find solutions that have a finite energy density with
respect to the hyperbolic measure, i.e. functions that are in
$\text{L}^2(\R^+,\sinh(\tau)d\tau)$. This restriction implies that we
need to track the stable fast manifold $\mathcal{W}_+^{sf}(\lambda)$ of
equation \eqref{eq:center_stable_AB_disk} which corresponds to
eigenvalues $\ell$ of $\mathcal{A}(\infty,\lambda)$ with real part
less than $-\frac{1}{2}$ as shown in Figure
\ref{fig:eigenvalues}. Thus, for each fixed $\tau_0\gg 1$ and
for all sufficiently small $\lambda<0$, we can write the
$\tau=\tau_0$-fiber of the stable fast manifold $\mathcal{W}_+^{sf}(\lambda)$
of equation \eqref{eq:center_stable_AB_disk} near $U=0$ as

\bqq
\mathcal{W}_+^{sf}(\lambda)\mid_{\tau=\tau_0}~:~\left(\begin{array}{ll}
     A \\
     B
      \end{array}
 \right)= e^{-\tau_0/2}\left[ -\mu \sqrt{\frac{-\lambda}{3}}\left(1+O_{\tau_0}(|\lambda|) \right)\left(\begin{array}{ll}
       \tau_0 \\
       1
      \end{array}
 \right)
+\mu \sqrt{\frac{-\lambda}{3}} \left(\begin{array}{ll}
        1\\
        0
      \end{array}
 \right)\right],
\label{eq:stable_fast_AB_disk}
\eqq
for $\mu \in \C$.

We can now finish the proof of Theorem~\ref{thm:existence_spot_disk}. To do this, we need to  find nontrivial intersections of the stable fast manifold $\mathcal{W}_+^{sf}(\lambda)$ with the centre-unstable manifold $\mathcal{W}_-^{cu}(\lambda)$. To this end, we write the expansion \eqref{eq:estimate_tau0_disk} for each fixed $\tau_0\gg 1$ in the $(\tilde A,\tilde B)$ coordinates and afterwards in the coordinates $(A,B)$. Using the expansions of the associated Legendre  functions given in Table \ref{table:expansion} we arrive at the expression
\bqs
\left(\begin{array}{ll}
       \tilde A \\
       \tilde B
      \end{array}
 \right)=e^{-\tau_0/2} \left[ e^{i\left(\frac{\sqrt{3}}{2}\tau_0+\Phi_0\right)}\left( \begin{array}{ll}
\frac{C_0}{2}\tilde d_1(1+O(1))+\tau_0\tilde d_2\left(-i\frac{\sqrt{3}C_0}{6}+O(1)\right)\\
-\tilde d_2(i\frac{\sqrt{3}C_0}{6}+O(1))+\frac{C_1\sqrt{3}}{6}\left( -\nu(\mathcal{I}+o(1))\tilde{d}_1^2  \right)
                                \end{array}
 \right)\right.
\eqs
\bqq
\label{eq:Wcu_tatb_disk}
\left. +e^{i\left(\frac{\sqrt{3}}{2}\tau_0+\Phi_0\right)}\left( \begin{array}{ll}
                         O_{\tau_0}(\lambda|\tilde d|+|\tilde d|^2)\\
                         O_{\tau_0}(\lambda|\tilde d|+|\tilde d_2|^2+|\tilde d_1|^3)
                        \end{array}
\right)\right].
\eqq

We can apply the transformation \eqref{eq:change_coordinate_AB_disk} to equation \eqref{eq:Wcu_tatb_disk} and obtain the expansion
\begin{align}
\mathcal{W}_-^{cu}(\lambda)\mid _{\tau=\tau_0}~:~\left(\begin{array}{ll}
        A \\
        B
      \end{array}
 \right)=&e^{i(\Phi_0+O(\tau_0^{-2})+O_{\tau_0}(\lambda|\tilde d|+|\tilde d|^2)}\left( \begin{array}{c}
                         O_{\tau_0}(\lambda|\tilde d|+|\tilde d|^2)\\
                         O_{\tau_0}(\lambda|\tilde d|+|\tilde d_2|^2+|\tilde d_1|^3)
                        \end{array}
\right)\nonumber\\
&+e^{i(\Phi_0+O(\tau_0^{-2})+O_{\tau_0}(\lambda|\tilde d|+|\tilde d|^2)}\left( \begin{array}{c}
 \frac{C_0}{2}\tilde d_1(1+O(1))+\tau_0\tilde d_2(-i\frac{\sqrt{3}C_0}{6}+O(1))\\
 -\tilde d_2(i\frac{\sqrt{3}C_0}{6}+O(1))
                                \end{array}
 \right)\nonumber\\
\label{eq:Wcu_AB_disk}
&+e^{i(\Phi_0+O(\tau_0^{-2})+O_{\tau_0}(\lambda|\tilde d|+|\tilde d|^2)}\left( \begin{array}{c}
 0\\
\frac{C_1\sqrt{3}}{6}\left(-\nu(\mathcal{I}+o(1))\tilde{d}_1^2 \right)
                                \end{array}
 \right).
\end{align}

The final step of the analysis consists in finding nontrivial intersections of the stable fast manifold $\mathcal{W}_+^{sf}(\lambda)$ given above in equation \eqref{eq:stable_fast_AB_disk} and the core manifold $\mathcal{W}_-^{cu}(\lambda)$ given in \eqref{eq:Wcu_AB_disk}. We can easily solve this problem in $(\tilde d_1,\tilde d_2)$ to find that
\bqs
\tilde d_1 = \frac{2\sqrt{-\lambda}}{\nu C_1\mathcal{I}},\quad \text{and}\quad \tilde d_2=O(\lambda).
\eqs
Then $\mathbf{c}=\dfrac{2}{\nu C_1\mathcal{I}}$ in equation \eqref{eq:asymptotic_disk} and $\text{sign}(\mathbf{c})=\text{sign}(\nu)$. This completes the proof of Theorem~\ref{thm:existence_spot_disk}.

\subsection{Numerical computation of spots}\label{sub:numerics}

In this section, we describe the use of numerical continuation (and
the continuation package
\textsc{AUTO07p}~\cite{doedel-champneys-etal:97}) to compute solutions
of the systems of ODEs described by \eqref{eq:core}. Solutions of
these spatial dynamical system correspond to steady states of the
Swift-Hohenberg equation \eqref{eq:sh} where the radial
coordinate  $\tau$ has been recast as time in
\textsc{AUTO07p}'s boundary value problem (BVP) solver. The BVP is set
up on the domain  $\tau\in[0,L]$ with homogeneous
Neumann boundary conditions given by $u_2(0)=u_4(0)=u_1(L)=u_3(L)=0$.
Typical parameters for the \textsc{AUTO07p} radial
computations are $L=1000$ and \textsc{AUTO07p}'s NTST=400 with
standard relative tolerances that are specified in \textsc{AUTO07p}'s
manual. Initial data for continuations are obtained by first solving the radial Swift-Hohenberg equation \eqref{eq:sh_stationary} with the {\sc fsolve} routine of Matlab and then using a parameter continuation to add in the radial terms.

\begin{figure}[h]
\centering
\includegraphics[width=0.6\textwidth]{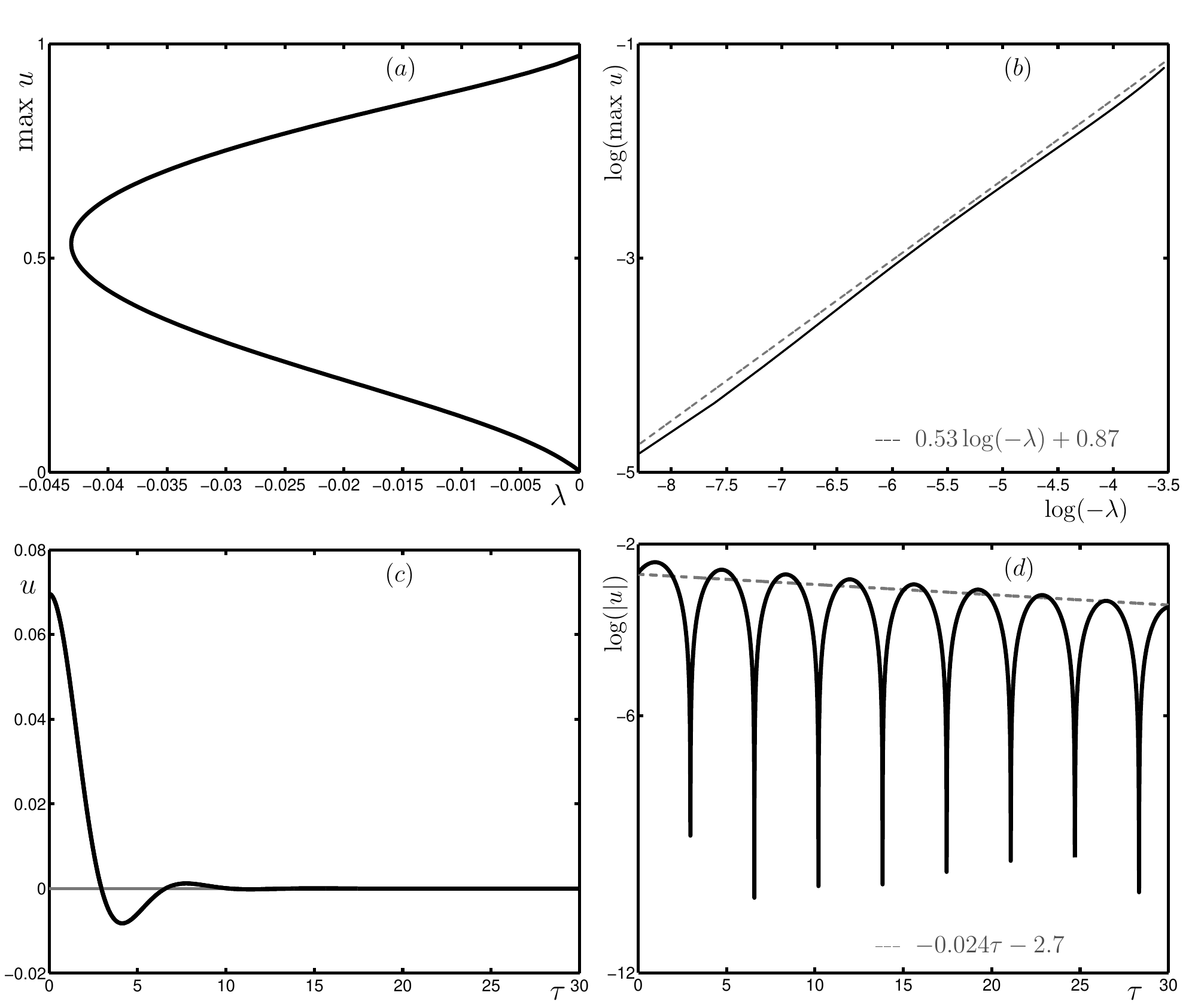}
 \caption{Spots of \eqref{eq:sh_stationary} are shown for $\nu=0.8$ and $\eta=1$. (a) Bifurcation diagram of spots. (b) Log-plot of the maximum height $u(0)$ versus the parameter $\lambda$: the solid line is the result of numerical computations, which we fit with the dashed line $0.53\log(-\lambda)+0.87$. $\left(\text{c} \right)$ Profile of a typical spot. (d) Decay rate of the tails of the solutions rescaled solutions $u(\tau)e^{\tau/2}$ in black at $\lambda=-0.015$. Linear best fits is shown in gray. }
 \label{fig:spot08}
\end{figure}

In Theorem \ref{thm:existence_spot_disk}, we have shown the existence of spots of equation \eqref{eq:sh_stationary} for any fixed $\nu\neq 0$ and $\eta\in \R$ when $0<|\lambda|\ll 1$. We compute spot solutions for $\nu=0.8$ and $\eta=1$ and summarize the results in Figure \ref{fig:spot08}. Spots do bifurcate off $u=0$ at $\lambda=0$ and turn around at a saddle-node bifurcation at $\lambda=-0.0425$. At this fold, spots regain stability with respect to radial perturbations, but they remain unstable with respect to general perturbations. The corresponding bifurcation diagram (see Figure  \ref{fig:spot08}(a)) is similar to the one computed in Euclidean geometry by Lloyd \& Sandstede \cite{lloyd-sandstede:09}. The computations confirm the scaling $u(0)\approx \sqrt{|\lambda|}$ as $\lambda\rightarrow 0$. We also verify in Figure \ref{fig:spot08}(d) that the spots found in Theorem  \ref{thm:existence_spot_disk} are $\text{L}^2(\R^+,\sinh(\tau)d\tau)$ functions. Close to onset at $\lambda=-0.015$, we plot the decay rate of the tail of the rescaled spots $u(\tau)e^{\tau/2}$ which confirms that spots decay faster than $e^{-\tau/2}$ and hence are in $\text{L}^2(\R^+,\sinh(\tau)d\tau)$. 

\begin{figure}[h]
\centering
\includegraphics[width=0.6\textwidth]{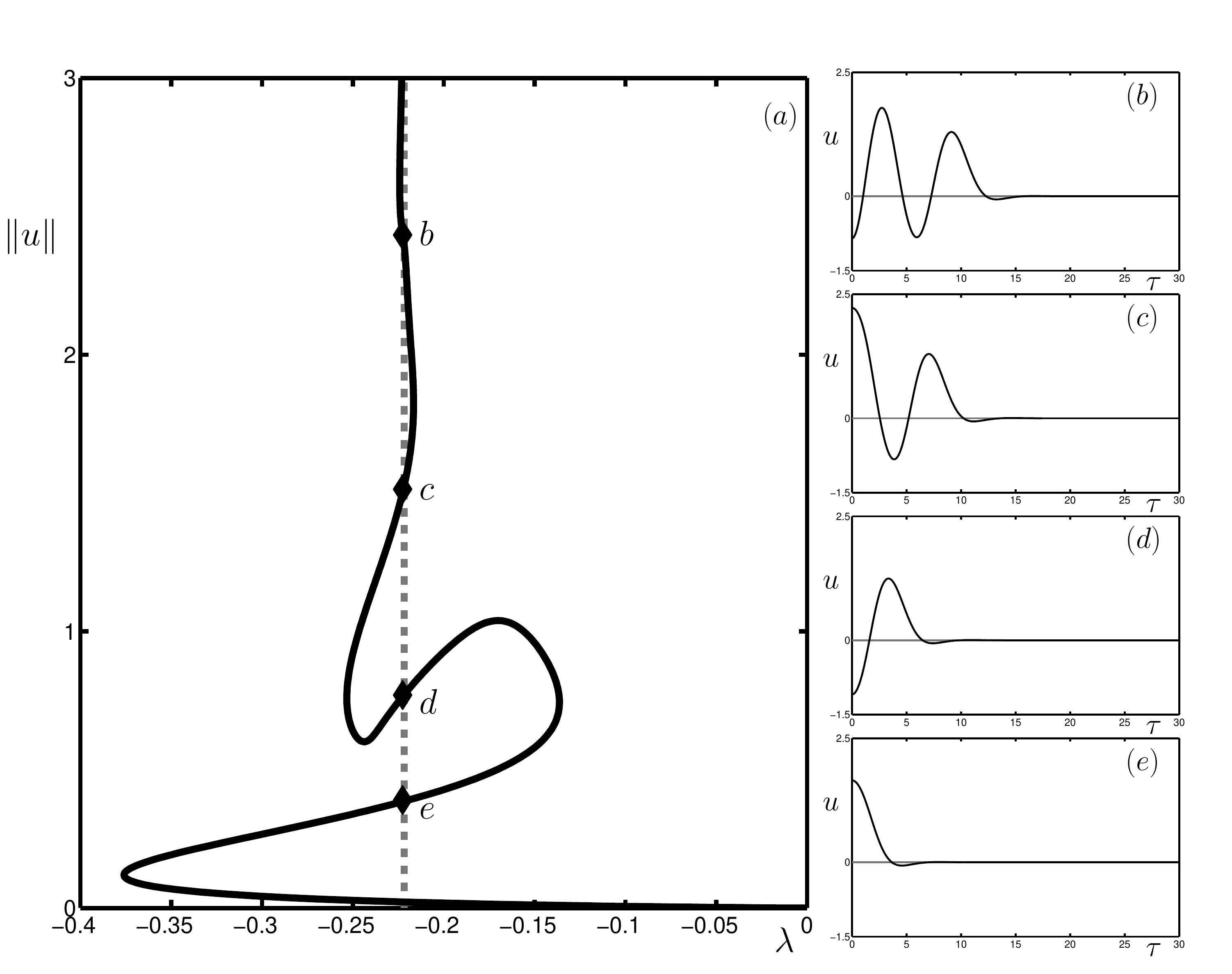}
 \caption{Bifurcation diagram in $\lambda$ of spots of \eqref{eq:sh_stationary} is shown in panel (a) for $\nu=1.6$ and $\eta=1$. Panels (b)-(e) contain plots of spots at $\lambda=-0.2218$ at different values of $\|u\|$.}
 \label{fig:spot16}
\end{figure}

Now, we compute spots for $\nu=1.6$ and $\eta=1$ and show the results in Figure \ref{fig:spot16}.  Panel (a) shows a bifurcation diagram in $\lambda$ where the branches are represented in terms of the Euclidean $\text{L}^2$-norm $\|u\|=\left(\int_0^\infty u(\tau)^2\tau d\tau \right)^{1/2}$. This is due to numerical difficulties of computing the hyperbolic $\text{L}^2$ norm presented by the $\sinh$ function in the integrand but, as indicated in \cite{faye-rankin-etal:12}, the Euclidean radial norm is a good solution measure. The panels (b)-(e) show solution profiles at different points on the bifurcation diagram for fixed value of $\lambda$ as indicated. As one moves up on the branch, rolls are added one by one to the tails of the spots: this corresponds to adding concentric rings that surround the spot. Panels (b)-(e) show that the amplitude at the core is still much larger than the amplitude of the concentric rings that are added. The bifurcation diagram in panel (a) presents similar characteristics of "snaking"-type diagram found in Euclidean geometry with what could be an analog of a Maxwell point at $\lambda_M=-0.2218$ \cite{burke-knobloch:06,burke-knobloch:07c,lloyd-sandstede:09,avitabile-lloyd-etal:10}. Note that this "hyperbolic snaking" structure was not reported in \cite{faye-rankin-etal:12} in the case of neural field equation.


\section{Horocyclic traveling waves}\label{sec5}

The basic material of this section was introduced in \cite{chossat-faugeras:09} in another context (analysis of a model equation for the detection of textures by the visual cortex of mammals). \\
Instead of being periodic on a lattice in $\D$, we may look for states which assume the form of hyperbolic plane waves (or {\em horocyclic waves}) as defined in \ref{subsub:linear}. Let us consider the horocycles which come in contact with $\partial\D$ at the point $b=1$ (see Figure \ref{fig:horocycle}). Horocyclic waves with base point $b$ would be constant along the horocycles and periodic along the "hyperbolic" coordinate. In other words, writing $z=n_s\cdot a_\tau\cdot O$ in horocyclic coordinates, the solution would satisfy the invariance properties
$$u(n_s\cdot a_\tau\cdot O)= u(a_\tau\cdot O) \text{ and } u(a_{\tau+T}\cdot O)=u(a_\tau\cdot O)$$
for some period $T$. Such solutions would be the counterpart of the "stripe" or "roll waves" solutions which occur for the Swift-Hohenberg equation and most equations which allow for pattern selection in the Euclidian plane. We shall see in this section that such solutions do indeed bifurcate in $\D$ and have many common features with the Euclidean case with some important differences. In particular, these waves are {\em traveling} at constant, generically non zero speed in $\D$, while they are steady in the Euclidean plane.

Assuming horocyclic invariance as defined above reduces the coordinates to the single variable $\tau$. The Laplace-Beltrami operator then reduces to (see \cite{helgason:00} where the normalization of the transformations $a_\tau$ is slightly different from ours)
\bqs
\Delta^0_\D = \frac{\partial^2}{\partial\tau^2} - \frac{\partial}{\partial\tau}
\eqs
Now the equation (\ref{eq:sh}) reads
\bqq
\label{eq:sh-horocyclic}
u_t=-(\alpha^2+\Delta^0_\D)^2u+\lambda u+\nu u^2-\eta u^3, \quad z\in\D,
\eqq
and $u$ is a function of $\tau$ and $t$ only.

As was mentionned in \ref{subsub:linear}, the linear stability analysis of the trivial state of (\ref{eq:sh}) with respect to the elementary eigenfunctions $e_{\rho,b}$ of the Laplace-Beltrami operator comes back to solving the "dispersion relation" (\ref{eq:dispersion relation}). These "hyperbolic plane waves" are precisely functions of the coordinate $\tau$ only and they are in exact correspondance with the elementary eigenfunctions of $\Delta^0_\D$, which have the form $e^{i k\tau}$, $k\in\C$. The correspondance is given by the relation $i k=i\rho+1/2$, which translate for the eigenvalues of $\Delta_\D$ to $-k^2-i k=-\rho^2-1/4$. In order for these "waves" to be periodic in $\tau$ we require $k\in\R$. Observe that the eigenvalues are still complex. The dispersion relation (\ref{eq:dispersion relation}) for perturbations $e^{\sigma t +i k\tau}$ now reads
\bqs
\sigma = -k^4+(2\alpha^2+1)k^2-\alpha^4+\lambda + 2k(\alpha^2-k^2)i
\eqs
With no loss of generality we now assume that $\alpha^2=1/2$. Then from the dispersion relation it follows that the most unstable modes occur at $k=\pm1$ with critical parameter value $\lambda=-3/4$, which corresponds to eigenvalues $\sigma=\mp i$. Therefore the bifurcation of periodic hyperbolic plane waves is a Hopf bifurcation to time periodic solutions. 

This is however a simple case of Hopf bifurcation thanks to the translational invariance of Equation (\ref{eq:sh-horocyclic}). Indeed let us look for solutions in the following form: $u(\tau,t)=v(\tau-(1+\omega) t)=v(y)$ with $\omega$ to be determined. This is a uniformly traveling wave at constant speed $1+\omega$. Then (\ref{eq:sh-horocyclic}) becomes
\bqq
\label{eq:sh-travelingwave}
0=-(\frac{1}{2}+\frac{d^2}{dy^2} - \frac{d}{dy})^2v - (1+\omega)\frac{dv}{dy} +\lambda v+\nu v^2-\eta v^3
\eqq
At $\lambda=-3/4$ and $\omega=0$ the linear part of this equation has an eigenvalue at $0$ with eigenvectors $\zeta_0=e^{i y}$ and $\bar\zeta_0$ and solving this bifurcation problem is a classical exercise. \\
We set $\lambda=-3/4+\mu$  and $v(y)=A\zeta_0+\bar A\bar\zeta_0 + \Psi(A,\bar A,\omega,\mu)$ where $\Psi$ is a map on the orthogonal complement of the kernel of the critical linear part spanned by $\zeta_0$,       $\bar\zeta_0$ (we consider here the usual $L^2$ inner product of $2\pi$ periodic functions). The Taylor expansion of $\Psi$ and the bifurcation equation for $A$, $\bar A$ and $\omega$ are then calculated using the Lyapunov-Schmidt decomposition method. 

The following equations are obtained at leading order:
\bqs
0 = (\mu - i\omega)A + \left(2\nu^2\left(\frac{78-4 i}{75}\right)-3\eta\right) A^2\bar A + A\cdot o(\|A\|^2+|\omega|+|\mu|)
\eqs
and complex conjugate. The bifurcated solutions are therefore given at leading order by the amplitudes
\bqs
\|A\|^2 = -\left(\frac{156}{75}\nu^2-3\eta\right)^{-1}\mu
\eqs 
and frequency $1-8/75\nu ^2\|A\|^2$. Observe that the bifurcation is supercritical if $\eta>0$ and $\nu$ is small enough, and that the speed of the traveling wave is lower that the critical one if $\nu\neq 0$ (it depends on higher order terms if $\nu=0$).

\begin{figure}[htp]
\centering
\subfigure[]{
\label{fig:tv}
\includegraphics[width=0.45\textwidth]{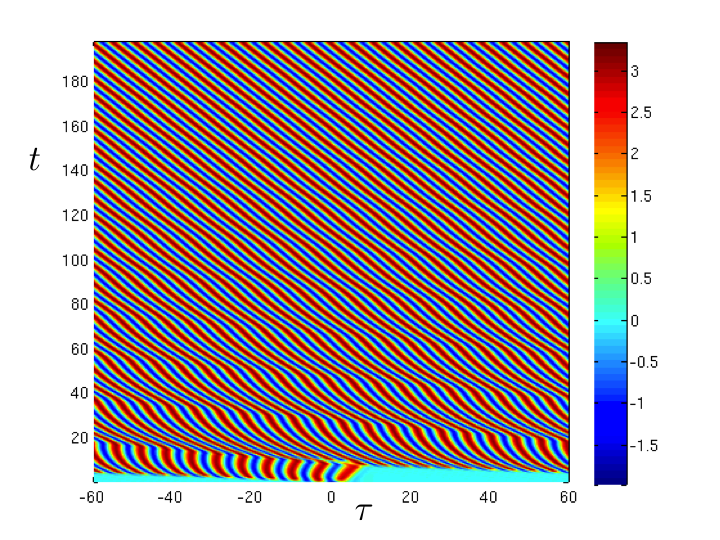}}
\subfigure[]{
\label{fig:htv}
\includegraphics[width=0.45\textwidth]{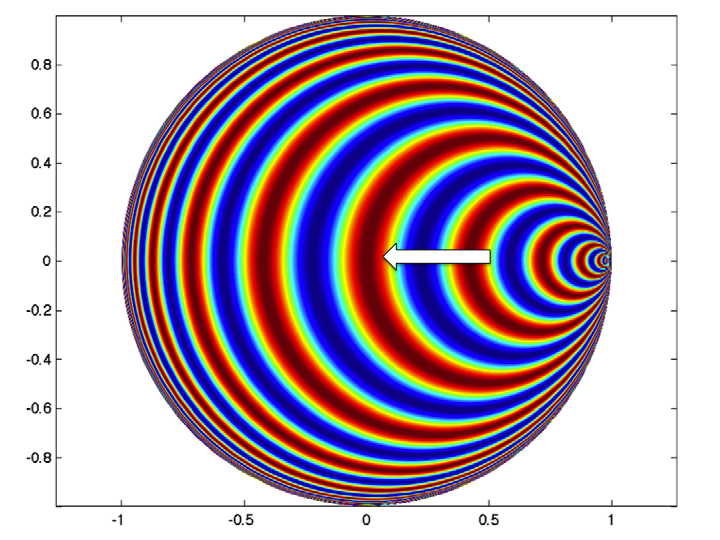}}
\caption{Left: Color plot of a solution of \eqref{eq:sh-horocyclic} in horocyclic coordinate $\tau$. Right: Snapshot of the same solution profile of \eqref{eq:sh-horocyclic} in the Poincar\'e disk at time $t=200$. Values of the parameters are: $\mu=-0.01$, $\nu=2.7$ and $\eta=1$.}
\label{fig:traveling_wave}
\end{figure}

In Figure \ref{fig:traveling_wave}, we show the example of a bifurcating horocyclic traveling wave of equation \eqref{eq:sh-horocyclic}. Numerical simulation of traveling wave is carried out using MATLAB. We take as initial condition a small localized solution around $\tau=0$ and run the simulation for $t\in[0,200]$. We use a semi-implicit finite differences method to compute the solution of \eqref{eq:sh-horocyclic}. Space and time discretizations are taken to be $\Delta t=0.05$ and $\Delta \tau =0.02$. The values of the parameters are taken to be $\mu=-0.01$, $\nu=2.7$ and $\eta=1$. It can be seen from Figure \ref{fig:tv} that the solution converges to a traveling wave solution in horocyclic coordinate. We also plot in Figure \ref{fig:htv} the corresponding solution profile at time $t=200$ in the Poincar\'e disk where the white arrow indicates the direction of propagation of the wave.

\section{Discussion and some open problems}\label{sec6} 

In this review, we have analyzed the bifurcation of i) spatially periodic solutions, ii) radially localized solutions and iii) traveling waves for the Swift-Hohenberg equation with quadratic-cubic nonlinearity defined on the two-dimensional hyperbolic space: $\D$ (Poincar\'e disk). 

For the bifurcation problem of periodic solutions, we have made use of the concept of periodic lattice in $\D$  to further reduce the problem to one on a compact Riemann surface $\D/\Gamma$ , where $\Gamma$ is a co-compact, torsion-free Fuchsian group. Following the method introduced by \cite{chossat-faye-etal:11} in a different context, we have applied techniques from equivariant bifurcation theory in the case of an octagonal periodic pattern, where we have been able to determine the generic bifurcation diagrams of each irreducible representation. As a case study, for the irreducible representation associated to the lowest nonzero eigenvalue, we completely computed the coefficients of the reduced equation on the center manifold and determined in parameter space $(\nu,\mu)$ the stability of each bifurcated branch of solutions.

To prove the existence of a branch of bump solutions bifurcating near onset we have followed the presentation in Faye \etal \cite{faye-rankin-etal:12}, which applied techniques developed by Scheel \cite{scheel:03} and Lloyd \& Sandstede \cite{lloyd-sandstede:09} . In the Poincar\'e disk, the analysis near the core manifold required the development of a detailed knowledge of the asymptotics of the associated Legendre functions, which are the counterpart of the Bessel functions in the Euclidean case. It turns out that the essential difference between the two geometries comes from the far field. At infinity, Bessel function $J_0( r )$ scales in term of the radial coordinate $r$ as $1/\sqrt{r}$ whereas the associated Legendre function $\mathcal{P}_{-\frac{1}{2}+i\frac{\sqrt{3}}{2}}(\cosh(\tau))$ scales in term of the polar geodesic coordinate $\tau$ as $e^{-\tau/2}$ for $\alpha = 1$ in equation \eqref{eq:sh}. Moreover, in the  Euclidean case and for the trivial state at infinity, there is a Turing instability bifurcation. However, in the hyperbolic case, the trivial state is always asymptotically stable at infinity and this simplifies the resulting matching problem. We have also used numerical continuation to track branches of radially symmetric solutions away from onset and found for high value of the parameter $\nu$ what could be the analog of a snaking diagram for the Poincar\'e disk (see Figure \ref{fig:spot16}(a)). 

Finally, we have seen that a branch of traveling waves invariant along horocycles and periodic in the transverse direction, bifurcates off the trivial state through of Hopf bifurcation. This striking result highlights once again, the inherent differences that exist between pattern formation in Euclidean and hyperbolic geometry.

This analysis is far from being complete and there still exist many open questions. We list below some problems that we consider to be important.

\begin{itemize}
\item We studied the bifurcation of H-planforms for the regular octagonal lattice. There are infinitely many types of periodic lattices in the Poincar\'e disc.  Of course, the methods presented in Section \ref{sec3} would easily be adapted to other lattices. However this raises the question of the observability of such patterns in a natural system or under direct simulation of the Swift-Hohenberg equation. Indeed, in addition to the high degeneracy of the bifurcation problem, there is a rigidity of the lattices in hyperbolic space which, as explained in Section \ref{sec3}, could make the observation of such patterns unlikely. We may imagine mechanisms such as spatial frequency locking which could overcome this difficulty, but this is a very challenging problem. 
\item In section \ref{sec4}, we presented an existence result of localized radially symmetric solution. Other localized radial states can be expected and the theory developed in \cite{lloyd-sandstede:09,mccalla:11,mccalla-sandstede:12} could be extended to yield the existence of rings and other types of localized radially symmetric solution, which were not investigated in this paper. We also think that the existence of horocyclic traveling waves might lead to the formation of target or source emitters. 
\item In the Euclidean case, localized hexagonal patches have been studied in the Swift-Hohenberg equation with quadratic-cubic nonlinearity \cite{lloyd-sandstede-etal:08}. It would be of interest to numerically investigate if one can find localized patches with $\mathbf{D}_8$-symmetry for the range of parameters $(\nu,\mu)$ in Figure \ref{fig:stab_oct} where the branch of solution with $\mathbf{D}_8$-symmetry is stable. Note that Faye \etal \cite{faye-rankin-etal:12} have numerically computed branches of solutions with $\mathbf{D}_8$-symmetry bifurcating off the branch of radially localized solutions for a neural field equation set on the Poincar\'e disk.
\end{itemize}

{\noindent \bf Acknowledgment:} GF is grateful to James Rankin and David Lloyd for their helpful comments on the functionality of AUTO.

\end{document}